\def\david#1{\marginpar{$\leftarrow$\fbox{D}}\footnote{$\Rightarrow$~{\sf\textcolor{blue}{#1 --David}}}}
\def\ren#1{\marginpar{$\leftarrow$\fbox{R}}\footnote{$\Rightarrow$~{\sf\textcolor{violet}{#1 --Renata}}}}
\definecolor{BrickRed}{rgb}{0.8,0.25,0.33}
\newtheorem{thm}{Theorem}[section]
\newtheorem{cor}[thm]{Corollary}
\newtheorem{prop}[thm]{Proposition}
\newtheorem{fact}[thm]{Fact}
\newtheorem{lem}[thm]{Lemma}
\newtheorem{lemma}[thm]{Lemma}
\newtheorem{Def}[thm]{Definition}
\newtheorem{obs}[thm]{Observation}
\newtheorem{rem}[thm]{Remark}
\crefname{thm}{Theorem}{Theorems}
\crefname{cla}{Claim}{Claims}
\crefname{lem}{Lemma}{Lemmas}
\crefname{fact}{Fact}{Facts}
\crefname{prop}{Proposition}{Propositions}
\newcommand{\veps}{\varepsilon}
\newcommand{\E}{\mathbb{E}}
\newcommand{\R}{\mathbb{R}}
\renewcommand{\Pr}{\mathds{P}r}
\newcommand{\eps}{\varepsilon}
\newcommand{\calA}{\mathcal{A}}
\newcommand{\calB}{\mathcal{B}}
\newcommand{\calD}{\mathcal{D}}
\newcommand{\calP}{\mathcal{P}}
\newcommand{\calR}{\mathcal{R}}
\newcommand{\poly}{\mathrm{poly}}
\newcommand{\Exp}{\mathrm{Exp}}
\newcommand{\Uni}{\mathrm{Uni}}
\newcommand{\Cov}{\mathrm{Cov}}
\newcommand{\constantC}{5}
\newcommand{\expM}{5}
\newcommand{\goodConst}{920}
\newcommand{\m}{2^{\constantC}k^{\expM}\alpha^{\expM}}
\newcommand{\by}{\mathbf{y}}
\let\vec\mathbf
\renewcommand{\vec}{\mathbf}
\title{Dimension-Free  Correlated Sampling for the Hypersimplex}
\date{}
\author[1]{Joseph (Seffi) Naor\thanks{Supported in part by ISF grant 3001/24 and United States - Israel
BSF grant 2022418.}}
\author[2]{Nitya Raju}
\author[3]{Abhishek Shetty \thanks{Supported in part by ARO award W911NF-21-1-0328, the Simons Foundation, NSF award DMS-2031883, a DARPA AIQ award, an NSF FODSI Postdoctoral Fellowship and an Apple AI/ML Fellowship.}}
\author[2]{Aravind Srinivasan\thanks{Supported in part by NSF award number CCF-1918749.}}
\author[2]{Renata Valieva\thanks{Supported in part by NSF award number CCF-1918749 and CNS-2317194.}}
\author[1]{David Wajc\thanks{Supported in part by the Taub Family Foundation ``Leader in Science and Technology'' fellowship, Grand Technion Energy Program (GTEP) and ISF grant 3200/24.}}
\affil[1]{Technion —-- Israel Institute of Technology}
\affil[2]{University of Maryland}
\affil[3]{Massachusetts Institute of Technology}
\date{\vspace{-1cm}}
\begin{document}

\maketitle

\pagenumbering{gobble}
% \vspace{-1cm}
\begin{abstract}
    Sampling from multiple distributions so as to maximize overlap has been studied by statisticians since the 1950s. 
    Since the 2000s, such correlated sampling from the probability simplex has been a powerful building block in disparate areas of theoretical computer science. 
    We study a generalization of this problem to sampling sets from given vectors in the hypersimplex, i.e., outputting sets of size (at most) some $k$ in $[n]$, while maximizing the sampled sets' overlap.
    Specifically, the expected difference between two output sets should be at most $\alpha$ times their input vectors' $\ell_1$ distance.
    A value of $\alpha=O(\log n)$ is known to be achievable, due to Chen et al.~(ICALP'17). 
    We improve this factor to $O(\log k)$, independent of the ambient dimension~$n$.
    Our algorithm satisfies other desirable properties, including (up to a $\log^* n$ factor) input-sparsity sampling time, logarithmic parallel depth and dynamic update time, as well as preservation of submodular objectives. 
    Anticipating broader use of correlated sampling algorithms for the hypersimplex, we present applications of our algorithm to online paging, offline approximation of metric multi-labeling and swift multi-scenario submodular welfare approximating reallocation.
\end{abstract}

{\smaller[0.5]
\tableofcontents
}

\newpage

\pagenumbering{arabic}
\section{Introduction}

Consider the following natural \emph{correlated sampling} problem: 
we wish to sample from numerous discrete distributions over $[n]$, or equivalently round numerous points in the $n$-dimensional probability simplex, $\Delta_{n}\triangleq \{\vec{x}\in [0,1]^n:~  \norm{\vec{x}}_1 = 1\}$, in a \emph{consistent manner}. In particular, for any two distributions with low total variation distance, i.e., points in $\Delta_{n}$ with low $\ell_1$ distance, $\norm{\vec{x}-\vec{y}}_1 \triangleq \sum_i |x_i-y_i|$, we wish the sets drawn according to these points to be similar in expectation.

This problem is well-understood \cite{kleinberg2002approximation,holenstein2007parallel,ge2011geometric,buchbinder2018simplex,angel2019pairwise,bavarian2020optimality}. Moreover, efficient correlated sampling algorithms have found numerous applications to disparate areas of computation, from approximation algorithms \cite{kleinberg2002approximation,ge2011geometric,buchbinder2018simplex,sharma2014multiway}, parallel repetition \cite{holenstein2007parallel,barak2008rounding}, replicability and privacy \cite{badih2021user,bun2023stability}, parallel sampling \cite{liu2022simple,anari2024parallel}, cryptography \cite{rivest2016symmetric}, locality sensitive hashing  \cite{broder1997resemblance,charikar2002similarity},  and more.

Such correlated sampling is a special case of what statisticians refer to as \emph{coordinated sampling}, and have been studying since the 1950s \cite{keyfitz1951sampling}. Here, the motivation of maximizing overlap is in minimizing overhead of initiation interviews for newly sampled interviewees or overhead of hiring new interviewers for new regions with sampled interviewees.
Such rationale was used and adopted, for example in the U.S.\ Bureau of the Census household surveys.
See discussion of such applications and nearly half a century of work on such questions in the survey by Ernst~\cite{ernst1999maximization}.

In this work we study the following correlated sampling problem, introduced by \cite{chen2017correlated}, generalizing from the probability simplex to the convex hull of the hypersimplex and the origin, i.e., the polytope whose extreme points are vectors in $\{0,1\}^n$ with at most $k$ ones,
$$\Delta_{n,k}\triangleq \{ \vec{x}\in [0,1]^n:~\norm{\vec{x}}_1 \leq k\}.$$
We wish to round points in this set while preserving marginals, respecting the cardinality constraint of $k$, and guaranteeing that the expected distance of the output sets for any two points $\vec{x},\vec{y}\in \Delta_{n,k}$ is bounded in terms of the points' $\ell_1$ distance. Formally, we study the following problem.

\begin{restatable}{Def}{correlatedsamplingdef}
    \label{def:correlated-sampling}
A distribution $\calD$ over deterministic algorithms $\calA:\Delta_{n,k}\to 2^{[n]}$ mapping points in $\Delta_{n,k}$ to subsets of $[n]$ of cardinality at most $k$ is an \textbf{$\boldsymbol{\alpha}$-stretch correlated sampling algorithm for $\boldsymbol{\Delta_{n,k}}$} if it satisfies the following properties for all~$\vec{x},\vec{y}\in \Delta_{n,k}$:
\begin{enumerate}[label=(P{{\arabic*}})]
    \item \label{prop:k-uniform} \textbf{(Cardinality)} 
    $|\calA(\vec{x})| \in \left\{\lfloor \|\vec{x}\|_1 \rfloor,\lceil \|\vec{x}\|_1 \rceil \right\}$ (hence $|\calA(\vec{x})| \leq k$) for all $\calA$ in the support~of~$\calD$.
    \item \label{prop:marginals} \textbf{(Marginals)} $\Pr_{\calA\sim \calD}[i\in \calA(\vec{x})]=x_i$ for all $i\in [n]$.
    \item \label{prop:distance}\textbf{(Stretch)} $\E_{\calA\sim\calD}[\;| \calA(\vec{x})\oplus  \calA(\vec{y})|\;]\leq \alpha \cdot \norm{\vec{x}-\vec{y}}_1$.
\end{enumerate}
\end{restatable}

To clarify Property \ref{prop:distance} and its use: after a common initialization step where we draw an algorithm $\calA\sim\calD$, 
we guarantee that when rounding any two vectors $\vec{x},\vec{y}\in \Delta_{n,k}$, the output's expected distance (over the random choice of $\calA$) is at most the vectors' $\ell_1$ distance times some $\alpha$.
Naturally, we wish to keep $\alpha$  as small as possible. 

For $k=1$ it is known that $\alpha$ must be at least two \cite{angel2019pairwise,bavarian2020optimality},\footnote{More precisely, some $\vec{x},\vec{y}$ must have stretch at least $2/(1+\delta)$, for $\delta\triangleq \frac{1}{2}\norm{\vec{x}-\vec{y}}_1$, though $\delta$ may tend to zero.} and this bound is attainable, provided $\norm{x}_1=k = 1$, i.e., for $\Delta_n\subsetneq \Delta_{n,1}$ \cite{kleinberg2002approximation,holenstein2007parallel,ge2011geometric,buchbinder2018simplex,angel2019pairwise,liu2022simple}. (As we show in \Cref{sec:correlated-sampling-appendix}, a constant bound of $\alpha=3$ for $\Delta_{n,1}$ more generally follows from the bound~for~$\Delta_n$.) 
For $k > 1$, in contrast, \cite{chen2017correlated} show that an $O(\log n)$ stretch is achievable, and use it for applications to metric multi-labelling, which we define later.

The stark difference between the case $k=1$ and larger $k>1$ should be apparent. 
The natural question, which we study, is whether this logarithmic in $n$ stretch is inherent to the problem.

\subsection{Our Contribution}

Our main contribution is a correlated sampling algorithm for the hypersimplex with $k>1$ with stretch logarithmic in $k$, but independent of the input dimension, $n$.

\begin{restatable}{thm}{correlatedsampling}\label{thm:correlated-sampling}
    There exists an $O(\log k)$-stretch correlated sampling algorithm for $\Delta_{n,k}$ for all $n$.
\end{restatable}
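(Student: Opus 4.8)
The plan is to reduce the hypersimplex case to the much better-understood simplex case ($k=1$), where constant stretch is known, via a \emph{dyadic/hierarchical decomposition} of the cardinality-$k$ constraint into $O(\log k)$ "scales," each of which is handled by an independent copy of a simplex-type correlated sampler, with the outputs of the scales combined so that marginals and cardinality are preserved. Concretely, I would first reduce to the case $\|\vec{x}\|_1=k$ exactly (pad the polytope with a dummy coordinate, or split the fractional part off as a separate unit-mass instance handled by the $\Delta_{n,1}$ sampler with constant stretch). Then, viewing $\vec{x}/k$ as a probability distribution scaled up, the goal is to sample a $k$-subset with marginals $x_i$; the natural approach is to write $\vec x$ as a sum of (roughly) $k$ simplex points and sample each independently — but that gives stretch $\Theta(k)$ in the worst case and no control on overlap, so the key idea must be to group coordinates by the dyadic scale of their mass and recurse: coordinates with $x_i\in(2^{-(j+1)},2^{-j}]$ form "level $j$," and within a level we can afford to sample greedily/blockwise since there are at most $2^{j+1}k$ such coordinates carrying total mass at most $k$, giving a per-level instance that is itself a smaller hypersimplex problem whose effective cardinality halves as $j$ grows. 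Summing stretch contributions over the $O(\log k)$ levels — each contributing $O(1)$ by a simplex-type sampler or a base case — yields the $O(\log k)$ bound.

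The implementation I expect the authors to use (and which I would pursue) is a \textbf{recursive halving}: given a target cardinality $k$, split $\vec{x}$ into two vectors summing to it, each of whose $\ell_1$ norm is $\le \lceil k/2\rceil$, sample each recursively, and merge. The merge step must (i) keep total cardinality in $\{\lfloor\|\vec x\|_1\rfloor,\lceil\|\vec x\|_1\rceil\}$ — achievable if the split is along coordinates or by a careful rounding of $x_i/2$ into the two halves that is itself done by correlated sampling so that the split is consistent across $\vec x$ and $\vec y$ — and (ii) combine the two sampled sets into a set of the right size. The recursion depth is $O(\log k)$, and if each level multiplies the accumulated stretch by a constant (because the split operation and the merge are each $O(1)$-stretch and the two recursive calls' stretches add), we would instead get stretch $2^{O(\log k)}=\mathrm{poly}(k)$, which is too weak. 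So the real subtlety — and this is where I expect the main technical effort — is to arrange the recursion so stretch \emph{adds across levels rather than multiplies}: the $O(\log k)$ subproblems at all levels should be run on a common shared randomness in such a way that $|\calA(\vec x)\oplus\calA(\vec y)|$ is bounded by a \emph{sum} over levels of per-level symmetric differences, each of expectation $O(\|\vec x-\vec y\|_1)$, rather than a product. This typically requires that the "routing" of mass down the recursion tree be a deterministic (or jointly-consistent) function of the input so that two nearby inputs send nearly-identical mass to corresponding subproblems, i.e., the decomposition map $\vec x\mapsto(\vec x^{(1)},\dots,\vec x^{(L)})$ must itself be $O(1)$-Lipschitz in $\ell_1$ into the product space, and the reassembly must be $1$-Lipschitz.

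The main obstacle, then, is designing the decomposition and reassembly so that all three properties survive simultaneously: Property~\ref{prop:k-uniform} forces the output sizes at every node of the recursion to concentrate on two consecutive integers, which fights against naive independent rounding of halves (whose sizes would fluctuate by $\pm\Theta(\sqrt k)$); Property~\ref{prop:marginals} forces the per-level marginals to sum correctly, constraining how fractional mass is apportioned; and Property~\ref{prop:distance} with only logarithmic loss forces the per-level maps to be Lipschitz and the stretch to telescope additively. I would expect the fix to be a \emph{structured, level-indexed partition of coordinates by mass scale} (the dyadic grouping above) combined with a "water-filling"/greedy deterministic assignment of whole units of mass to levels, leaving only one residual fractional unit per level to be sampled by the base-case $\Delta_{n,1}$ algorithm; this keeps each level's cardinality deterministic up to $\pm1$, makes the decomposition map coordinatewise-Lipschitz, and lets the $O(\log k)$ base-case stretches of $O(1)$ each simply add. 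Verifying the cardinality bookkeeping at the merge (that the sum of the per-level sizes lands in $\{\lfloor\|\vec x\|_1\rfloor,\lceil\|\vec x\|_1\rceil\}$) and the additive-stretch accounting are the two calculations I would expect to be the most delicate, but not conceptually hard once the decomposition is fixed.
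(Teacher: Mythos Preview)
Your proposal has a genuine gap, and the paper takes a completely different route.

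\textbf{The gap.} Your dyadic decomposition by mass scale does not give $O(\log k)$ levels: the number of scales $j$ with $x_i\in(2^{-(j+1)},2^{-j}]$ is governed by $\log(1/x_{\min})$, which is unbounded in $n$ and $k$. Moreover, each level is still a hypersimplex instance (many coordinates, cardinality possibly as large as $k$), not a simplex one, so there is no ``$O(1)$ per level by a simplex-type sampler'' available. Your alternative recursive-halving idea is essentially the tree-based pivotal sampling that already underlies the $O(\log n)$ algorithm of \cite{chen2017correlated}; you correctly identify that the stretch multiplies along the recursion unless the routing is Lipschitz and the contributions telescope, but ``water-filling whole units of mass'' is not a workable fix: deterministically assigning whole units destroys the marginals (Property~\ref{prop:marginals}), and leaving ``one residual fractional unit per level'' again requires as many levels as there are mass scales, not $O(\log k)$. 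In short, neither branch of your plan actually yields an $O(\log k)$-level decomposition with $O(1)$ stretch per level.

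\textbf{What the paper does instead.} The paper does not decompose the cardinality; it reduces the \emph{dimension}. Given any $\alpha$-stretch algorithm $\calA$ for $\Delta_{n,k}$ (initially the $O(\log n)$ one), it randomly hashes coordinates into $m=\mathrm{poly}(k,\alpha)$ buckets, uses the $k=1$ sampler to compress each bucket's small items to a single representative, and maps the result bijectively into $[m^3]$. On this compressed vector it runs the $O(\log n)$ algorithm, now paying only $O(\log m^3)=O(\log k+\log\alpha)$. The random thresholds $\sigma,\tau$ for ``small item'' and ``heavy bucket'' ensure that two nearby inputs $\vec x,\vec y=\vec x+\eps\vec e_i$ use different branches (fallback to $\calA$ vs.\ compressed path) only with probability $O(\eps/m)$, so the fallback's $\alpha$ stretch and the ``tragic'' $O(k/\eps)$ blowup each contribute only $O(1)$ to the overall stretch. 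This yields the recurrence $\alpha_{i+1}\le C(\log k+\log\alpha_i)$, which converges to $O(\log k)$ after $O(\log^*n)$ iterations. The key idea you are missing is that the logarithm in the known $O(\log n)$ bound is a logarithm of the \emph{ambient dimension}, so shrinking the dimension (via hashing plus per-bucket simplex sampling) is what buys the improvement --- not decomposing the cardinality $k$.
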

Our algorithm is extremely fast, and can be implemented  in $O(nnz \cdot \log^* n)$  time, for $nnz\leq n$ the number of nonzeros and $\log^*n$ the (extremely slowly-growing) iterated logarithm function (see \Cref{def:log*}). Thus, the algorithm's running time is also \emph{very nearly} linear in the input sparsity, and dimension independent. Our algorithm can further be implemented in $O(\log n\cdot \log^*n)$ parallel depth and $O(\log (nnz)\cdot \log^*n)$ update time in dynamic settings.

Property \ref{prop:marginals} immediately implies preservation of linear objectives. We also prove that our algorithm preserves \emph{submodular} objectives. (See \Cref{sec:submodularity} for some background, but for now suffice it to say that these functions capture the ubiquitous phenomenon of diminishing returns.) In particular, we prove that our correlated sampling algorithm's output has at least as high a submodular value as independent rounding with marginals $\vec{x}$ (the so-called \emph{multilinear extension}), while being oblivious to the objective submodular function, and also satisfying the cardinality constraint and desired low stretch guarantees. In \Cref{sec:partition} we extend this property to correlated sampling for partition matroids , where the ground set is partitioned into parts, with a cardinality constraint on each part.

Due to the widespread uses of correlated sampling for the probability simplex, we anticipate numerous subsequent applications of our correlated sampling algorithm for the hypersimplex.
To illustrate this, we provide several simple applications of our algorithm, to online paging, offline metric multi-labeling, and dynamic submodular welfare maximizing reallocation.
As these applications only illustrate the variety of (potential) applications of our correlated sampling for the hypersimplex, we defer their description to \Cref{sec:applications}.
However, for our last application, we note that our dimension-free stretch hints at further applications: combining it with a simple rounding algorithm for the case that $k$ is large, we can obtain \emph{constant-stretch}, while approximately preserving marginals and submodular objectives. We believe this pattern will benefit future applications of our algorithm.

\section{Preliminaries}

\paragraph{Notation.} Throughout, we use boldface letters $\vec{x},\vec{y}$, etc.~to indicate vectors. As is standard, we use $\norm{\vec{x}}_1\triangleq \sum_i |x_i|$ to denote the $\ell_1$ norm. 
In our running times, we use the standard iterated (binary) logarithm notation. (Throughout the paper, all logarithms are to the base two.)
\begin{Def}\label{def:log*}
For any $n$, the $i$-th iterated and the iterated logarithm are defined as follows.
\begin{align*}
\log^{(i)}n&\triangleq \underbrace{\log \log \dots \log }_{i \textrm{ times}} n,\\
\log^*n &\triangleq  \min\{i \mid \log^{(i)}n\leq 1\}.
\end{align*}
\end{Def}

\paragraph{Sublinear sample time.} Our algorithms' sampling time is sublinear in the input's dimension. Specifically, ignoring a modest $\log^*n$ factor, their running time is independent of $n$ and only linear in the sparsity of the input vector $\vec{x}$, i.e., $nnz(\vec{x})\triangleq |\{i \mid x_i\neq 0\}|$. (As $\vec{x}$ will be clear from context, we simply write $nnz$.)
For this, we rely on a sparse representation that allows us to avoid reading the entire vector. 
In particular, we assume a natural representation allowing us to iterate through the non-zero coordinates of $\vec{x}$ in increasing order in time $nnz(\vec{x})$.
Another operation we require is $O(1)$-time computation of standard bit-wise operations, XOR, AND, and NOT (see \Cref{sec:icalp17}).

\paragraph{Interface.} Correlated sampling algorithms must use a common random seed for sampling, otherwise we may get different output sets in two different invocations of the sampling step for the same vector $\vec{x}$, violating Property \ref{prop:distance} for vectors $\vec{x}=\vec{y}$.
Fittingly, correlated sampling algorithms use a common \emph{initialization} step, where we draw $\calA\sim\calD$, setting this randomness, and \emph{sampling} steps, where we compute $\calA(\vec{x})$.
Note that in all our algorithms and the previous algorithms we make use of, the randomness of initialization can implicitly be ``spread'' over later sampling steps, drawing each random bit the first time it is needed. 
Therefore, all algorithms presented in this paper have constant initialization time, and so we avoid mentioning this, and only mention the sample time.

\vspace{-0.2cm}
\paragraph{Previous Correlated Sampling Algorithms.}
For $\Delta_n$, \cite{kleinberg2002approximation,holenstein2007parallel,ge2011geometric,buchbinder2018simplex,angel2019pairwise,liu2022simple} give $2$-stretch correlated sampling algorithms. For $k>1$, \cite{chen2017correlated} provided an $O(\log n)$-stretch correlated sampling algorithm for $\Delta_{n,k}$, though a full proof of the latter is unavailable online.\footnote{We thank Roy Schwartz for sharing a preprint of the full version of \cite{chen2017correlated} with us.} Moreover, since we are interested in input-sparsity sample time and fast parallel and dynamic implementations of these, in \Cref{sec:correlated-sampling-appendix} we provide full proofs of these algorithms, and implementations, with guarantees as in the following propositions. We also provide similar guarantees for $O(1)$-stretch correlated sampling in $\Delta_{n,1}$ more generally, by a simple reduction.
\begin{prop}\label{prop:k=1}
    There exists a $2$-stretch $O(nnz)$ sample time correlated sampling algorithm~for~$\Delta_{n}$. The algorithm can be implemented in $O(\log n)$ parallel depth and $O(n)$ work and using $O(\log (nnz)) = O(\log n)$~update~time in dynamic settings.
\end{prop}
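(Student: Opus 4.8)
The plan is to use the classical ``random thresholds'' construction (Kleinberg--Tardos / Holenstein) and to establish the running-time claims afterwards. In the initialization step the shared seed fixes an infinite i.i.d.\ sequence $(i_1,t_1),(i_2,t_2),\dots$, with each $i_\ell$ uniform in $[n]$ and each $t_\ell$ uniform in $[0,1]$, materialized lazily. On a query $\vec x\in\Delta_n$, the (now deterministic) algorithm $\calA$ returns the singleton $\{i_{\ell^*}\}$, where $\ell^*=\min\{\ell:\ t_\ell\le x_{i_\ell}\}$. As $\|\vec x\|_1=1$, a singleton output gives \ref{prop:k-uniform}. For \ref{prop:marginals}, observe that the first pair landing in the acceptance region $A_{\vec x}\triangleq\{(i,t):t\le x_i\}$ is uniform on $A_{\vec x}$, which has measure $\sum_i x_i$ in the product of counting measure on $[n]$ and Lebesgue measure on $[0,1]$; hence $\Pr[i\in\calA(\vec x)]=x_i/\sum_j x_j=x_i$.

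The heart of the argument is the stretch \ref{prop:distance}. Fix $\vec x,\vec y\in\Delta_n$ and set $\delta\triangleq\tfrac12\|\vec x-\vec y\|_1$, so $\sum_i\min(x_i,y_i)=1-\delta$ and $\sum_i\max(x_i,y_i)=1+\delta$. The key point is that if the first pair of the sequence landing in $A_{\vec x}\cup A_{\vec y}$ actually lands in $A_{\vec x}\cap A_{\vec y}$, then that same pair is simultaneously the first in $A_{\vec x}$ and the first in $A_{\vec y}$, so $\calA(\vec x)=\calA(\vec y)$. Since, conditioned on the first such pair's location being in $A_{\vec x}\cup A_{\vec y}$, it is uniform there (independently of the pair's index), we get
\[
\Pr\bigl[\calA(\vec x)=\calA(\vec y)\bigr]\ \ge\ \frac{\sum_i\min(x_i,y_i)}{\sum_i\max(x_i,y_i)}\ =\ \frac{1-\delta}{1+\delta}.
\]
As $|\calA(\vec x)\oplus\calA(\vec y)|$ is $0$ when the singletons agree and $2$ otherwise, $\E\bigl[|\calA(\vec x)\oplus\calA(\vec y)|\bigr]=2\Pr[\calA(\vec x)\ne\calA(\vec y)]\le \tfrac{4\delta}{1+\delta}\le 2\|\vec x-\vec y\|_1$, i.e.\ stretch $\alpha=2$ (in fact matching the optimal $2/(1+\delta)$ of the footnote).

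It remains to realize this efficiently. Scanning the sequence until the first accepted pair takes $\Theta(n)$ pairs in expectation, so to obtain $O(nnz)$ sample time I would restrict attention to pairs whose index lies in $\mathrm{supp}(\vec x)$ --- all others are irrelevant for $\vec x$ --- and, using the sparse-iteration primitive and $O(1)$-time bit operations of \Cref{sec:icalp17}, compute for each of the $nnz$ support coordinates the (exponentially distributed) time of its first accepting pair, revealing only as many random bits of the thresholds as are needed to break ties against the running minimum; the support coordinate attaining the smallest such time is $\calA(\vec x)$, computed in $O(nnz)$ time. For the parallel bound, give every coordinate a key equal to this first-accept time ($+\infty$ outside the support) and take a parallel minimum: $O(\log n)$ depth and $O(n)$ work. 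For the dynamic bound, keep these keys for the nonzero coordinates in a balanced tournament tree supporting minimum, so that a single-entry update of $\vec x$ changes one leaf and propagates along one root-to-leaf path in $O(\log(nnz))$ time. The conceptual core is the measure-ratio computation for \ref{prop:distance}; the main obstacle is the implementation --- realizing the random thresholds with bounded precision and organizing the computation so that the $\arg\min$ truly runs in $O(nnz)$ time (and with the stated work/depth in the parallel and dynamic variants) --- which I would discharge via the bit-level conventions set up in \Cref{sec:icalp17}.
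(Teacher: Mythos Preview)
Your stretch analysis via rejection sampling is correct and is one of the two classical routes the paper itself cites; the paper instead instantiates and analyzes exponential clocks (output $\arg\min_i R_i/x_i$ for fixed $R_i\sim\Exp(1)$) and proves the $2$-stretch by a direct computation on infinitesimal perturbations $\vec{y}=\vec{x}+\eps\,\vec{e}_1-\eps\,\vec{e}_2$. Your measure-ratio computation is arguably cleaner for \ref{prop:distance} and even recovers the sharp $\tfrac{4\delta}{1+\delta}$ factor directly.

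The genuine gap is in the running-time claims, and this is exactly why the paper chooses exponential clocks. There, the key for coordinate $i$ is $R_i/x_i$: a single stored random scalar $R_i$ and one division, so $O(nnz)$ query time, $O(\log n)$-depth parallel minimum, and $O(\log nnz)$ heap update are all immediate. In your rejection-sampling scheme the analogous key is $\min\{\ell:i_\ell=i,\ t_\ell\le x_i\}$, which (a) is geometric, not exponential, and (b) crucially is \emph{not} a function of $x_i$ and one stored scalar---it depends on an a priori unbounded prefix of the threshold subsequence for index $i$, so evaluating it costs $\Theta(1/x_i)$ threshold inspections in expectation, and recomputing it after a change to $x_i$ is not $O(1)$. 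Your deferral to ``the bit-level conventions of \Cref{sec:icalp17}'' does not help: that section is about $O(1)$ LCA in complete binary trees for the tree-rounding algorithm and says nothing about simulating rejection sampling. To get the stated time bounds you either need to switch to exponential clocks outright, or to explicitly exhibit a coupling that turns your first-accept times into per-coordinate keys of the form $g(R_i,x_i)$ for a fixed $R_i$ computable in $O(1)$; the latter effectively re-derives exponential clocks.
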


 \begin{prop}\label{prop:ICALP17algo}
    There exists a correlated sampling algorithm for $\Delta_{n,k}$ for all $n,k\geq 1$ with sample time $O(nnz)$ and stretch $2\lceil\log n\rceil \leq 2\log n + 2$. The algorithm can be implemented in $O(\log n)$ parallel depth and $O(n)$ work and using $O(\log n)$~update~time in dynamic settings.
\end{prop}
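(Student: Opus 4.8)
The plan is to prove \Cref{prop:ICALP17algo} via a binary-tree (laminar) dependent-rounding scheme, in the spirit of \cite{chen2017correlated}. Pad $n$ up to a power of two (this changes $\lceil\log n\rceil$ by at most one) and fix a complete binary tree $\T$ whose $n$ leaves are identified with the coordinates $[n]$. For a node $v$ let $L(v)\subseteq[n]$ be the leaves below $v$ and put $s_v(\vec x)\triangleq\sum_{i\in L(v)}x_i$, so $s_{\mathrm{root}}(\vec x)=\norm{\vec x}_1\le k$; write $\{t\}\triangleq t-\lfloor t\rfloor$. The shared randomness $\calA\sim\calD$ is a family of independent uniforms on $[0,1)$, one per node of $\T$. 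The algorithm recurses top-down, pushing an integer \emph{budget} $B_v$ down the tree while maintaining the invariant $B_v\in\{\lfloor s_v(\vec x)\rfloor,\lceil s_v(\vec x)\rceil\}$. At the root it sets $B_{\mathrm{root}}\triangleq\lfloor\norm{\vec x}_1\rfloor+\mathds{1}[\theta_{\mathrm{root}}<\{\norm{\vec x}_1\}]$. At an internal node $v$ with children $u,w$ it splits the incoming budget as $B_u+B_w=B_v$ with $B_u$ a valid randomized rounding of $s_u$ and $B_w$ one of $s_w$; since $B_v$ is already a valid rounding of $s_u+s_w$, a short case analysis (one step of Gandhi--Khuller--Parthasarathy--Srinivasan dependent rounding) shows such a split exists, with the only genuine coin---which of $u,w$ absorbs a leftover unit---resolved by thresholding $\theta_v$ at the appropriate bias. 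At a leaf $i$ the algorithm outputs $i$ iff the budget reaching it equals $1$.

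Properties \ref{prop:k-uniform} and \ref{prop:marginals} are then immediate: the output size is $B_{\mathrm{root}}\in\{\lfloor\norm{\vec x}_1\rfloor,\lceil\norm{\vec x}_1\rceil\}\subseteq\{0,1,\dots,k\}$, and an induction down $\T$---with each split chosen precisely so that $\E[B_u\mid B_v]$ averages to $s_u$---gives $\Pr[B_v=\lceil s_v(\vec x)\rceil]=\{s_v(\vec x)\}$ for every $v$, hence $\Pr[i\in\calA(\vec x)]=s_i(\vec x)=x_i$ at every leaf $i$. For the running time: precompute the prefix sums of $\vec x$ once in $O(nnz)$ time, so every $s_v(\vec x)$ is an $O(1)$-time difference of two prefix sums; since any node with zero coordinate-sum receives budget $0$ and outputs nothing, the recursion is fully determined by its restriction to the LCA-compressed tree spanned by the $nnz$ nonzero leaves---$O(nnz)$ nodes, each handled in $O(1)$ time using bit-wise operations on coordinate indices to locate least common ancestors---giving $O(nnz)$ sample time with depth still $O(\log n)$. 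This yields $O(\log n)$ parallel depth and $O(n)$ work, and, maintaining the $s_v(\vec x)$ in a balanced tree, $O(\log n)$ update time under single-coordinate changes.

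The heart of the matter is the stretch bound \ref{prop:distance}, which we would obtain by running $\vec x$ and $\vec y$ through the recursion with the \emph{same} uniforms $(\theta_v)$. Two coupling observations do the work: (i) if the two runs enter a node $v$ with equal incoming budgets and $\vec x|_{L(v)}=\vec y|_{L(v)}$, they output the same subset of $L(v)$; and (ii) if they enter $v$ with $\vec x|_{L(v)}=\vec y|_{L(v)}$ but incoming budgets differing by exactly one, the extra unit descends a single root-to-leaf path inside $v$'s subtree, so the outputs on $L(v)$ differ in exactly one coordinate. Because the map $t\mapsto\lfloor t\rfloor+\mathds{1}[\theta<\{t\}]$ is monotone in $t$ for fixed $\theta$, the splitting rule can also be taken monotone, so the budget discrepancy delivered to each subtree stays at most one as long as the relevant coordinate-sums differ by less than one. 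Now transform $\vec x$ into $\vec y$ one coordinate at a time: changing coordinate $j$ by $\delta_j$ perturbs $s_v(\vec x)$ only at the at most $\lceil\log n\rceil+1$ ancestors of leaf $j$, so by (i)--(ii) it flips the membership of at most one leaf per level (the one inheriting the perturbed off-path budget), plus possibly leaf $j$ itself, and---using monotonicity---each such flip occurs with probability $O(\delta_j)$. Summing over the levels and then over the coordinates, with $\sum_j\delta_j=\norm{\vec x-\vec y}_1$, yields $\E[\,|\calA(\vec x)\oplus\calA(\vec y)|\,]\le 2\lceil\log n\rceil\,\norm{\vec x-\vec y}_1$.

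I expect the main obstacle to be exactly that last argument: controlling how a small perturbation of a single coordinate propagates through the $O(\log n)$ integer-budget roundings on its root-to-leaf path. The subtle case is a perturbation that straddles an integer value of some $s_v(\vec x)$, which a naive union bound would charge probability close to one rather than $O(\delta_j)$ for a flipped rounding; the monotone coupling above resolves this---the floor of $s_v(\vec x)$ and the rounding bit move in lock-step, so the budget actually handed to each subtree changes by at most one, and only with probability $O(\delta_j)$---but verifying that this coupling \emph{composes} cleanly across all $O(\log n)$ levels, so the per-level contributions add rather than compound, is where the bulk of the technical care goes, and is what pins down the $2\lceil\log n\rceil$ factor.
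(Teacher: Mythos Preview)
Your approach is a legitimate alternative to the paper's but differs in an important structural way. The paper implements \cite{chen2017correlated} as a \emph{bottom-up} pivotal sampling: leaves pass values up, each internal node runs a \textsc{Resolve} step that fixes one child's coordinate to $0$ or $1$ and pushes the surviving fractional value upward, and the stretch analysis (\Cref{lem:tree-distance}) tracks the $\ell_1$ distance $\|\vec{x}^t-\vec{y}^t\|_1$ between the two partially-rounded vectors level by level, proving via an exhaustive case split on the \textsc{Resolve} branches that each level adds at most $2\varepsilon$ to this distance. Your scheme is instead \emph{top-down}: you round $\|\vec{x}\|_1$ at the root and push integer budgets down, bounding stretch by controlling the probability that an off-path budget $B_{w_\ell}$ flips.

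Both viewpoints are standard for laminar dependent rounding and yield Properties~\ref{prop:k-uniform}--\ref{prop:marginals} and the same running-time bounds; your LCA-compression argument for $O(nnz)$ sample time matches what the paper does. What the paper's bottom-up version additionally buys is the negative-association property (\Cref{prop:NA-pivotal}), which is used later for submodular dominance; your top-down scheme does not obviously inherit it, though that is not part of the present proposition.

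The place where your writeup is thinner than the paper's is exactly where you flag it. You assert that each off-path flip occurs with probability $O(\delta_j)$ and then conclude the sharp bound $2\lceil\log n\rceil\,\|\vec{x}-\vec{y}\|_1$; the former does not by itself yield the latter constant. The paper obtains the factor $2$ by the explicit case analysis of \textsc{Resolve} (Tables~\ref{table:prob-case1}--\ref{table:delta-case22}), computing for each of the two sum-regimes and each pairing of rounding outcomes that the expected per-level increment is exactly $\varepsilon\cdot\tfrac{2\beta}{\alpha+\beta+\varepsilon}$ or $\varepsilon\cdot\tfrac{2(1-\beta)}{2-\alpha-\beta}$, both at most $2\varepsilon$. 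To match this in your framework you would need to prove $\Pr[B_{w_\ell}(\vec{x})\neq B_{w_\ell}(\vec{y})]\le 2\delta_j$ for every level $\ell$, which requires a comparably careful computation---monotonicity of the split alone guarantees the budget discrepancy stays in $\{0,1\}$, but does not bound the probability of it being $1$. Your acknowledgment that the compositions across levels ``add rather than compound'' is the right worry; the paper sidesteps it by working with the additive telescoping quantity $\|\vec{x}^t-\vec{y}^t\|_1-\|\vec{x}^{t-1}-\vec{y}^{t-1}\|_1$ rather than with probabilities of events that depend on all levels above.
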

Our input-sparsity implementation of \cite{chen2017correlated} relies on constant-time lowest-common ancestor function, using which we only process relevant nodes in the postorder traversal of \cite{chen2017correlated}. 

\vspace{-0.2cm}
\paragraph{Bounding Stretch.}
Throughout, in our analysis we fix two vectors $\vec{x},\vec{y}\in \Delta_{n,k}$, and prove that we satisfy all three desired properties, \ref{prop:k-uniform}-\ref{prop:distance}. 
The following well-known observation \cite{chen2017correlated} will simplify our discussion. As usual, $\vec{e}_i$ is the $i^{th}$ basis vector in the standard basis for $\mathbb{R}^n$. 

\begin{obs}\label{obs:triangle-ineq}
    By triangle inequality, a correlated rounding algorithm has stretch $\alpha$ if and only if this stretch of $\alpha$ holds for any two points $\vec{x},\;\vec{y}=\vec{x}+\veps\cdot \vec{e}_i\in \Delta_{n,k}$, for infinitesimally small $\varepsilon>0$.
\end{obs}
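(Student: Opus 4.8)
The plan is to prove the nontrivial (``if'') direction by a path argument; the ``only if'' direction is immediate, since the claimed equivalence merely restricts attention to a subfamily of all pairs $\vec{x},\vec{y}\in\Delta_{n,k}$. First I would connect two arbitrary points $\vec{x},\vec{y}\in\Delta_{n,k}$ by an axis-parallel, coordinatewise-monotone path that remains inside $\Delta_{n,k}$: move, one coordinate at a time, every coordinate $i$ with $y_i<x_i$ from $x_i$ down to $y_i$, and then move, one coordinate at a time, every coordinate $i$ with $y_i>x_i$ from $x_i$ up to $y_i$. The first phase only decreases $\norm{\vec{z}}_1=\sum_i z_i$, so it stays in $\Delta_{n,k}$; in the second phase $\norm{\vec{z}}_1$ increases monotonically and terminates at $\norm{\vec{y}}_1\le k$, hence never exceeds $k$; and every intermediate coordinate lies between $x_i$ and $y_i$, hence in $[0,1]$. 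Crucially, the total $\ell_1$-length of this path is exactly $\sum_i|x_i-y_i|=\norm{\vec{x}-\vec{y}}_1$.

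Next, fix any deterministic $\calA$ in the support of $\calD$. Since $|A\oplus B|$ is the Hamming distance between the indicator vectors of $A$ and $B$, it satisfies the triangle inequality, so for every subdivision $\vec{x}=\vec{z}_0,\vec{z}_1,\dots,\vec{z}_m=\vec{y}$ of the path we have, pointwise in $\calA$,
\[
|\calA(\vec{x})\oplus\calA(\vec{y})|\;\le\;\sum_{j=0}^{m-1}|\calA(\vec{z}_j)\oplus\calA(\vec{z}_{j+1})|.
\]
Taking expectations over $\calA\sim\calD$ and refining the subdivision so that every step is a single-coordinate move $\vec{z}_{j+1}=\vec{z}_j\pm\veps\,\vec{e}_i$ with $\veps\to 0$, the hypothesis (applied in either direction, using the symmetry of $\oplus$) bounds each $\E_{\calA}\!\left[|\calA(\vec{z}_j)\oplus\calA(\vec{z}_{j+1})|\right]$ by $\alpha\norm{\vec{z}_j-\vec{z}_{j+1}}_1$ in the limit; summing over $j$ and using that the path has $\ell_1$-length $\norm{\vec{x}-\vec{y}}_1$ gives Property \ref{prop:distance} for $\vec{x},\vec{y}$.

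The one point I would treat carefully is the passage from the infinitesimal hypothesis to a bound over an entire monotone single-coordinate segment. Because $\calA$ takes values in the finite set $2^{[n]}$, along such a segment it is a step function, so ``sum of infinitesimal steps'' must be interpreted as the total variation of $\calA$ along the segment; a routine bounded-variation argument (monotone convergence, as the mesh shrinks, of the expected sum of per-step symmetric differences to the expected total variation, which in turn upper-bounds $\E_{\calA}[\,|\calA(\vec{u})\oplus\calA(\vec{v})|\,]$ for the segment's endpoints $\vec{u},\vec{v}$) shows this expected total variation is at most $\alpha$ times the segment's length. Everything else is just bookkeeping with the triangle inequality — which is precisely why this can be stated as an observation.
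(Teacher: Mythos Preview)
The paper does not actually prove this observation; it merely states it as well-known and attributes it to \cite{chen2017correlated}. Your proof supplies exactly the standard argument the paper has in mind: connect $\vec{x}$ to $\vec{y}$ by an axis-parallel path of total $\ell_1$-length $\norm{\vec{x}-\vec{y}}_1$ inside $\Delta_{n,k}$, and telescope via the triangle inequality for $|\cdot\oplus\cdot|$. Your care in ordering the coordinate moves (decreases before increases) to stay within $\Delta_{n,k}$ is precisely the point that needs checking, and you handle it correctly.

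One remark on presentation: the paper's phrase ``infinitesimally small $\varepsilon$'' is informal, and in practice (see how the observation is invoked later, e.g.\ in the proofs of \Cref{lem:good-stretch} and \Cref{lem:tree-distance}) it is used simply to mean ``it suffices to verify the stretch bound for single-coordinate perturbations, and one may moreover take $\varepsilon$ as small as one likes.'' Under that reading, your subdivision argument with a fixed small mesh already suffices, and the total-variation paragraph, while correct, is more than is needed for an observation-level statement. If you want to keep the argument at the level of rigor the paper intends, you can simply note that each axis-parallel segment can itself be subdivided into pieces of arbitrarily small length, apply the hypothesis to each piece, and sum; the monotone-convergence/total-variation machinery is only required if one insists on reading ``infinitesimal'' as a genuine limit hypothesis rather than ``for all small enough $\varepsilon$.''
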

\section{Composed Correlated Sampling}\label{sec:correlated-sampling}

This section is dedicated to providing, analyzing and implementing our main correlated sampling algorithm for the hypersimplex, whose main guarantee we restate here.

\correlatedsampling*
We present our algorithm in \Cref{sec:the-algo}, showing that it satisfies Properties \ref{prop:k-uniform} and \ref{prop:marginals}, and analyze its stretch in \Cref{sec:analysis}. We discuss running time (in various computational models) in \Cref{sec:runtime}. We then prove the algorithm preservers submodular objectives in \Cref{sec:submodularity}.

\subsection{The Composed Correlated Sampling Algorithm}\label{sec:the-algo}

Our composes several correlated sampling algorithms and additional hashing and random thresholds. We precede its formal description and pseudocode (\Cref{alg:rounding-compression-init,alg:rounding-compression,alg:rounding-compression-project}) with a high-level layered description and motivation, interlaced with elements of its analysis.

\subsubsection{Overview}\label{sec:overview} 
\Cref{prop:ICALP17algo} gives a correlated sampling algorithm for $\Delta_{n,k}$ with stretch $O(\log n)$.
Thus, to obtain a smaller stretch, we aim to  ``project'' in a sense the points onto a smaller dimension $d\ll n$, so as to obtain a stretch $O(\log d) \ll \Theta(\log n)$.
Assuming the projection is in some sense invertible, this allows us to ``lift'' the same improved stretch back to the original (larger) dimension $n$.

\medskip 
\noindent\textbf{A first attempt.}
Consider a vector $\vec{x}$. To try and achieve the desired invertible projection, we hash coordinates into $d\ll n$ buckets. 
In the pseudocode, $d=m^3$, where these parameters will be clarified below.
Assuming (for now) that the sum of $x$-values hashed to the same bucket is not greater than one, we give this bucket's coordinate a weight equal to the sum of the $x$-values hashed to it. Note that the obtained vector $\hat{\vec{x}}\in \mathbb{R}^d$ has the same $\ell_1$ norm as $\vec{x}$. We now apply a correlated sampling algorithm to the vector $\hat{\vec{x}}$ and obtain a subset $S$ of $[d]$ of size $\|\vec{x}\|_1$ (up to floor or ceiling). 
We then obtain a subset of $[n]$ with the right marginals, for each bucket $b$ whose coordinate in $[d]$ is output in $S$, by picking a single coordinate $i$ hashed to bucket $b$ with probability $x_i/\sum_{j: h(j)=b}x_j$, using a simple correlated sampling algorithm for the case $k=1$ (\Cref{prop:k=1}).

\medskip
\noindent\textbf{A second attempt.}
Unfortunately, the assumption that the sum of $x$-values hashed to each bucket is at most one might be violated.
However, by simple probabilistic arguments, one can show that taking sufficiently large $d\geq \poly(k)$, then \emph{small-valued} coordinates (whose individual $x$-values are at most, say, $1/10$) hashed to a bucket have sum exceeding one with probability at most $1/\poly(d)$. 
We then hash \emph{large-valued} coordinates, of which there are $O(k)\ll d$, into their own buckets, which are unlikely to collide with buckets to which other coordinates were hashed. 
Assuming no collisions or heavy buckets, the above is our desired invertible projection.
In case of $1/\poly(d)$-probability problematic events (heavy bucket or collisions), we fall back on some $\alpha$-stretch algorithm. For now we think of $\alpha=O(\log n)$, though generally we can (and do) plug in other values of $\alpha$. By linearity of expectation, this fall back algorithm contributes $\alpha/\poly(d)$ to our algorithm's stretch, which is $O(1)$, provided $d\gg \alpha$. (This paragraph motivates our choice of $d=\poly(m)=\poly(k,\alpha)$.)

\medskip\noindent\textbf{The full algorithm.}
Finally, our definition of small/large coordinates and heavy buckets can result in using two different correlated sampling algorithms (in different lines) to determine the output for different inputs $\vec{x}$ and $\vec{y}$ which only differ marginally in a single coordinate, i.e., $\vec{y}=\vec{x}+\eps\cdot \vec{e}_i$. 
For example, this coordinate can be small for $\vec{x}$, but large for $\vec{y}$, or the change in this coordinate may result in one bucket being heavy when rounding $\vec{y}$, but not $\vec{x}$.
This is extremely problematic, since this can result in our outputs for $\vec{x}$ and $\vec{y}$ being computed using different correlated sampling algorithms, and so these outputs are potentially uncorrelated (different) output sets, despite $\vec{x}$ and $\vec{y}$ being quite close.
This results in stretch possibly as high as $O(k/\eps)$.
To overcome this issue, we randomize our thresholds for items being large or small, and for buckets being heavy.
As we show, this results in the extremely problematic event outlined above happening with probability at most $\eps/k$. Therefore, the contribution to the stretch here is again only $O(k/\eps)\cdot \eps/k = O(1)$, and the final stretch is only $O(1)+O(\log d) = O(\log k + \log \alpha)$.
Our final desired stretch of $O(\log k)$ follows by performing this construction recursively, allowing us to plug in increasingly smaller values of $\alpha$.

\vspace{-0.1cm}
\subsubsection{Algorithm Description}
Our recursive algorithm relies on the correlated sampling algorithms of \Cref{prop:k=1,prop:ICALP17algo}, and another correlated-sampling algorithm $\calA$ for $\Delta_{n,k}$ with stretch $\alpha$. This algorithm $\calA$ can be the algorithm from \Cref{prop:ICALP17algo}, in which case $\alpha = 2\lceil \log n \rceil$, or (as we use in general) another correlated-sampling algorithm---with a smaller $\alpha$---obtained by recursing this construction. 
(We later use this algorithm recursively so that in each level the stretch from $\calA$ decreases and the overall stretch achieved is reduced.)  At a high level, we rely on the algorithm for $k=1$ with hashing to compress and map the input point $\vec{x}\in \Delta_{n,k}$ to a point in a lower-dimensional polytope $\hat{\vec{x}}\in \Delta_{m^3,k}$ for some $m^3 \leq n$. 
If we successfully lowered the dimension of the point to round in an invertible manner, we then round this point using the \cite{chen2017correlated} correlated-sampling algorithm from \Cref{prop:ICALP17algo} for this lower dimension $m^3$.
When the compression does not succeed at some level of the recursion (due to hash collisions---this happens with small probability), we simply fall back to using algorithm $\calA$ on the original $n$-dimensional input vector.

Our basic algorithm's pseudocode is given in \Cref{alg:rounding-compression-init,alg:rounding-compression,alg:rounding-compression-project}.
On initialization (\Cref{alg:rounding-compression-init}) we initialize the randomness used later (in \Cref{alg:rounding-compression}) by the other correlated-sampling algorithms.
In addition, we randomly hash the coordinates into some large $m\triangleq \m$ many buckets.
We then hash the coordinates and buckets into $[m^3]$, and draw two random thresholds $\sigma,\tau\sim \Uni[0,1/10]$.
In \Cref{alg:rounding-compression}, we call a  coordinate $i\in [n]$ \emph{small} if $0<x_i\leq 1/10+\sigma$, else we call it \emph{large}.
If any two buckets or large coordinates are hashed to the same value in $[m^3]$ (a hash collision), or if the small coordinates in any one bucket have $x$-value summing to more than $1-
\tau$ (an unusually heavy bucket), we simply run Algorithm $\calA$ on the input vector $\vec{x}$. 
Otherwise, we use  \Cref{alg:rounding-compression-project} to ``compress'' the small coordinates in each bucket (whose sum is less than one), using the $k=1$ algorithm of \Cref{prop:k=1}: we obtain a single coordinate $i$ called the \emph{representative}, and have it ``absorb'' all the $x$-value of the small coordinates in its bucket---by taking on their $x$-values' sum and nullifying all small coordinates' $x$-values in its bucket (by resetting those to zero). 
Finally, using the (bijective) mapping from the remaining non-zero coordinates to $[m^3]$ (with representatives using their buckets' hash), we run the \cite{chen2017correlated} algorithm on a vector $\hat{\vec{x}}$ of (smaller) dimension $m^3$. We then invert the bijective mapping from the remaining non-zero coordinates in $\vec{x}$ to $[m^3]$ to convert the output subset of $[m^3]$ to a subset of $[n]$, and return this~set.
\vspace{-0.05cm}
\begin{algorithm}[H]
	\caption{Initialization \Comment{Used before all runs of \Cref{alg:rounding-compression}}}
    \label{alg:rounding-compression-init}
 \begin{algorithmic}[1]
    \State Set $m \gets \m$. 
    \color{black}
    \State Initialize Algorithm $\calA_1$ of \Cref{prop:k=1} for $\Delta_{n,1}$. \label{line:init-k=1}\Comment{Init randomness}
    \State Initialize Algorithm $\calA$ with a stretch of $\alpha$ for $\Delta_{n,k}$. \Comment{Init randomness}
    \State Initialize Algorithm $\calA_k$ of \Cref{prop:ICALP17algo} for $\Delta_{m^3,k}$. \label{line:init-k=k}\Comment{Init randomness}
    \For{$b=1,\dots,m$}\label{line:R-start}
    \State $\calB_b\gets \emptyset$.
    \State Draw $R_b \sim \Uni[m^3]$. \Comment{Target hash for small items of bucket $b$}
    \EndFor
    \For{$i=1,
    \dots,n$}
    \State Draw $C_i \sim \Uni[m^3]$.\Comment{Target hash for coordinate $i$} 
    \State Add $i$ to $\calB_b$ for $b\sim \Uni[m]$. \Comment{Partition coordinates among buckets u.a.r.}
    \EndFor
    \State \label{line:R-end}Draw $\sigma \sim\Uni[0,1/10]$ and $\tau \sim \Uni[0,1/10]$.
    \end{algorithmic}	
\end{algorithm}	
\vspace{-0.5cm}
\begin{algorithm}[H]
	\caption{Hypersimplex Correlated Sampling}
	\label{alg:rounding-compression}
 \begin{algorithmic}[1]
    \State $S\gets \{i \mid 0 < x_i \leq 1/10+\sigma \}$. \Comment{Non-zero small items' coordinates}
    \State $L\gets \{i \mid x_i  > 1/10+\sigma \}$. \Comment{Large items' coordinates}
    \If{$\exists h\in [m^3]$ such that $|\{i\in L \mid C_i=h\}\cup \{b \mid R_b =h\}|>1$}\label{line:collision} \Comment{Hashing collision}
    \State \textbf{Return} $\calA(\vec{x})$.\label{line:bad-output-collision}
    
    \EndIf
    \If{$\exists b\in [m]$ such that $\sum_{i\in S\cap \calB_b} x_i > 1 - \tau$}\label{line:heavy-bucket} \Comment{Bucket too heavy}
    
    \State \textbf{Return} $\calA(\vec{x})$.\label{line:bad-output-heavy-load}
    
    \EndIf 
    \State $(\hat{\vec{x}},\vec{REP})\gets \textsc{compress}(\vec{x},S)$. 
    \State $T\gets \calA_k(\hat{\vec{x}})$. \label{line:good-case} 
    \State \textbf{Return} $\{i \mid C_i\in T\}\cup \bigcup_{b: R_b\in T} REP_b $. \label{line:good-output}
    \end{algorithmic}
\end{algorithm}	

\vspace{-0.5cm}
\begin{algorithm}[H]
	\caption{\textsc{compress}$(\vec{x},S)$} 
	\label{alg:rounding-compression-project}
 \begin{algorithmic}[1]
    \For{$b=1,\dots,m$}\label{line:start} 
    \State $S_b\gets S\cap \calB_b$.  \Comment{Small items in bucket $b$}
    \If{$\sum_{i\in S_b}x_i > 0$}
    \State $REP_b \gets \calA_1\left(\frac{x_i}{\sum_{i\in S_b} x_i}\cdot \mathds{1}[i\in S_b]\right)$. 
    \label{line:KT}\label{line:compression-start} \Comment{Single representative for small items in $b$}
    \For{$i\in S_b$}
    \State 
    $x_i \gets \mathds{1}[i\in REP_b]\cdot \sum_{i\in S_b} x_i$. \Comment{Compress small items}
    \label{line:compression-end}
    \EndFor
    \EndIf
    \EndFor 
    \State $\hat{\vec{x}}\gets \vec{0}^{[m^3]}$.
    \For{$i=1,\dots,n$ with $x_i>0$} \label{line:mapping-start} \Comment{Map non-zero coordinates to smaller vector}
    \If{$i\in REP_b$ for (unique) $b$}
    \State $\hat{x}_{R_b}\gets x_i$.
    \Else
    \State $\hat{x}_{C_i} \gets x_i$.
    \EndIf 
    \EndFor
    \State \textbf{Return} $(\hat{\vec{x}},\vec{REP})$.
    \end{algorithmic}
\end{algorithm}	
\subsubsection{First Observations}

We note that \Cref{alg:rounding-compression}, which we denote henceforth by $ALG$, outputs a set of cardinality at most $k$, i.e., it satisfies Property \ref{prop:k-uniform}.
\begin{obs}\label{obs:cardinality}
    $|ALG(\vec{x})| \in \left\{\lfloor \|\vec{x}\|_1 \rfloor,\lceil \|\vec{x}\|_1 \rceil \right\}$ for all $\vec{x}\in \Delta_{n,k}$.    
\end{obs}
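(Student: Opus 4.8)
The plan is to verify Property \ref{prop:k-uniform} by case analysis on which branch of \Cref{alg:rounding-compression} produces the output, observing in each case that the returned set has cardinality in $\{\lfloor \|\vec{x}\|_1\rfloor, \lceil \|\vec{x}\|_1 \rceil\}$, and in particular at most $k$ since $\|\vec{x}\|_1 \le k$ for $\vec{x}\in\Delta_{n,k}$.

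First, if the algorithm returns in \Cref{line:bad-output-collision} or \Cref{line:bad-output-heavy-load} (the collision or heavy-bucket fall-back cases), the output is $\calA(\vec{x})$, where $\calA$ is a correlated sampling algorithm for $\Delta_{n,k}$ with stretch $\alpha$ — either the algorithm of \Cref{prop:ICALP17algo} or a recursively constructed one. In either case, by the inductive hypothesis (or directly, in the base case), $\calA$ satisfies \ref{prop:k-uniform}, so $|\calA(\vec{x})| \in \{\lfloor\|\vec{x}\|_1\rfloor, \lceil\|\vec{x}\|_1\rceil\}$, as desired. So it remains to handle the main branch, where the output is returned in \Cref{line:good-output}.

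In the main branch, the key step is to show that $\textsc{compress}$ preserves the $\ell_1$ norm, i.e., $\|\hat{\vec{x}}\|_1 = \|\vec{x}\|_1$, and that the final map from $T\subseteq[m^3]$ back to a subset of $[n]$ is a bijection onto the returned set. For the norm: within each bucket $b$, \Cref{line:compression-start}–\Cref{line:compression-end} of \Cref{alg:rounding-compression-project} replace the small coordinates' values by a single value equal to their sum concentrated on the representative $REP_b$ (and $\calA_1$ on line \ref{line:KT} is applied to a probability vector, so it returns exactly one coordinate by \ref{prop:k-uniform} for $\Delta_{n,1}$), leaving large coordinates untouched; hence the total mass is unchanged. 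The mapping loop then places each surviving nonzero coordinate into a distinct entry of $\hat{\vec{x}}$: representatives go to their bucket's slot $R_b$, other (necessarily large) surviving coordinates go to slot $C_i$, and since we are in the non-collision branch of \Cref{line:collision}, all these target slots $\{R_b\}\cup\{C_i : i\in L\}$ are pairwise distinct, so no mass is overwritten and $\|\hat{\vec{x}}\|_1 = \|\vec{x}\|_1 \le k$. Thus $\hat{\vec{x}}\in\Delta_{m^3,k}$, and $T = \calA_k(\hat{\vec{x}})$ has $|T|\in\{\lfloor\|\vec{x}\|_1\rfloor, \lceil\|\vec{x}\|_1\rceil\}$ by \ref{prop:k-uniform} applied to the algorithm of \Cref{prop:ICALP17algo}. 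Finally, the returned set $\{i \mid C_i\in T\}\cup\bigcup_{b: R_b\in T} REP_b$ is the image of $T$ under the inverse of the (injective, by no-collision) map $i\mapsto C_i$ or $i\mapsto R_b$, so it has the same cardinality as $T$, completing the proof.

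The main obstacle, and the only place requiring care, is the bijectivity bookkeeping: one must argue that the coordinates surviving after $\textsc{compress}$ are exactly the large coordinates together with one representative per nonempty-and-nonzero-sum bucket of small coordinates, that these map to pairwise-distinct slots (using precisely the negation of the \Cref{line:collision} condition), and that the final inversion step recovers a set of the same size — in particular that no index of $[n]$ is counted twice across the union. Since this is a structural, not a probabilistic, claim, it is a routine but slightly tedious unwinding of the pseudocode; everything else follows immediately from \ref{prop:k-uniform} for the subroutines $\calA_1$, $\calA_k$, and $\calA$.
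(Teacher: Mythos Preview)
Your proof is correct and follows the same approach as the paper: case analysis on the return line, invoking Property~\ref{prop:k-uniform} of $\calA$ in the fall-back cases and of $\calA_k$ in the main branch after observing $\|\hat{\vec{x}}\|_1=\|\vec{x}\|_1$ and $|ALG(\vec{x})|=|\calA_k(\hat{\vec{x}})|$. The paper's proof is terser, simply asserting these two equalities, whereas you spell out the norm-preservation and bijectivity bookkeeping that justify them.
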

\begin{proof}
Follows by Property \ref{prop:k-uniform} of $\calA$, whose output $ALG$ may return in either Line \ref{line:bad-output-collision} or \ref{line:bad-output-heavy-load}, and by the same property of $\calA_k$, since if $ALG$ returns in Line \ref{line:good-output}, then since $\sum_i \hat{{x}}_i=\sum_i {{x}}_i$ and $|ALG(\vec{x})|=|\calA_k(\hat{\vec{x}})|$ in this case, we have $|ALG(\vec{x})|=|\calA_k(\hat{\vec{x}})| \in \left\{\lfloor \|\hat{\vec{x}}\|_1 \rfloor,\lceil \|\hat{\vec{x}}\|_1 \rceil \right\} = \left\{\lfloor \|\vec{x}\|_1 \rfloor,\lceil \|\vec{x}\|_1 \rceil \right\}$.
\end{proof}

Similarly, we note that \Cref{alg:rounding-compression}'s output contains each element $i\in [n]$ with probability $x_i$, i.e., it satisfies Property \ref{prop:marginals}. In fact, we prove a slightly stronger property.
Specifically, for $\calR$ the randomness used by \Cref{alg:rounding-compression-init} in Lines \ref{line:R-start}--\ref{line:R-end}, we show the following.

\begin{obs}\label{obs:conditional-marginlas}
    For any realization $r$ of $\calR$, element $i\in [n]$ and vector $\vec{x}$ 
    $$\Pr[i\in ALG(\vec{x})\mid \calR = r] = x_i.$$
    Consequently, by total probability, $\Pr[i\in ALG(\vec{x})] = x_i.$
\end{obs}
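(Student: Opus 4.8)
The plan is to prove \Cref{obs:conditional-marginlas} by conditioning on the realization $r$ of $\calR$ (the partition of coordinates into buckets $\calB_b$, the hashes $C_i$ and $R_b$, and the thresholds $\sigma,\tau$), and then arguing about the remaining randomness, which is exactly the randomness of the three correlated sampling subroutines $\calA$, $\calA_1$, $\calA_k$, initialized independently. Once $r$ is fixed, the sets $S$ and $L$ are deterministic, and whether we are in the ``bad'' case (a collision in Line~\ref{line:collision} or a heavy bucket in Line~\ref{line:heavy-bucket}) or the ``good'' case is determined. I would split into these two cases.

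In the bad case, $ALG(\vec{x}) = \calA(\vec{x})$, and since $\calA$ satisfies Property~\ref{prop:marginals} (by the hypothesis in Line~3 of \Cref{alg:rounding-compression-init} that it is an $\alpha$-stretch correlated sampling algorithm, hence satisfies \ref{prop:marginals}), we get $\Pr[i \in ALG(\vec{x}) \mid \calR = r] = \Pr[i \in \calA(\vec{x})] = x_i$, where the randomness is over $\calA$'s own seed, independent of $\calR$. In the good case, I would fix the bucket $b$ with $i \in \calB_b$. If $i \in S_b$ (i.e., $i$ is a nonzero small coordinate, and its bucket's small-sum is positive), then $i \in ALG(\vec{x})$ iff both $R_b \in T$ and $i \in REP_b$; since $R_b$ is fixed by $r$, this requires $i = REP_b$ and $R_b \in T = \calA_k(\hat{\vec{x}})$. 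Here $\hat{\vec{x}}$ is itself random, depending only on $\calA_1$'s seed (which determines each $REP_b$), and $T$ depends only on $\calA_k$'s seed, and these are independent. I would compute
\[
\Pr[i \in ALG(\vec{x}) \mid \calR = r] = \Pr[i = REP_b] \cdot \Pr[R_b \in \calA_k(\hat{\vec{x}}) \mid i = REP_b].
\]
By Property~\ref{prop:marginals} of $\calA_1$ applied to the input $\frac{x_j}{\sum_{j\in S_b} x_j}\mathds{1}[j\in S_b]$, the first factor is $x_i / \sum_{j \in S_b} x_j$. For the second factor, conditioned on $i = REP_b$, we have $\hat{x}_{R_b} = \sum_{j \in S_b} x_j$ deterministically (all other coordinates of $\hat{\vec{x}}$ are unaffected by $\calA_1$'s seed in a way that changes $\hat x_{R_b}$ — more carefully, one should note that in the good case each bucket contributes its small-sum to a distinct coordinate $R_b$, and no large coordinate or other bucket collides with it, so $\hat x_{R_b}$ is exactly $\sum_{j\in S_b}x_j$ regardless), and since $\calA_k$ is independent of $\calA_1$, Property~\ref{prop:marginals} of $\calA_k$ gives $\Pr[R_b \in \calA_k(\hat{\vec{x}})] = \hat{x}_{R_b} = \sum_{j \in S_b} x_j$. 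Multiplying, we get exactly $x_i$. The cases where $i \in L$ (large, so $\hat{x}_{C_i} = x_i$ always, and $i \in ALG(\vec{x})$ iff $C_i \in T$, giving $x_i$ by Property~\ref{prop:marginals} of $\calA_k$) and where $x_i = 0$ (then $i$ is neither in $S$ nor $L$, is never a representative, never mapped, and never output, so the probability is $0 = x_i$) are handled analogously. The final claim $\Pr[i \in ALG(\vec{x})] = x_i$ follows by taking expectation over $\calR$.

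The main subtlety I expect is the bookkeeping in the good case: making precise that, conditioned on $i = REP_b$ and on being in the good case, the coordinate $\hat{x}_{R_b}$ of the compressed vector equals $\sum_{j \in S_b} x_j$ \emph{independently of} $\calA_1$'s remaining randomness (which fixes the other representatives but not the value landing in slot $R_b$), so that one can cleanly apply independence of $\calA_k$ from $\calA_1$. The absence of collisions (guaranteed in the good case) is what ensures slot $R_b$ receives contributions only from bucket $b$, and the heavy-bucket check ensures $\sum_{j\in S_b} x_j \le 1-\tau < 1$ so that $\hat{\vec{x}} \in \Delta_{m^3,k}$ and the subroutine $\calA_k$ is being applied to a valid input. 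These two facts should be invoked explicitly. Everything else is a routine application of linearity and the marginals property of the three black-box subroutines.
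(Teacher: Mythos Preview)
Your proposal is correct and follows essentially the same approach as the paper: condition on $\calR=r$, split into the ``bad'' branch (where $ALG(\vec{x})=\calA(\vec{x})$ and Property~\ref{prop:marginals} of $\calA$ applies directly) and the ``good'' branch, and in the latter handle large and small coordinates separately using Property~\ref{prop:marginals} of $\calA_k$ and $\calA_1$ together with the independence of their seeds. One minor simplification: in the good case $\hat{\vec{x}}$ is in fact \emph{deterministic} given $r$ (its nonzero entries sit at the distinct positions $\{R_b\}$ and $\{C_i: i\in L\}$ with values $\sum_{j\in S_b}x_j$ and $x_i$ respectively, none of which depend on which representative $\calA_1$ picks), so you do not need to condition on $i=REP_b$ when computing $\Pr[R_b\in\calA_k(\hat{\vec{x}})]$; but your more cautious formulation is not wrong, and your explicit treatment of the $x_i=0$ case and of $\hat{\vec{x}}\in\Delta_{m^3,k}$ is if anything more careful than the paper's.
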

\begin{proof} 
    If $\calR=r$ implies that $ALG$ outputs a set in \Cref{line:bad-output-collision} or \ref{line:bad-output-heavy-load}, then $[ALG(\vec{x})\mid \calR=r]=\calA(\vec{x})$, and so by Property \ref{prop:marginals} of $\calA$, 
    $$\Pr[i\in ALG(\vec{x})\mid \calR=r] = \Pr[i\in \calA(\vec{x})] = x_i.$$ 
    If $\calR=r$ implies that $ALG$ outputs a set in \Cref{line:good-output} and that $i \notin S$, then  $\hat{x}_{C_i}=x_{i}$ and by Property \ref{prop:marginals} of $\calA_k$, $$\Pr[i\in ALG(\vec{x}) \mid \calR=r] = \Pr[C_i\in \calA_k(\hat{\vec{x}})  \mid \calR=r] = \hat{x}_{c_i}=x_i.$$
    Finally, if $\calR=r$ implies that $ALG$ outputs a set in \Cref{line:good-output} and that $i \in S$, then $i \in ALG(\vec x)$ iff both $i\in REP_b$ and $R_b\in \calA_k(\hat{\vec{x}})$ hold. 
   By Property \ref{prop:marginals} of $\calA_k$ and $\calA_1$, and since $\hat{\vec{x}}_{R_b}=\sum_{i\in S_b} x_i$ independently of $REP_b$ (and generally $\hat{\vec{x}}$ is independent of $R_b$), we have that
   \begin{align*}
       \Pr[i\in ALG(\vec{x}) ] & = \Pr[R_b \in \calA_k(\hat{\vec x})\mid \calR = r] \cdot \Pr[i \in REP_b \mid \calR = r] 
       = \hat{\vec{x}}_{R_b} \cdot \frac{x_i}{\sum_{i\in S_b}x_i} 
       = x_i. \qedhere
   \end{align*}
\end{proof}

    The crux of the analysis is in bounding the algorithm's stretch, i.e., quantifying Property \ref{prop:distance}.

    \subsection{Analyzing the Algorithm's Stretch}\label{sec:analysis}
    
    \paragraph{Analysis overview.} When rounding two vectors $\vec{x},\vec{y}\in \Delta_{n,k}$, we have three possible scenarios, named based on their intuitively increasingly poor conditional stretch.
    \begin{Def}
        The pair of runs of $ALG$ on vectors $\vec{x},\vec{y}\in \Delta_{n,k}$ (with the same randomness) is 
        \begin{enumerate}
            \item \textbf{\emph{Good} (G)} if $ALG$ outputs a set for both $\vec{x}$ and $\vec{y}$ in \Cref{line:good-output},
            \item \textbf{\emph{Bad} (B)} if $ALG$ outputs a set for both $\vec{x}$ and $\vec{y}$ in either \Cref{line:bad-output-collision} or \Cref{line:bad-output-heavy-load}, and 
            \item \textbf{\emph{Tragic} (T)} if $ALG$ outputs a set for $\vec{x}$ in \Cref{line:bad-output-collision} or \Cref{line:bad-output-heavy-load} and for $\vec{y}$ in \Cref{line:good-output}, or vice~versa.
        \end{enumerate}
    \end{Def}
    The following observation and subsequent lemma motivate the above terminology.
    \begin{obs}
        For vectors $\vec{x},\vec{y}\in \Delta_{n,k}$, the sets $X\triangleq ALG(\vec{x})$ and $Y\triangleq ALG(\vec{y})$ satisfy
        \begin{align*}
            \E[|X\oplus Y| \;\mid T] & \leq 2k, \\
            \E[|X\oplus Y| \;\mid B] & \leq \alpha\cdot \|\vec{x}-\vec{y}\|_1.
        \end{align*}
    \end{obs}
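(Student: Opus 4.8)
The plan is to bound the two conditional expectations separately, each following from properties established or assumed earlier.

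First, the bound $\E[|X\oplus Y|\mid B]\leq \alpha\cdot\|\vec{x}-\vec{y}\|_1$. Conditioned on the event $B$, both runs return in \Cref{line:bad-output-collision} or \Cref{line:bad-output-heavy-load}, so $X=\calA(\vec{x})$ and $Y=\calA(\vec{y})$ where $\calA$ is the fallback $\alpha$-stretch correlated sampling algorithm. One subtlety is that the event $B$ is determined entirely by the randomness $\calR$ (the hash buckets, the target hashes $C_i,R_b$, and thresholds $\sigma,\tau$), which is independent of the internal randomness that $\calA$ uses to produce its output. So conditioning on $B$ does not bias $\calA$'s behavior: for every realization $r$ of $\calR$ that lies in $B$, we have $\E[|\calA(\vec{x})\oplus\calA(\vec{y})|\mid \calR=r]=\E[|\calA(\vec{x})\oplus\calA(\vec{y})|]\leq \alpha\cdot\|\vec{x}-\vec{y}\|_1$ by Property \ref{prop:distance} of $\calA$. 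Averaging over all $r\in B$ gives the claim. (Strictly, one should note this holds even when we later invoke \Cref{obs:triangle-ineq} and take $\vec{y}=\vec{x}+\veps\vec{e}_i$; the argument is identical.)

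Second, the bound $\E[|X\oplus Y|\mid T]\leq 2k$. This one is nearly trivial: by Property \ref{prop:k-uniform} (equivalently \Cref{obs:cardinality}), each of $X$ and $Y$ has cardinality at most $k$ regardless of which line produced it, so $|X\oplus Y|\leq |X|+|Y|\leq 2k$ deterministically, hence the conditional expectation is at most $2k$ as well. No independence or structural argument is needed here.

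I do not expect a serious obstacle in this observation — both halves are short. The only point requiring care is the first one: making precise that the ``Bad'' event is a function of $\calR$ alone and hence independent of the randomness driving $\calA$, so that conditioning on $B$ preserves $\calA$'s stretch guarantee rather than potentially correlating the two outputs adversarially. This is the kind of measure-theoretic bookkeeping that is easy to state once the randomness is partitioned as in \Cref{obs:conditional-marginlas}, so I would lean on that same partition. The genuinely hard work — bounding $\E[|X\oplus Y|\mid G]$ and, crucially, the probability of the Tragic event $T$ (which must be shown to be $O(\veps/k)$ so that the $2k$ factor is absorbed) — is deferred to the lemmas that follow this observation, not to the observation itself.
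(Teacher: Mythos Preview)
Your proof is correct and follows the same approach as the paper: the tragic bound via $|X|+|Y|\leq 2k$ and the bad bound via $X=\calA(\vec{x})$, $Y=\calA(\vec{y})$ and Property~\ref{prop:distance} of $\calA$. If anything, you are more careful than the paper in spelling out that the event $B$ depends only on $\calR$ and not on $\calA$'s internal randomness, so conditioning on $B$ does not disturb $\calA$'s stretch guarantee --- the paper simply asserts the bound without this justification.
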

    \begin{proof}
        The bound for the tragic event is trivial, since $|X|,|Y|\leq k$ by \Cref{obs:cardinality}, and thus by triangle inequality, $|X\oplus Y|\leq |X|+|Y|\leq 2k$ always.
        Next, conditioned on the bad event, we have $X=\calA(\vec{x})$ and $Y=\calA(\vec{y})$, so the bound follows by Property \ref{prop:distance} of the $\alpha$-stretch correlated sampling algorithm~$\calA$.
    \end{proof}

    Note that the conditional stretch for the tragic event is unbounded, since it holds for arbitrarily small $\|\vec{x}-\vec{y}\|_1=\eps$, as opposed to the bounded stretch conditioned on a bad pair of runs. This justifies our choice of a more positive term for the latter event.
    We now justify the choice of best term for the good event ($G$): we  show that it contributes $O(\log m) = O(\log \alpha + \log k)$ to the stretch, which is $o(\alpha)$ unless $\calA$ is already an algorithm with our target stretch of $O(\log k)$. As we show later in \Cref{lem:prob-bad-tragic}, we have that $\Pr[G]=1-o(1)$, and so this also implies that good runs have the smallest expected stretch of all three types of runs.

    \begin{restatable}{lem}{goodstretch}
    \label{lem:good-stretch} 
    For vectors $\vec{x},\vec{y}\in \Delta_{n,k}$, the sets $X\triangleq ALG(\vec{x})$ and $Y\triangleq ALG(\vec{y})$ satisfy
        $$\E\left[|X \oplus Y| \mid G\right]\cdot \Pr[G] = O (\log k + \log\alpha)\cdot \norm{\vec{x}-\vec{y}}_1.$$
    \end{restatable}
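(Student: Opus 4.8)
The plan is to first reduce, by an argument along the lines of \Cref{obs:triangle-ineq}, to the case $\vec y=\vec x+\veps\cdot\vec e_i$ for infinitesimally small $\veps>0$, so that $\norm{\vec x-\vec y}_1=\veps$ and the goal becomes $\E[|X\oplus Y|\cdot\mathds 1[G]]=O(\log m)\cdot\veps$, where $m=\m$, so that $\log m=O(\log k+\log\alpha)$ and the algorithm $\calA_k$ of \Cref{prop:ICALP17algo}, invoked on the $m^3$-dimensional vector $\hat{\vec x}$, has stretch $2\lceil\log(m^3)\rceil=O(\log m)$. The first key point is that $G$ is measurable with respect to the initialization randomness $\calR$ of \Cref{alg:rounding-compression-init} (Lines \ref{line:R-start}--\ref{line:R-end}): whether a run of $ALG$ on a fixed vector reaches \Cref{line:good-output} is decided by the collision test of \Cref{line:collision} and the heavy-bucket test of \Cref{line:heavy-bucket}, both of which are functions of $\calR$ and the input vector alone. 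So I would write $\E[|X\oplus Y|\cdot\mathds 1[G]]=\E_{\calR}[\mathds 1[\calR\in\mathcal G]\cdot\E[|X\oplus Y|\mid\calR]]$, where $\mathcal G$ is the set of realizations of $\calR$ under which both runs reach \Cref{line:good-output}, and it suffices to bound $\E[|X\oplus Y|\mid\calR=r]$ for each $r\in\mathcal G$, split by the behavior of the lone perturbed coordinate $i$.

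Next I would pin down how the compressed vectors $\hat{\vec x},\hat{\vec y}\in\Delta_{m^3,k}$ relate for a fixed $r\in\mathcal G$. Letting $b$ be $i$'s bucket and $s_b=\sum_{j\in S\cap\calB_b}x_j$, the crucial structural observation is that the mass deposited at coordinate $R_b$ of $\hat{\vec x}$ equals $s_b$ \emph{no matter which coordinate $\calA_1$ picks as the representative of bucket $b$}; hence $\hat{\vec x}$ (and thus $T_{\vec x}\triangleq\calA_k(\hat{\vec x})$) depends only on $\calR$ and $\vec x$ and is independent of the internal randomness of $\calA_1$, and likewise for $\vec y$. I distinguish two cases. \emph{(i) No change of regime:} either $i$ is large for both $\vec x$ and $\vec y$, or $i$ has zero or small value for $\vec x$ and small value for $\vec y$ (the remaining possibility, $i$ large for $\vec x$ but small for $\vec y$, cannot occur since $y_i>x_i$). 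Then $\hat{\vec y}=\hat{\vec x}+\veps\cdot\vec e_p$ with $p=C_i$ or $p=R_b$, so $\norm{\hat{\vec x}-\hat{\vec y}}_1=\veps$ exactly, and the reconstruction map $[m^3]\to[n]$ used on \Cref{line:good-output} agrees for $\vec x$ and $\vec y$ at every coordinate except possibly $R_b$, where bucket $b$'s representative may differ. \emph{(ii) Change of regime:} $i$ is small for $\vec x$ but large for $\vec y$; then $\norm{\hat{\vec x}-\hat{\vec y}}_1=2x_i+\veps\le\tfrac25+\veps<1$, and this event is rare: it forces $\sigma\in[x_i-\tfrac1{10},\,x_i-\tfrac1{10}+\veps)$, which has probability at most $10\veps$ over $\sigma\sim\Uni[0,\tfrac1{10}]$.

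With these facts I would bound $\E[|X\oplus Y|\mid\calR=r]$ in each case. In case (i), using $|X\oplus Y|\le|T_{\vec x}\oplus T_{\vec y}|+|REP_b^{\vec x}\oplus REP_b^{\vec y}|\cdot\mathds 1[R_b\in T_{\vec x}]$, taking conditional expectation, and invoking the \emph{conditional independence} of the $\calA_k$-outputs $T_{\vec x},T_{\vec y}$ and the $\calA_1$-outputs $REP_b^{\vec x},REP_b^{\vec y}$ given $\calR$: the stretch of $\calA_k$ gives $\E[|T_{\vec x}\oplus T_{\vec y}|\mid\calR]\le O(\log m)\cdot\veps$, while $\Pr[R_b\in T_{\vec x}\mid\calR]=s_b$ and the stretch $2$ of $\calA_1$ bounds $\E[|REP_b^{\vec x}\oplus REP_b^{\vec y}|\mid\calR]$ by $O(\veps/s_b)$ (a short computation of the $\ell_1$ distance between the two normalized single-bucket vectors fed to $\calA_1$), so their product is $O(\veps)$; hence $\E[|X\oplus Y|\mid\calR=r]=O(\log m)\cdot\veps$ for such $r$. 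In case (ii), $|X\oplus Y|\le|T_{\vec x}\oplus T_{\vec y}|+O(1)$ since only the reconstruction at $C_i$ and at $R_b$ can differ, and as $\norm{\hat{\vec x}-\hat{\vec y}}_1<1$ the stretch of $\calA_k$ gives $\E[|X\oplus Y|\mid\calR=r]=O(\log m)$. Integrating over $r\in\mathcal G$, case-(i) realizations contribute at most $O(\log m)\cdot\veps\cdot\Pr[\mathcal G]\le O(\log m)\cdot\veps$, and case-(ii) realizations at most $O(\log m)\cdot\Pr[\text{regime change}]\le O(\log m)\cdot 10\veps$; summing gives $\E[|X\oplus Y|\cdot\mathds 1[G]]=O(\log m)\cdot\veps=O(\log k+\log\alpha)\cdot\norm{\vec x-\vec y}_1$.

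I expect the main obstacle to be the bookkeeping behind the case-(i) structural facts: verifying that $\norm{\hat{\vec x}-\hat{\vec y}}_1$ is \emph{exactly} $\veps$ (which rests on the representative-independence of the mass at $R_b$, and on no other coordinate or bucket crossing its threshold), controlling the single-position discrepancy from a possible change of bucket $b$'s representative, and---most delicately---establishing the independence of the $\calA_k$-outputs from the $\calA_1$-outputs given $\calR$, since this is what lets us multiply the two stretch guarantees and gain the factor $s_b$ instead of merely adding the bounds (and it is precisely that factor that converts an $O(1)$ contribution into an $O(\veps)$ one when $s_b$ is tiny). The randomized-threshold step for case (ii) is the conceptual heart of the lemma but is technically routine once the compression is understood; alternatively, one could bypass the reduction to the infinitesimal case and argue with general $\vec x,\vec y$ directly by conditioning on $\calR\in\mathcal G$ and showing $\E_{\calR}[\,\norm{\hat{\vec x}-\hat{\vec y}}_1\cdot\mathds 1[\mathcal G]\,]=O(\norm{\vec x-\vec y}_1)$ via the same threshold-crossing estimates.
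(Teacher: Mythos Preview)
Your proposal is correct and follows essentially the same approach as the paper's proof. You identify all the key ingredients: the $\calR$-measurability of $G$, the crucial representative-independence observation that makes $\hat{\vec x}$ a function of $\calR$ alone (and hence the conditional independence of $\calA_k$'s and $\calA_1$'s outputs), the case split by whether coordinate $i$ crosses the large/small threshold, the $\Pr[R_b\in T_{\vec x}]=s_b$ factor that cancels the $O(\veps/s_b)$ representative-discrepancy bound, and the $10\veps$ bound on the regime-change probability via $\sigma\sim\Uni[0,1/10]$; the paper organizes the same computation as a three-event decomposition $E_1,E_2,E_3$ (both large, both small, regime change) and writes the two contributions as separate terms (\ref{eq: term1}) and (\ref{eq: term2}), but the content is identical.
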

    \begin{proof}[Proof (Sketch)]
        By \Cref{prop:ICALP17algo}, the output sets have distance roughly $O(\log m^3)\cdot \norm{\hat{\vec{x}}-\hat{\vec{y}}}_1=O(\log k+\log \alpha)\cdot \norm{\hat{\vec{x}}-\hat{\vec{y}}}_1$ conditioned on the fact that runs for $\vec x$ and $\vec y$ are good. 
        
        Therefore we wish to bound $\E[\norm{\hat{\vec{x}}-\hat{\vec{y}}}_1 \mid G]$ in terms of $\norm{{\vec{x}}-{\vec{y}}}_1=\eps$. In the case that $x_i$ and $y_i$ are both small, or are both large,  $\hat{\vec{x}}$ and $\hat{\vec{y}}$ differ by a single coordinate, hence $\norm{\hat{\vec{x}}-\hat{\vec{y}}}_1 = \norm{\vec{x}-\vec{y}}_1 = \eps$. However if $y_i$ is large but $x_i$ is small, then $\norm{\hat{\vec{x}}-\hat{\vec{y}}}_1 = 2x_i+\eps = O(1)$. As $\sigma$ is randomly chosen, this last event happens with probability $O(\eps)=O(\norm{{\vec{x}}-{\vec{y}}}_1)$, which then implies our desired bound by total expectation.
        See \Cref{appendix:correlated-sampling} for details.
    \end{proof}

    Given the preceding bounds on the expected contribution to the stretch of good runs, we wish to upper bound the probability of a pair of runs on vectors $\vec{x}$ and $\vec{y}$ being bad or tragic. 
    To this end, we first upper bound the probability of \Cref{alg:rounding-compression} running $\calA$ on the original input vector, by proving that the probability of a hash collision is low, conditioned on any realization of~$\sigma$.

    \begin{lem}\label{lem:collisions}
        For any $\vec{x}\in \Delta_{n,k}$ and possible realization $s$ of $\sigma$,
        $$\Pr\left[\exists  h \in [m^3] \textrm{ such that } | \{i \notin S \mid C_i = h, x_i>0\} \cup \{b \mid R_b = h\} | > 1 \;\middle\vert \; \sigma =s \right] \leq \frac{1}{m}.$$
    \end{lem}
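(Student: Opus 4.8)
The plan is a union bound over colliding pairs of hashed objects, after one small but essential observation: conditioning on $\sigma = s$ \emph{fixes which coordinates are large} while leaving the relevant hashes i.i.d.\ uniform. Concretely, first note that $\{i \notin S : x_i > 0\} = L = \{i : x_i > 1/10 + s\}$, so once $\sigma = s$ is fixed the event in question depends only on the $|L| + m$ hash values $\{C_i : i \in L\} \cup \{R_b : b \in [m]\}$. In \Cref{alg:rounding-compression-init} these hashes are drawn independently of $\sigma$ (and of everything else), so conditioned on $\sigma = s$ they remain mutually independent and uniform on $[m^3]$, while $L$ is now deterministic.

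Next I would observe that the event ``$\exists h \in [m^3]$ with $|\{i \in L : C_i = h\} \cup \{b : R_b = h\}| > 1$'' is exactly the event that these $|L| + m$ hash values fail to be pairwise distinct. Any single pair among the $|L|+m$ objects involves two independent uniform draws on $[m^3]$, hence coincides with probability $m^{-3}$; a union bound over all $\binom{|L|+m}{2}$ pairs therefore bounds the conditional probability in the statement by $\binom{|L|+m}{2}\, m^{-3} \le (|L|+m)^2/(2m^3)$.

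It remains to compare $|L|$ with $m$. Each $i \in L$ has $x_i > 1/10$, and $\sum_i x_i = \|\vec{x}\|_1 \le k$, so $|L| < 10k$. Since $m = \m \ge 2^{\constantC} k^{\expM} \ge 32k > 30k$ (using $\alpha \ge 1$ and $k \ge 1$), we get $|L| < m/3$, hence $|L| + m < \tfrac{4}{3}m$, and substituting gives $(|L|+m)^2/(2m^3) < (4m/3)^2/(2m^3) = 8/(9m) < 1/m$, as claimed. I do not expect a real obstacle here: the only point requiring care is the conditioning step --- checking that reshaping $S$ and $L$ through $\sigma$ induces no correlation among the hashes $C_i, R_b$ --- together with the comfortable gap $m = \Omega(k^5) \gg 10k$, which is what lets the union bound close with room to spare.
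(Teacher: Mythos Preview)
Your proof is correct and follows essentially the same approach as the paper's: bound $|L|\le 10k$ from $\|\vec{x}\|_1\le k$, then union bound over the $\binom{|L|+m}{2}$ pairs of independent uniform hashes in $[m^3]$. The only cosmetic difference is in the final arithmetic---the paper uses $10k+m\le\sqrt{2}\,m$ where you use $|L|+m<\tfrac{4}{3}m$---and you are a bit more explicit than the paper about why conditioning on $\sigma=s$ leaves the hashes independent, which is a welcome clarification.
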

    \begin{proof}
        Since $\sum_i {{x}}_i\leq k$, the number of elements $i\in L$ is at most $\frac{k}{1/10 + \sigma}\leq 10k$, while the number of buckets is $m$.
        The number of pairs of such is therefore at most $\binom{10k + m}{2}$. 
        Since hashing is done uniformly, the probability of any pair colliding is $\frac{1}{m^3}$. By union bound over all pairs and using $m = \m \geq 10k/(\sqrt{2} -1)$, and so $10k+m\leq \sqrt{2}m$, the probability of a hash collision is at most
        \begin{align*}
            \binom{10k + m}{2} \cdot \frac{1}{m^3} & \leq \frac{(10k+m)^2}{2m^3} \leq \frac{1}{m}.
            \qedhere
        \end{align*}
    \end{proof}

    Next, we show that under the same conditioning, the probability of any bucket $b$ being heavier than $\frac{4}{5}\leq 1-\tau$ is likewise small.

    \begin{restatable}{lem}{heavybucket}\label{lem:heavy-bucket}
        For any $\vec{x}\in \Delta_{n,k}$ and possible realization $s$ of $\sigma$,
        $$\Pr\left[\exists b \in [m] \textrm{ such that } \sum_{i\in S \cap \calB_b} x_i > \frac{4}{5} \;\middle\vert\; \sigma = s \right]\leq \frac{1}{m}.$$
    \end{restatable}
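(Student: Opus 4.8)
The plan is a routine concentration argument of exactly the flavor already used in \Cref{lem:collisions}: fix the realization $\sigma=s$, note that this only pins down which coordinates are ``small'', apply a multiplicative Chernoff bound to each bucket's load, and finish with a union bound over the $m$ buckets. First I would fix any realization $s\in[0,1/10]$ of $\sigma$; this determines the set $S=\{i:0<x_i\le 1/10+s\}$, and for every $i\in S$ we have $x_i\le 1/10+s\le 1/5$, while $\sum_{i\in S}x_i\le \|\vec x\|_1\le k$. The crucial point is that the partition of the coordinates into buckets $\calB_1,\dots,\calB_m$ (drawn in \Cref{alg:rounding-compression-init}) is independent of $\sigma$, so conditioned on $\sigma=s$ each coordinate $i\in S$ still lands in a uniformly random bucket, independently across $i$. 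Hence for a fixed bucket $b$, the load $W_b\triangleq\sum_{i\in S\cap\calB_b}x_i=\sum_{i\in S}x_i\cdot\mathds{1}[i\in\calB_b]$ is a sum of independent random variables, each supported on $\{0,x_i\}\subseteq[0,1/5]$, with mean $\mu_b\triangleq\E[W_b\mid\sigma=s]=\tfrac1m\sum_{i\in S}x_i\le k/m$.

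Next I would rescale and apply the standard multiplicative Chernoff bound: $5W_b$ is a sum of independent $[0,1]$-valued random variables (here $5x_i\le 1$) with mean $5\mu_b\le 5k/m<4/5$, using $m=\m\ge 32k^5$. Since the threshold $4$ exceeds this mean, the bound $\Pr[\sum X_j\ge t]\le (e\nu/t)^t$ for a sum of mean $\nu$ gives $\Pr[W_b>4/5\mid\sigma=s]=\Pr[5W_b\ge 4]\le\bigl(\tfrac{5e\mu_b}{4}\bigr)^{4}\le\bigl(\tfrac{5ek}{4m}\bigr)^4\le\bigl(\tfrac{4k}{m}\bigr)^4$. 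The reason this is the right power is precisely that $4/5$ is four times the per-coordinate cap $1/5$, so the exponent is (at least) $4$, which is exactly what lets the union bound over all $m$ buckets survive. Taking that union bound yields $\Pr[\exists b:W_b>4/5\mid\sigma=s]\le m\cdot(4k/m)^4=256k^4/m^3$, and since $m=\m$ satisfies $m^2=1024\,k^{10}\alpha^{10}\ge 256k^4$ (as $k,\alpha\ge 1$), this is at most $1/m$, as claimed.

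I do not expect a genuine obstacle. The only things to be careful with are the rescaling by $5$ and checking that the deviation is ``large'' — i.e.\ that the threshold $4$ sits above the mean $5\mu_b$, so the Chernoff bound is legitimate and delivers a fourth power — and then confirming the constants close against the definition $m=2^{\constantC}k^{\expM}\alpha^{\expM}$; all of this is comfortable because $m$ is a high-degree polynomial in $k$ and $\alpha$. Everything else mirrors the bucketing-and-union-bound template of \Cref{lem:collisions}.
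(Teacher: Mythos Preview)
Your proof is correct, and it is actually more direct than the paper's. The paper splits the small items into \emph{tiny} items (those with $x_i\le\lambda\triangleq\frac{1}{10\log m+5}$) and \emph{little} items (the rest), and then handles the two classes separately: a Chernoff bound with rescaling by $\lambda^{-1}$ shows the tiny items in a bucket exceed $1/5$ with probability at most $2^{-1/(5\lambda)}=1/(2m^2)$, while a counting argument (at least four little items must land in one bucket, and there are at most $k/\lambda$ of them) bounds the other event by $\binom{k/\lambda}{4}m^{-4}\le 1/(2m^2)$; a union bound over the two events and over $m$ buckets finishes. You sidestep the split entirely by rescaling all small items by $5$ at once, so that the threshold becomes $4$ and the Chernoff tail $(e\nu/t)^t$ already delivers a fourth power. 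What the paper's split buys is a per-bucket bound of order $1/m^2$ without any residual polynomial in $k$; what your approach buys is a shorter argument, at the cost of a $k^4$ factor in the numerator that you then kill using the high degree of $m$ in $k$. Both are fine here because $m=\m$ is chosen with plenty of slack.
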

    \begin{proof}[Proof (Sketch)]
        We divide small items into \emph{tiny} items, having $x$ value at most some $\lambda = O(1/\log m)$, and \emph{little} items. 
        For some $\calB_b$ to have $x$ value of small items exceeding $4/5$ requires either the sum of $x$-values of tiny items in $\calB_b$ to exceed its expectation by some constant, or at least some constant number $C$ of coarse items to belong to $\calB_b$.
        Both events have probability polynomially small in $m$, the former by the choice of $\lambda$ and Chernoff bounds, and the latter by union bound over the (few) $C$-tuples of the (somewhat few) little items.
        Union bounding over these events and over all $b\in [m]$ then gives the lemma. See \Cref{appendix:correlated-sampling} for details.
    \end{proof}

Using the preceding lemmas, we can now upper bound the bad and tragic events' probabilities.
    \begin{restatable}{lem}{probbadtragic}\label{lem:prob-bad-tragic}
        The probability that a pair of runs of \Cref{alg:rounding-compression} on $\vec{x}$ and $\vec{y}$ is bad or tragic satisfy
        \begin{align*}
            \Pr[B] & \leq \frac{2}{m}, \\
            \Pr[T] & \leq \frac{30\veps}{m}.
        \end{align*}
        
    \end{restatable}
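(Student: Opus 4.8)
The plan is to treat the two bounds separately. For $\Pr[B]$, observe that a bad pair requires in particular that the run on $\vec{x}$ falls back, so $B$ is contained in the event $F_{\vec{x}}$ that $ALG$ returns in Line~\ref{line:bad-output-collision} or Line~\ref{line:bad-output-heavy-load} on input $\vec{x}$. Condition on an arbitrary realization $\sigma=s$. \Cref{lem:collisions} bounds the probability of the hash-collision trigger of Line~\ref{line:collision} by $1/m$; and since $\tau\le 1/10$ forces $1-\tau\ge 4/5$, the heavy-bucket trigger of Line~\ref{line:heavy-bucket} is contained in the event that some bucket's small-item $x$-mass exceeds $4/5$, which \Cref{lem:heavy-bucket} bounds by $1/m$. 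A union bound over the two triggers, followed by taking expectation over $\sigma$, yields $\Pr[B]\le\Pr[F_{\vec{x}}]\le 2/m$.

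For $\Pr[T]$, by \Cref{obs:triangle-ineq} we may assume $\vec{y}=\vec{x}+\veps\cdot\vec{e}_i$ with $\veps\to 0^+$, so the two inputs differ only in coordinate $i$, with $y_i=x_i+\veps>x_i$. Let $F_{\vec{x}},F_{\vec{y}}$ denote the fallback events (each run either falls back or is ``good''), so that $T=F_{\vec{x}}\triangle F_{\vec{y}}$, that is, exactly one run falls back. Let $E$ be the event that coordinate $i$ straddles the large/small threshold, $x_i\le 1/10+\sigma<y_i$; since $\sigma\sim\Uni[0,1/10]$, we have $\Pr[E]\le\veps/(1/10)=10\veps$. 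I would then split on $E$.

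On $\neg E$ the sets $S$ and $L$ of \Cref{alg:rounding-compression} are identical for the two runs, so the collision trigger behaves identically. If moreover $i$ is large for both (i.e.\ $i\notin S$), the heavy-bucket masses coincide too and $F_{\vec{x}}=F_{\vec{y}}$, contributing nothing to $T$. If $i$ is small for both, then only the bucket $b^*$ containing $i$ sees its small-item mass change, by exactly $+\veps$; hence $\mathrm{heavy}(\vec{x})\subseteq\mathrm{heavy}(\vec{y})$, and a tragic pair forces $b^*$'s $\vec{x}$-mass $w:=\sum_{j\in S\cap\calB_{b^*}}x_j$ to lie in $(1-\tau-\veps,\,1-\tau]$. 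For fixed $\sigma$ and bucket assignment (which determine $w$ and are independent of $\tau$) this has probability at most $10\veps$ over $\tau$; and it forces $w>1-1/10-\veps>4/5$, so by \Cref{lem:heavy-bucket} the probability that such a $b^*$ exists is at most $1/m$. Combining, $\Pr[T\wedge\neg E]\le 10\veps/m$. On $E$ we have $L_{\vec{y}}=L_{\vec{x}}\cup\{i\}$ and $S_{\vec{x}}=S_{\vec{y}}\cup\{i\}$, so $\mathrm{collision}(\vec{x})\Rightarrow\mathrm{collision}(\vec{y})$ and $\mathrm{heavy}(\vec{y})\Rightarrow\mathrm{heavy}(\vec{x})$. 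Chasing these implications, a tragic pair forces either that $b^*$ is heavy for $\vec{x}$ — which needs $b^*$'s $\vec{x}$-mass to exceed $1-\tau>4/5$, hence probability $\le 1/m$ by \Cref{lem:heavy-bucket} — or that the hash $C_i$ coincides with one of the at most $10k+m\le\sqrt{2}\,m$ large-coordinate/bucket hashes, hence probability $\le\sqrt{2}/m^2$. Multiplying by $\Pr[E]\le 10\veps$ gives $\Pr[T\wedge E]\le 10\veps(1/m+\sqrt{2}/m^2)$. Summing the two cases, $\Pr[T]\le 20\veps/m+10\sqrt{2}\,\veps/m^2\le 30\veps/m$, using $m=\m\ge 2$.

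The crux — and the only place requiring real care — is extracting the extra $1/m$ factor in the tragic bound rather than settling for the trivial $\Pr[T]\le\Pr[E]=O(\veps)$: one must recognize that a genuine disagreement between the two fallback decisions requires, on top of the $O(\veps)$-probable threshold straddling (or, in the $\neg E$ branch, the $O(\veps)$-probable landing of $\tau$ in a short window), a second, structurally rare event — a heavy bucket or a fresh hash collision — whose probability is controlled by \Cref{lem:collisions,lem:heavy-bucket}. Making this rigorous rests on the monotonicity observations that enlarging $L$ can only create collisions, that raising $x_i$ within $S$ can only increase $b^*$'s mass, and that moving $i$ out of $S$ can only decrease $b^*$'s mass; these pin down precisely which bucket or which new collision can be responsible for the disagreement, so that the earlier lemmas become applicable, and on the independence of $\tau$ from the quantities determining the relevant bucket mass.
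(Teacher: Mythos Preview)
Your proof is correct and follows essentially the same approach as the paper. Your split on the straddling event $E$ and subsequent sub-cases correspond exactly to the paper's three cases: your ``$E$ with $b^*$ heavy for $\vec{x}$'' and ``$E$ with $C_i$ colliding'' are the paper's Cases~1 and~2, and your ``$\neg E$ with $\tau$ landing in a length-$\veps$ window'' is the paper's Case~3; your monotonicity observations make explicit why no other combinations can arise, and your direct $\sqrt{2}/m^2$ bound for the new collision is a slight sharpening of the paper's use of \Cref{lem:collisions} there, though both yield the same final $30\veps/m$.
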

    \begin{proof}[Proof (Sketch)]
        The first bound follows by union bound and the previous two lemmas.
        In contrast, if a pair of runs is \emph{tragic}, then \Cref{alg:rounding-compression} terminates in \Cref{line:good-output} for one input, but terminates early (Lines \ref{line:bad-output-collision} or \ref{line:bad-output-heavy-load}) for the other, due to either a hash collision or a heavy bucket. 
        All (three) cases causing such an event require one of the $1/m$-probability events from the previous lemmas to occur, and the random thresholds for an item being large or for a bucket being full to fall in a range of width at most $\eps=\|\vec{x}-\vec{y}|_1$. But due to $\sigma,\tau\sim \Uni[0,1/10]$, any one of these three bad cases then occurs with probability $10\eps/m$. The upper bound on $\Pr[T]$ then follows by the union bound.    
See \Cref{appendix:correlated-sampling} for details.
    \end{proof}

With the above in place, we are now ready to bound the expected stretch of \Cref{alg:rounding-compression}.

\begin{thm}\label{thm:base-of-recurrence}
    For all $\vec{x},\vec{y}\in \Delta_{n,k}$, the sets $X\triangleq ALG(\vec{x})$ and $Y\triangleq ALG(\vec{y})$ satisfy
    \begin{align*}
        \E\left[|X\oplus Y|\right] =  O (\log k + \log \alpha) \cdot \norm{\vec{x}-\vec{y}}_1.
    \end{align*}
\end{thm}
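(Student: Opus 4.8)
The plan is to assemble \Cref{thm:base-of-recurrence} from the conditional stretch bounds and the event-probability bounds already in hand, via the law of total expectation over the three cases $G$, $B$, $T$. First, by \Cref{obs:triangle-ineq} it suffices to prove the claimed stretch for $\vec{y}=\vec{x}+\veps\cdot\vec{e}_i\in\Delta_{n,k}$ with $\veps\to 0$, so $\norm{\vec{x}-\vec{y}}_1=\veps$; this matters because the bound on $\Pr[T]$ is stated in this single-perturbed-coordinate regime. The events $G$, $B$, $T$ are mutually exclusive and exhaustive, since on any run $ALG$ terminates either in \Cref{line:good-output} or in one of \Cref{line:bad-output-collision,line:bad-output-heavy-load}, so
$$\E[|X\oplus Y|]=\E[|X\oplus Y|\mid G]\cdot\Pr[G]+\E[|X\oplus Y|\mid B]\cdot\Pr[B]+\E[|X\oplus Y|\mid T]\cdot\Pr[T].$$

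Next I would bound the three terms separately. The good term is $O(\log k+\log\alpha)\cdot\norm{\vec{x}-\vec{y}}_1$ directly by \Cref{lem:good-stretch}. The bad term is at most $\alpha\cdot\norm{\vec{x}-\vec{y}}_1\cdot\Pr[B]\le \alpha\veps\cdot\tfrac{2}{m}$, using the observation preceding \Cref{lem:good-stretch} for the conditional expectation and \Cref{lem:prob-bad-tragic} for $\Pr[B]$; since $m=\m=\Theta(k^{\expM}\alpha^{\expM})\gg\alpha$, this is $O(\veps)$. The tragic term is at most $2k\cdot\Pr[T]\le 2k\cdot\tfrac{30\veps}{m}$, again by the same observation and \Cref{lem:prob-bad-tragic}, and this is $O(\veps)$ since $m\gg k$. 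Summing, $\E[|X\oplus Y|]=\bigl(O(\log k+\log\alpha)+O(1)\bigr)\cdot\veps=O(\log k+\log\alpha)\cdot\norm{\vec{x}-\vec{y}}_1$, as desired.

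There is no genuine obstacle at this stage: all the work has been front-loaded into \Cref{lem:good-stretch} (controlling $\E[\norm{\hat{\vec{x}}-\hat{\vec{y}}}_1\mid G]$ through the randomized small/large threshold $\sigma$) and into \Cref{lem:prob-bad-tragic} (the crucial $\veps/m$ bound on $\Pr[T]$ coming from $\sigma,\tau\sim\Uni[0,1/10]$). The only points needing care in the assembly are (i) invoking \Cref{obs:triangle-ineq} so that the single-coordinate picture underlying the probability bounds is without loss of generality, and (ii) verifying that the global parameter $m=\m$ is polynomially large enough in both $k$ and $\alpha$ that the $\alpha/m$ contribution of bad runs and the $k/m$ contribution of tragic runs are each $O(1)$, hence dominated by the $O(\log k+\log\alpha)$ contribution of good runs. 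This theorem is the base of the recurrence: it will be iterated, feeding its output stretch back in as $\alpha$ at the next level, so that $\log\alpha$ shrinks until the overall bound stabilizes at $O(\log k)$.
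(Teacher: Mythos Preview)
Your proposal is correct and follows essentially the same approach as the paper: reduce to the single-coordinate case via \Cref{obs:triangle-ineq}, split by total expectation over $G$, $B$, $T$, and bound each term using \Cref{lem:good-stretch}, the observation on conditional stretch, and \Cref{lem:prob-bad-tragic}, then absorb the $\alpha/m$ and $k/m$ contributions using $m=\m$. The paper's proof is essentially a terse version of exactly this argument.
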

\begin{proof}
    By \Cref{obs:triangle-ineq}, we can assume that $\norm{\vec x - \vec y}_1 = \veps$. 
    By total expectation over the good, bad and tragic events, using Lemmas \ref{lem:good-stretch} and \ref{lem:prob-bad-tragic}, we get:
\begin{align*}
    \E[|X\oplus Y|] 
    & = \E[|X\oplus Y| \mid G]\cdot \Pr[G] + \E[|X\oplus Y| \mid B]\cdot \Pr[B] + \E[|X\oplus Y| \mid T]\cdot \Pr[T] \\ 
    & \leq O (\log k + \log \alpha) \cdot \norm{\vec{x}-\vec{y}}_1 + \alpha \cdot \frac{2}{m}\cdot \norm{\vec{x}-\vec{y}}_1 + \frac{2k}{\veps} \cdot \frac{30\veps}{m} \cdot \norm{\vec{x}-\vec{y}}_1\\
    &\leq O(\log k + \log \alpha) \cdot \norm{\vec x - \vec y}_1,
\end{align*}
where the last inequality follows from our choice of $m$ satisfying $m\geq 2\alpha$ and $m\geq 60k$.
\end{proof}

\color{black}
Taking Algorithm $\calA$ in \Cref{alg:rounding-compression} to be the $\alpha = O(\log n)$-stretch algorithm of \Cref{prop:ICALP17algo}, \Cref{thm:base-of-recurrence} immediately implies an $O(\log k + \log \log n)$-stretch correlated sampling algorithm.
We improve on this bound and obtain an $O(\log k)$ stretch by recursing our construction.
We start by describing the construction.

\medskip\noindent\textbf{Our recursive construction.} For the base case, $\calA'_0$ is the algorithm of \Cref{prop:ICALP17algo}. For all $i\geq 1$ we let $\calA'_i$ be \Cref{alg:rounding-compression} with Algorithm $\calA$ in \Cref{line:bad-output-collision,line:bad-output-heavy-load} chosen to be $\calA'_{i-1}$.\footnote{\label{better-recurrence}We can also consider running $\calA'_{i-1}$ in \Cref{line:good-output}, but we focus on the simpler recurrence.}
We denote by $\alpha_i$ the stretch of algorithm $\calA'_i$. By \Cref{prop:ICALP17algo}, $\alpha_0 \leq 2\log  n+2$. On the other hand, by \Cref{thm:base-of-recurrence}, for general $i$ we have for some constant $C>1$ the recurrence 
\begin{equation*}
\label{eqn:recurrence}
    \alpha_{i+1}\leq  C \cdot (\log k + \log \alpha_i).
\end{equation*}

Using this recursive construction we obtain our claimed $O(\log k)$ stretch, as follows.

\correlatedsampling*
\begin{proof}
Let $p\triangleq  3C \cdot \log k + 10C^2$, and note that $10C^2/\log(10C^2) \geq 3C$ for all $C>1$. Consequently, since $x/\log x$ is increasing for $x\geq 10C^2 > e$, we have that if $\alpha_i\geq p \geq 10C^2$, then $\alpha_i/3 \geq C\log \alpha_i$.
Moreover, if $\alpha_i\geq p$, then trivially $\alpha_i/3 \geq p/3 \geq C\cdot \log k$.
Combining the above we have that if $\alpha_i \geq p$ then $2\alpha_i/3 \geq C\cdot(\log k + \log \alpha_i) \geq \alpha_{i+1} $.
Thus, since $\alpha_0 = O(\log n)$,  for $i\triangleq \log_{3/2}(\alpha_0)=O(\log \log n)$, Algorithm $\calA'_i$ has stretch $\alpha_i \leq p = O(\log k)$.
\end{proof}

By the preceding proof, to get an $O(\log k)$ stretch we only need a recursion depth of $O(\log \log n)$. In the following section we refine this bound and consider other computational aspects of implementations of \Cref{alg:rounding-compression} and our derived recursive correlated sampling algorithm.

\subsection{Computational Considerations}\label{sec:runtime}

In our preceding proof of \Cref{thm:correlated-sampling}, we only used that the stretch decreases by a constant factor until it is below some $O(\log k)$ term.
However, since the dependence on $n$ in the stretch of each algorithm $\calA'_i$ is asymptotically \emph{logarithmic} in the stretch of the preceding algorithm $\calA'_i$, we can show that the first iterations decrease the stretch significantly faster and so significantly fewer levels of recursion are necessary to attain stretch $O(\log k)$, which translates into speedups for our algorithm.

\begin{restatable}{lem}{logstar}\label{lem:first-recursive-algos}
     The $i$-th algorithm $\calA'_i$ for all $i\leq \log^*n-2$ has stretch $\alpha_i = O(i\cdot \log k + \log^{(i)}n).$
\end{restatable}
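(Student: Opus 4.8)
The plan is to iterate the recurrence $\alpha_{i+1}\le C(\log k+\log\alpha_i)$ from \Cref{thm:base-of-recurrence}, starting from $\alpha_0\le 2\log n+2$, and track the two additive contributions separately: a term coming from repeatedly taking logarithms of $n$, and a term accumulating $\log k$ across the levels. The heuristic is that when $\alpha_i$ is still dominated by its ``$n$-part'', one step roughly replaces $\log^{(j)}n$ by $\log^{(j+1)}n$ (up to constants absorbed by $C$), while each step also adds an $O(\log k)$ contribution; after $i$ steps these accumulate to the claimed $O(i\log k+\log^{(i)}n)$. The restriction $i\le\log^*n-2$ is exactly what guarantees $\log^{(i)}n$ is still at least a constant (in fact $\ge 2$), so the iterated logarithm has not yet collapsed and the crude bound $\log(a+b)\le\log a+\log b$ type manipulations stay meaningful.

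Concretely, I would prove by induction on $i$ the statement $\alpha_i\le A\cdot i\log k+B_i$, where $B_i$ is designed so that $B_i=O(\log^{(i)}n)$ for $i\le\log^*n-2$ and $A$ is a suitable constant multiple of $C$. For the inductive step, apply the recurrence: $\alpha_{i+1}\le C\log k+C\log\alpha_i\le C\log k+C\log(A\,i\log k+B_i)$. Then split the logarithm of the sum, e.g. $\log(A\,i\log k+B_i)\le \log(2A\,i\log k)+\log(2B_i)$ when both terms are $\ge 1$ (or handle the case where one dominates), giving a term $C\log(2A\,i\log k)=O(\log i+\log\log k)=O(\log k)$ absorbed into the $i\log k$ part (using $i\le\log^*n$, which is tiny), plus $C\log(2B_i)$. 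The crux is choosing $B_{i+1}\ge C\log(2B_i)+(\text{slack})$ and checking that the solution of this subrecurrence, seeded at $B_0=\Theta(\log n)$, stays $\Theta(\log^{(i)}n)$ for all $i$ up to $\log^*n-2$; here one uses that $C\log(2\cdot 2\log^{(i)}n)=C(\log^{(i+1)}n+O(1))$ and that for $i\le\log^*n-2$ we have $\log^{(i+1)}n\ge 1$, so the additive $O(1)$ (and the factor $C$) can be charged against, say, doubling the constant in front of $\log^{(i+1)}n$ — the constant in the $O(\cdot)$ may grow with the level but only by a bounded factor since there are at most $\log^*n$ levels, and more carefully it can be kept at a fixed constant because $\log^{(i)}n$ shrinks geometrically-in-tower fashion so the geometric-type series of error terms converges.

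The main obstacle is the bookkeeping of constants in the subrecurrence $B_{i+1}\le C\log(2B_i)+c_0$: naively this looks like it could blow the constant in $\log^{(i)}n$ up by a factor $C$ per level, which would be fatal over $\log^*n$ levels. The resolution — and the step I would be most careful about — is to observe that $\log^{(i)}n$ decreases so rapidly that once $B_i\le K\log^{(i)}n$ for a fixed $K$, we get $B_{i+1}\le C\log(2K\log^{(i)}n)+c_0 = C\log^{(i+1)}n + C\log(2K)+c_0$, and since for $i+1\le\log^*n-2$ we have $\log^{(i+1)}n\ge 2$, the whole thing is at most $(C + (C\log(2K)+c_0)/2)\log^{(i+1)}n$; choosing $K$ large enough (depending only on $C,c_0$) makes the bracket $\le K$, closing the induction with the \emph{same} constant $K$. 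So the genuine content is just picking $K$ to satisfy $C+(C\log 2K+c_0)/2\le K$, which is possible for all $K$ beyond some threshold determined by $C$. Once $B_i=O(\log^{(i)}n)$ is established, adding the $O(i\log k)$ term from the $\log k$ contributions at each level gives $\alpha_i=O(i\log k+\log^{(i)}n)$ as claimed; the base case $i=0$ is immediate from $\alpha_0\le 2\log n+2$. I would relegate the elementary constant-chasing to \Cref{appendix:correlated-sampling} and keep only this outline in the main text.
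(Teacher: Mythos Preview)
Your overall strategy---induction on $i$, separating a $\log^{(i)}n$-type contribution from an accumulating $O(\log k)$-per-step contribution---matches the paper's. The paper's execution is considerably cleaner, however: it proves directly that $\alpha_i \le C\bigl(\log^{(i+1)}n + iA\bigr)$ with $A \triangleq \log(Ck)$, and the entire inductive step reduces to the single elementary inequality $\log(x+y) \le \log x + y$ for $x,y \ge 2$, applied with $x = \log^{(i)}n$ and $y = (i-1)A$. This lets the $iA$ term pass through the logarithm \emph{unchanged}, so there is no constant-chasing, no subrecurrence for a separate $B_i$, and no fixed-point argument for $K$.

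Your split $\log(a+b) \le \log(2a)+\log(2b)$ instead takes the logarithm of the $Ai\log k$ term and produces an extraneous $C\log i$ that must be re-absorbed. The step where you write ``$C\log(2Ai\log k)=O(\log i+\log\log k)=O(\log k)$, using $i\le\log^*n$, which is tiny'' is not correct as stated: $i$ ranges up to $\log^*n-2$, which is unbounded in $n$, while $k$ may equal $2$, so $\log i$ is not $O(\log k)$ with any universal constant. The easy repair is to use $\log i \le i \le i\log k$ and absorb this into the $Ai\log k$ budget (choosing $A$ large enough in terms of $C$ but independent of $i,n,k$); with that fix your induction closes. Your fixed-point choice of $K$ for the $B_i$ subrecurrence is valid and is a legitimate alternative to the paper's device of factoring the single constant $C$ outside, though it becomes unnecessary once you adopt the inequality $\log(x+y)\le\log x+y$.
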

The proof, which is a simple inductive argument, is deferred to \Cref{appendix:correlated-sampling}. In the same appendix we show that after $\log^*n$ levels of recurrence, giving a stretch of $O(\log k\cdot \log^*n)$, by the halving argument from our previous proof of \Cref{thm:correlated-sampling}, we then decrease the stretch by a further $\log^*n$ factor in only $O(\log\log^*n)$ more levels of recursion, yielding the following.
\begin{restatable}{thm}{logstarthm}\label{thm:logstartheorem}
For some $i=\log^*n+O(\log\log^*n) = O(\log^*n)$, Algorithm $\calA'_i$ has stretch $O(\log k)$.
\end{restatable}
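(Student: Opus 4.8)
The plan is to combine the per-level recurrence $\alpha_{i+1}\le C(\log k+\log\alpha_i)$ from \Cref{thm:base-of-recurrence} with the observation that, as long as $\log\alpha_i$ dominates $\log k$, the recurrence behaves essentially like iterated logarithm: $\alpha_{i+1}\approx C\log\alpha_i$, so after $t$ steps we have roughly $\alpha_t = O(t\log k + \log^{(t)}n)$, which is exactly the content of \Cref{lem:first-recursive-algos}. First I would invoke that lemma at $i=\log^*n-2$ (the largest index for which $\log^{(i)}n$ is still meaningfully defined and at least $1$): this gives $\alpha_{\log^*n-2} = O((\log^*n-2)\cdot\log k + \log^{(\log^*n-2)}n)$. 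Since $\log^{(\log^*n-2)}n \le 4$ (two more logs bring it below $1$), this term is $O(1)$, so $\alpha_{\log^*n-2} = O(\log k\cdot\log^*n)$. One more level gives $\alpha_{\log^*n-1} = O(\log k + \log(\log k\cdot\log^*n)) = O(\log k + \log\log^*n) = O(\log k \cdot \log^* n)$ again — i.e., once we are in the regime $\alpha_i = O(\log k\cdot\log^*n)$, we stay there under one more application of the recurrence.

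Next I would run the ``halving'' argument from the proof of \Cref{thm:correlated-sampling} starting from this value $\beta_0 \triangleq \alpha_{\log^*n-2} = O(\log k\cdot\log^*n)$ rather than from $\alpha_0 = O(\log n)$. That proof shows that whenever the current stretch $\alpha_i$ exceeds the threshold $p = 3C\log k + 10C^2 = \Theta(\log k)$, one more level of recursion multiplies the stretch by at most $2/3$. Therefore the number of additional levels needed to drive $\beta_0$ down below $p$ is at most $\log_{3/2}(\beta_0/p) = \log_{3/2}\big(O(\log^*n)\big) = O(\log\log^*n)$. Adding this to the $\log^*n - 2$ levels already spent yields the claimed index $i = \log^*n + O(\log\log^*n) = O(\log^*n)$ at which $\calA'_i$ has stretch $\alpha_i \le p = O(\log k)$, proving \Cref{thm:logstartheorem}.

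The main obstacle — and really the only substantive step — is \Cref{lem:first-recursive-algos} itself, whose proof the excerpt defers. I would prove it by induction on $i$: the base case $i=0$ is $\alpha_0 \le 2\log n + 2 = O(\log^{(0)}n)$ from \Cref{prop:ICALP17algo}. For the inductive step, assuming $\alpha_i \le C_1(i\log k + \log^{(i)}n)$ for an appropriate absolute constant $C_1$, plug into the recurrence: $\alpha_{i+1} \le C(\log k + \log\alpha_i) \le C\log k + C\log\!\big(C_1(i\log k + \log^{(i)}n)\big)$. The logarithm of a sum is at most the log of twice the max, so this is at most $C\log k + C\log(2C_1) + C\log i + C\log\log k + C\log^{(i+1)}n$, and the terms $C\log k + C\log(2C_1) + C\log i + C\log\log k$ are all absorbed into $O((i+1)\log k)$ (using $\log i \le i \le (i+1)\log k$ and $\log\log k \le \log k$), giving $\alpha_{i+1} = O((i+1)\log k + \log^{(i+1)}n)$, which closes the induction provided $C_1$ is chosen large enough relative to $C$. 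The one point requiring care is that this bookkeeping only makes sense while $\log^{(i)}n \ge 1$, i.e. while $i \le \log^*n$, which is why the lemma is stated for $i \le \log^*n - 2$; outside that range $\log^{(i)}n$ is no longer a useful quantity and one should switch over to the halving argument, exactly as in the second paragraph above.
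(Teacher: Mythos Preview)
Your proposal is correct and follows essentially the same approach as the paper: invoke \Cref{lem:first-recursive-algos} at $i=\log^*n-2$ to reach stretch $O(\log k\cdot\log^*n)$, then apply the halving argument from the proof of \Cref{thm:correlated-sampling} for a further $\log_{3/2}\big(O(\log^*n)\big)=O(\log\log^*n)$ levels. Your inductive sketch for \Cref{lem:first-recursive-algos} is also along the right lines; the paper's appendix proof is slightly cleaner, using the inequality $\log(x+y)\le \log x + y$ for $x,y\ge 2$ (applied with $x=\log^{(i)}n$ and $y=(i-1)A$, $A=\log(Ck)$) to avoid splitting the log into several terms and tracking each separately.
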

\begin{rem}
    One can show that the more involved recurrence mentioned in \Cref{better-recurrence} converges exponentially faster: it allows to double the number of iterated logarithms in each level, thus improving on \Cref{lem:first-recursive-algos}, yielding a stretch $\alpha_i = O(i\cdot \log k + \log^{(\mathbf{2^i})}n)$ for $i = O(\log \log^*n)$, and thus this number of recursive calls (asymptotically) suffices for a stretch of $O(\log k)$. 
    For simplicity of exposition, we omit the details.
\end{rem}

In the remainder of this section we discuss the computational ramifications of \Cref{thm:logstartheorem} in various computational models, the most immediate one being the classic centralized setting. In all the following theorems, we note that 

\begin{thm}[Near-Input-Sparsity Sample Time]
    There exists an $O(\log k)$-stretch correlated sampling algorithm for $\Delta_{n,k}$ with sample time $O(nnz\cdot \log^*n)$ in the worst case.
\end{thm}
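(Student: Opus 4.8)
The plan is to bootstrap from the recursive correlated-sampling algorithm guaranteed by \Cref{thm:logstartheorem} together with near-input-sparsity implementations of the three base primitives ($\calA_1$ of \Cref{prop:k=1}, $\calA_k$ of \Cref{prop:ICALP17algo}, and the hashing machinery of \Cref{alg:rounding-compression-init,alg:rounding-compression,alg:rounding-compression-project}), and then argue that the total cost of one sampling step, summed over all $O(\log^* n)$ levels of recursion, is $O(nnz \cdot \log^* n)$. The key structural point is that each level of the recursion only ever \emph{feeds the (at most) $nnz$ nonzero coordinates of $\vec x$ forward}: \Cref{alg:rounding-compression} reads the nonzeros of $\vec x$ (iterating through them in the assumed sparse representation, which the preliminaries promise costs $O(nnz)$), tests for hash collisions, runs \textsc{compress} which touches each nonzero coordinate $O(1)$ times, and either recurses on $\vec x$ itself (same sparsity) or calls $\calA_k$ on a vector $\hat{\vec x}\in\Delta_{m^3,k}$ whose support has size at most $nnz$ (each bucket contributes at most one representative, each large coordinate one entry). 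So the ``work per level'' never exceeds $O(nnz)$ up to the cost of the primitives.

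\textbf{First} I would pin down the per-primitive costs. \Cref{prop:k=1} gives $\calA_1$ in $O(nnz)$ sample time, and in \textsc{compress} it is invoked once per nonempty bucket on disjoint slices of the coordinates, so the total over all buckets telescopes to $O(nnz)$. \Cref{prop:ICALP17algo} gives $\calA_k$ in $O(nnz')$ sample time where $nnz' = nnz(\hat{\vec x}) \le nnz$; as noted in the paragraph after \Cref{prop:ICALP17algo}, the input-sparsity implementation of \cite{chen2017correlated} relies on a constant-time lowest-common-ancestor query so that only the $O(nnz')$ relevant nodes of the postorder traversal are visited. The hashing, threshold draws, and the final inversion of the bijection from $[m^3]$ back to $[n]$ in \Cref{line:good-output} are each $O(nnz)$, using the $O(1)$-time bitwise operations assumed in the preliminaries. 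Hence \Cref{alg:rounding-compression} costs $O(nnz)$ at a single level, \emph{not counting} the recursive call it makes via $\calA$ in \Cref{line:bad-output-collision,line:bad-output-heavy-load}.

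\textbf{Then} I would sum over the recursion tree. By \Cref{thm:logstartheorem}, the recursive construction $\calA'_0,\calA'_1,\dots,\calA'_i$ with $i = \log^* n + O(\log\log^* n) = O(\log^* n)$ already achieves stretch $O(\log k)$. On a given input $\vec x$, running $\calA'_i$ performs the $O(nnz)$ work of \Cref{alg:rounding-compression} at level $i$, and then \emph{at most one} recursive call into $\calA'_{i-1}$ on the \emph{same} vector $\vec x$ (the fall-back only occurs in the bad case, and it is a single call, not a branching). The recursion therefore forms a chain of depth $O(\log^* n)$, each node costing $O(nnz)$, except the leaf $\calA'_0 = \calA_k$ of \Cref{prop:ICALP17algo} which also costs $O(nnz)$. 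Summing the geometric-length-free chain gives total sample time $O(nnz)\cdot O(\log^* n) = O(nnz\cdot\log^* n)$. I would state the bound as worst-case, noting that the hash-collision and heavy-bucket tests are deterministic given the drawn randomness, so no high-probability caveat is needed for the running time (the randomness only affects \emph{which} branch is taken, and both branches cost $O(nnz)$).

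\textbf{The main obstacle} is the bookkeeping that each level's auxiliary vector $\hat{\vec x}$ really has support of size $O(nnz)$ and can be \emph{built and indexed} in $O(nnz)$ time despite living in the ambient dimension $m^3$ which may exceed $nnz$ — i.e., we must never materialize the zero coordinates of $\hat{\vec x}$. This is handled by keeping $\hat{\vec x}$ in the same sparse representation (a list of (index, value) pairs sorted by the hash values $R_b$ and $C_i$, which are drawn in \Cref{alg:rounding-compression-init}), which requires the $C_i$'s and $R_b$'s for the relevant coordinates to be retrievable in $O(1)$ each and the resulting list to be sortable in $O(nnz)$ — the latter via the assumption (or a radix sort over the $O(\poly(k,\alpha))$-sized hash range, cheap since $m^3 = \poly(k)$ independent of $n$). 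A secondary subtlety is that the iterated-logarithm recursion depth is a function of $n$, so strictly speaking ``$O(\log^* n)$ levels'' must be read as: the depth is bounded, the per-level cost is $O(nnz)$ uniformly, and no single level's primitive cost secretly depends on $n$ beyond the $nnz$ term — which we verified above since $\calA_k$ runs in dimension $m^3 = \poly(k)$, not $n$, once we reach any level $i\ge 1$.
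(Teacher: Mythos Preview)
Your proposal is correct and follows essentially the same approach as the paper: invoke \Cref{thm:logstartheorem} to bound the recursion depth by $O(\log^* n)$, argue that each level's work (compression via $\calA_1$ on disjoint bucket slices, the single $\calA_k$ call, and the hashing/lifting) is $O(nnz)$, and sum along the recursion chain. Your write-up is considerably more detailed than the paper's four-sentence proof---in particular you make explicit that the recursion is a \emph{chain} (one fallback call per level, on the same $\vec x$) rather than a branching tree, and you flag the need to keep $\hat{\vec x}$ in sparse form; one small slip is the claim that ``$m^3 = \poly(k)$ independent of $n$,'' which is not true at the deepest levels where $\alpha = O(\log n)$, but this is harmless since your earlier (correct) observation that $\calA_k$ runs in $O(nnz')$ time regardless of ambient dimension is what actually carries the argument.
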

\begin{proof}
    By \Cref{thm:logstartheorem}, we can focus on implementing $\calA'_i$ for some $i=O(\log^*n)$.
    Compression in each of the $O(\log^*n)$ levels of the recurrence requires $O(nnz)$ time, since we spend time proportional to each bucket's vector, by \Cref{prop:k=1}.
    The final call to the rounding algorithm of \Cref{prop:ICALP17algo} (for whichever dimension $d\leq n$ we end up using)
    takes a further $O(nnz)$ time. Finally, returning from the recurrence and lifting the output subset to a larger universe requires $O(nnz)$ time in each level, for a further $O(nnz\cdot \log^*n)$ time.
\end{proof}

Our recurrence depth together with implementations of the basic correlated sampling algorithms (\Cref{prop:k=1,prop:ICALP17algo}) also implies that our algorithm can be (i) parallelized (at the cost of increasing the sample time to be slightly superlinear in $n$), with depth near logarithmic. Similarly, and that it can (ii) be dynamized with the near-logarithmic same update time. (See \Cref{appendix:correlated-sampling}).
\begin{restatable}{thm}{parallel}\emph{(Parallel Implementation)\textbf{.}}
   There exists an $O(\log k)$-stretch correlated sampling algorithm for $\Delta_{n,k}$ with sampling time  $O(n\cdot \log^*n)$ and depth $O(\log n\cdot \log^*n)$.
\end{restatable}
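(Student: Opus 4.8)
The plan is to parallelize the recursive construction $\calA'_i$ by parallelizing each of its three ingredients---the initialization, the compression step, and the lifting of output subsets---level by level. By \Cref{thm:logstartheorem}, it suffices to implement $\calA'_i$ for some $i = O(\log^*n)$, so the total depth will be $O(\log^*n)$ times the depth of a single level. First I would handle the initialization (\Cref{alg:rounding-compression-init}): drawing the hashes $C_i, R_b$, partitioning coordinates among buckets uniformly at random, and drawing $\sigma,\tau$ can all be done in parallel in $O(1)$ depth (spreading randomness over sampling steps as the interface paragraph notes), and the calls to initialize $\calA_1, \calA, \calA_k$ are handled recursively. The collision check in \Cref{line:collision} and the heavy-bucket check in \Cref{line:heavy-bucket} are each a test over $[m^3]$ (respectively $[m]$) that reduces to sorting/bucketing the relevant hash values and then taking a parallel OR of counts---this is $O(\log n)$ depth and $O(n)$ work.

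Next I would parallelize \textsc{compress} (\Cref{alg:rounding-compression-project}). The loop over buckets $b=1,\dots,m$ is embarrassingly parallel across buckets. Within a bucket, computing $\sum_{i\in S_b} x_i$ is a parallel prefix-sum/reduction in $O(\log(nnz))=O(\log n)$ depth, and the single invocation of $\calA_1$ on the normalized restricted vector is handled by \Cref{prop:k=1}, which gives $O(\log n)$ depth and $O(n)$ work; summing this over all buckets keeps the work $O(n)$ since the buckets partition the coordinates. Compressing the small items (\Cref{line:compression-end}) and building $\hat{\vec{x}}$ over $[m^3]$ is then another parallel scatter in $O(\log n)$ depth. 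Finally, the lifting in \Cref{line:good-output}---forming $\{i\mid C_i\in T\}\cup\bigcup_{b:R_b\in T} REP_b$---is a parallel filter against the set $T$ (which has size at most $k$), again $O(\log n)$ depth and $O(n)$ work. The fall-back branch runs $\calA'_{i-1}$ (or the base algorithm $\calA$ of \Cref{prop:ICALP17algo}), handled by induction and by \Cref{prop:ICALP17algo}, which is stated to admit an $O(\log n)$-depth, $O(n)$-work implementation.

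Composing these across the $i = O(\log^*n)$ levels: each level contributes $O(\log n)$ depth, so the total depth is $O(\log n\cdot \log^*n)$; and each level does $O(n)$ work, so the total work---hence the (sequential) sampling time when the parallel algorithm is simulated---is $O(n\cdot \log^*n)$, matching the theorem statement. The slight superlinearity in $n$ (rather than $nnz$) comes from the fact that a work-efficient parallel implementation must touch all $n$ coordinates to do the random partitioning and scatter operations obliviously, rather than iterating only over nonzeros as in the sequential near-input-sparsity version.

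The main obstacle I anticipate is ensuring \emph{work-efficiency} while parallelizing: a naive parallel implementation of the collision check or the heavy-bucket check over $[m^3]$ buckets could blow up the work to $\Omega(m^3) = \poly(k)$ per level even when $nnz$ is tiny, and naively re-reading or re-hashing across $O(\log^*n)$ recursion levels risks an extra $\poly(k)$ or extra logarithmic factor. The fix is to note that at every level only $O(nnz)$ (at the top level) or $O(d)\le\poly(k)$ (at deeper levels, where the working dimension is $d=m^3$) many coordinates are active, to perform all bucketing via parallel integer sort on these active coordinates alone, and to charge the $O(n)$ term only to the single top-level pass that does the random partition of the original coordinates. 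Getting this accounting right---so that the per-level work telescopes to $O(n)$ total rather than $O(n\log^*n)$ or $O(\poly(k)\cdot\log^*n)$---is the delicate part; everything else is a routine translation of the sequential algorithm into standard PRAM primitives (prefix sums, parallel sort, parallel filter) of $O(\log n)$ depth.
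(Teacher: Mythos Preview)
Your proposal is correct and follows essentially the same approach as the paper: parallelize each of the $O(\log^*n)$ recursion levels using the $O(\log n)$-depth, $O(n)$-work implementations of $\calA_1$ and $\calA_k$ from \Cref{prop:k=1} and \Cref{prop:ICALP17algo}, together with standard parallel primitives for the hashing, collision, and heavy-bucket checks, and then multiply by the recursion depth. The paper's proof is considerably terser than yours but makes exactly these points.

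One minor remark: your final paragraph's ``delicate part''---getting the per-level work to telescope to $O(n)$ total rather than $O(n\cdot\log^*n)$---is aiming for more than the theorem requires. The stated sampling time is $O(n\cdot\log^*n)$, so $O(n)$ work per level across $O(\log^*n)$ levels already suffices; there is no need for any telescoping. Similarly, your concern about $\Omega(m^3)$ work for the collision check is easily avoided (as you note) since only $|L|+m = O(m)$ hash values are ever in play, so sorting those rather than scanning $[m^3]$ keeps the work well under $O(n)$ per level.
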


\begin{restatable}{thm}{dynamic}\emph{(Dynamic Implementation)\textbf{.}}
   There exists an $O(\log k)$-stretch correlated sampling algorithm for $\Delta_{n,k}$ with update time  $O(\log n \cdot \log^*n)$.
\end{restatable}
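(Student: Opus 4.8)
The plan is to prove the Dynamic Implementation theorem by tracking how a single coordinate update $x_i \gets x_i'$ propagates through the $O(\log^* n)$ levels of the recursive construction, showing that each level costs only $O(\log(nnz))=O(\log n)$ work to repair. First I would recall the relevant dynamic data structures underlying the two base algorithms: by \Cref{prop:k=1}, the $k=1$ algorithm admits $O(\log(nnz))$ update time, and by \Cref{prop:ICALP17algo}, the \cite{chen2017correlated}-based algorithm on any dimension $d\le n$ admits $O(\log n)$ update time; these will be maintained as black-box dynamic subroutines, one per recursion level, together with auxiliary balanced-BST (or Fenwick-tree) structures maintaining, for each bucket $b$, the set $S_b$ and the running sum $\sum_{i\in S_b}x_i$, so that the ``heavy bucket'' test in \Cref{line:heavy-bucket} and the collision test in \Cref{line:collision} can be re-evaluated on the fly.

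The main structural observation I would establish is that changing one coordinate $x_i$ to $x_i'$ affects only a bounded amount of state \emph{at each level} of \Cref{alg:rounding-compression}: (i) it changes membership of $i$ in $S$ versus $L$ for at most that one coordinate, which affects at most one bucket $\calB_b$ and at most one hash cell; (ii) it changes $\sum_{j\in S_b}x_j$ for the single bucket $b\ni i$, hence possibly flips the heavy-bucket predicate for that one bucket (and nothing else), and possibly flips the collision predicate at $O(1)$ hash cells; (iii) inside \textsc{compress}, it changes the normalized vector $\frac{x_j}{\sum_{j\in S_b}x_j}\mathds{1}[j\in S_b]$ fed to $\calA_1$ for that one bucket --- but this is a \emph{scaling} of all coordinates in $S_b$, which naïvely touches $|S_b|$ coordinates. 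The key point to argue is that the dynamic $k=1$ algorithm of \Cref{prop:k=1} can absorb a uniform rescaling of its input (plus a single-coordinate change) in $O(\log(nnz))$ time, because in the standard implementations (inverse-CDF / exponential-clock style) a global rescale corresponds to a monotone reparametrization that can be pushed lazily; alternatively, one maintains $S_b$ keyed by \emph{unnormalized} values and only the single changed coordinate plus $O(\log |S_b|)$ prefix-sum nodes need updating. Either way, the representative $REP_b$ changes for at most $O(1)$ buckets, hence $\hat{\vec x}$ changes in $O(1)$ coordinates, so the downstream call to $\calA_k$ in \Cref{line:good-case} receives an $O(1)$-coordinate update, costing $O(\log m^3)=O(\log k)=O(\log n)$; and when a predicate flips so that the algorithm switches between \Cref{line:good-output} and \Cref{line:bad-output-collision}/\ref{line:bad-output-heavy-load}, we must also feed the update to the fallback algorithm $\calA=\calA'_{i-1}$ at that level --- but that is exactly the recursive call, and it too only ever sees an $O(1)$-coordinate change, so the recursion on update time is $U_i \le U_{i-1} + O(\log n)$, giving $U_i = O(\log n)\cdot \log^* n$ after $i=O(\log^* n)$ levels by \Cref{thm:logstartheorem}.

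Finally I would assemble these pieces: maintain, at every one of the $O(\log^* n)$ recursion levels, the bucket-sum/heavy-bucket structures, the collision-cell counters, the dynamic $\calA_1$ instance per active bucket, one dynamic $\calA_k$ instance, and one dynamic fallback $\calA'_{i-1}$ instance; on an update to $x_i$, walk down the recursion, at each level perform $O(1)$ predicate re-evaluations, $O(1)$ calls each costing $O(\log n)$ to the dynamic subroutines, and pass an $O(1)$-coordinate update to the next level. Summing over $O(\log^* n)$ levels yields the claimed $O(\log n\cdot \log^* n)$ update time. I expect the main obstacle to be precisely step (iii): arguing rigorously that a \emph{global renormalization} of the input to the dynamic $k=1$ correlated sampler costs only $O(\log(nnz))$ rather than $\Theta(|S_b|)$ --- this requires either exhibiting the right lazy/unnormalized bookkeeping inside the \Cref{prop:k=1} algorithm, or observing that the representative's distribution $x_j/\sum_{j\in S_b}x_j$ is scale-invariant so that the dynamic structure can simply store unnormalized weights and a single cumulative-sum tree, and that the common random seed interacts with this reparametrization monotonically; I would flag that this is where the careful argument lives and defer the low-level verification to \Cref{appendix:correlated-sampling}.
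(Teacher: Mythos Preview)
Your proposal is correct and follows essentially the same approach as the paper's (sketched) proof: maintain at every recursion level the dynamic instances of $\calA_1$, $\calA_k$, and the fallback $\calA'_{i-1}$, argue that a single-coordinate change to $\vec{x}$ induces only $O(1)$ coordinate changes to $\hat{\vec{x}}$ and to the predicate tests, and sum $O(\log n)$ work over $O(\log^* n)$ levels. In particular, the obstacle you flag in step~(iii)---that the normalization inside \Cref{line:KT} seems to touch all of $S_b$---is exactly the ``only delicate point'' the paper singles out, and the paper resolves it precisely by your second suggestion: the exponential-clocks algorithm $\arg\min_i R_i/x_i$ of \Cref{alg:exp-round} is scale-invariant, so one can store unnormalized weights and avoid the rescaling entirely.
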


\subsection{Submodular Dominance}\label{sec:submodularity}

Since correlated sampling algorithms preserve marginals (Property \ref{prop:marginals}), they natural preserve linear objectives. In this section we show that our rounding scheme also preserves \emph{subomdular} objectives, capturing the ubiquitous notion of diminishing returns. We recall some basic definitions.

\subsubsection{Submodularity: Technical Background}

\begin{Def}A set function $f:2^E\to \mathbb{R}$ is \emph{submodular} if it satisfies the \emph{diminishing returns} property, namely 
$$f(e \mid S) \geq f(e \mid T) \qquad \forall S\subseteq T\subseteq E\setminus \{e\},$$
where $f(e \mid S)\triangleq f(S\cup \{e\}) - f(S)$ denotes the \emph{marginal value} (measured by $f$) of adding $e$ to $S$.
\end{Def}

As we are interested in rounding-based algorithms, we need a way to extend (binary-valued) submodular functions to real vectors. We One natural such extension, corresponding to independent rounding, is the \emph{multilinear extension} of \cite{calinescu2011maximizing}. 
\begin{Def}\label{def:multilinear}
    The \emph{multilinear extension} $F:[0,1]^E\to \mathbb{R}$ of a set function $f:2^E\to \mathbb{R}$ is~given~by $$F(\vec{x})\triangleq \sum_{S\subseteq E} f(S) \prod_{i\in S} x_i \prod_{i\not\in S} (1-x_i).$$
\end{Def}

The maximum multilinear objective subject to any solvable polytope (i.e., a polytope over which one can optimize linear objectives efficiently) can be $(1-1/e)$-approximated in polytime \cite{calinescu2011maximizing}. This is optimal even subject to cardinality constraints, under standard complexity-theoretic assumptions \cite{feige1998threshold} and information-theoretically given only value oracle access to $f$ \cite{nemhauser1978best,mirrokni2008tight}.

Given the above approximability of the multilinear extension, it is natural to wish to round vectors $\vec{x}$ while preserving the constraints and preserving the multilinear objective. 
This is known to be achievable for matroid constraints and other constraints if $f$ is known \cite{calinescu2011maximizing,chekuri2010dependent}.
An \emph{objective-oblivious} counterpart to the above is given by the following definition.

\begin{Def}\label{def:submod-dom}
    A random vector $\vec{X}\in \{0,1\}^E$ satisfies \emph{submodular dominance} if for any submodular  function $f:2^E\to \mathbb{R}$, we have that
    $$\E[f(\vec{X})]\geq F(\E[\vec{X}]).$$
\end{Def}

It is known that tree-based pivotal sampling \cite{srinivasan2001distributions} (the core algorithm used by \cite{chen2017correlated}) satisfies submodular dominance.
This follows since tree-based pivotal sampling satisfies \emph{negative association} (and more) \cite{branden2012negative,dubhashi2007positive,dubhashi1996negative,byrka2018proportional}, and the latter is known to imply submodular dominance \cite{qiu2022submodular,christofides2004connection}.
We provide some relevant background on negative association here.

\begin{Def}
    A random vector $\vec{X}$ is \emph{negatively associated (NA)} if for any two increasing functions $f,g$ of disjoint variables in $\vec{X}$, we have $\Cov(f,g)\leq 0.$
\end{Def}
A simple example of the above is given by the so-called \emph{0-1 lemma} \cite{dubhashi1996balls}.
\begin{prop}\label{lem:0-1-lemma}
    Any random vector $\vec{X}\in \{0,1\}^n$ with $\sum_i X_i\leq 1$ is NA.
\end{prop}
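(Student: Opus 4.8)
The plan is to prove the so-called \emph{0-1 lemma}: if $\vec{X}\in\{0,1\}^n$ satisfies $\sum_i X_i\le 1$, then $\vec{X}$ is negatively associated. First I would recall what needs to be checked: for every pair of disjoint index sets $A,B\subseteq[n]$ and every pair of coordinatewise-increasing functions $f$ depending only on $(X_i)_{i\in A}$ and $g$ depending only on $(X_j)_{j\in B}$, we need $\Cov(f(\vec{X}_A),g(\vec{X}_B))\le 0$. The key structural fact is that under the constraint $\sum_i X_i\le 1$ at most one coordinate is ever $1$, so the sample space is essentially the $n+1$ atoms $\{e_i : i\in[n]\}\cup\{\vec 0\}$, and on each atom each of $\vec{X}_A,\vec{X}_B$ is either the all-zeros vector or a single basis vector in its block.

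**The core computation.**
The main step is a direct covariance computation exploiting that $A$ and $B$ are disjoint, so on any realization at most one of the blocks $\vec X_A,\vec X_B$ is nonzero. Write $f_0\triangleq f(\vec 0_A)$, $g_0\triangleq g(\vec 0_B)$, and for $i\in A$ let $f_i\triangleq f(\vec e_i)$, similarly $g_j$ for $j\in B$; monotonicity gives $f_i\ge f_0$ and $g_j\ge g_0$. Let $p_i\triangleq \Pr[X_i=1]$ and $q\triangleq \Pr[\vec X=\vec 0]=1-\sum_i p_i$. Then, because disjointness forces $fg$ to equal $f_0 g_0$ except on the atoms $e_i$ with $i\in A$ (where it is $f_i g_0$) or $e_j$ with $j\in B$ (where it is $f_0 g_j$),
\[
\E[fg] = f_0 g_0 + \sum_{i\in A} p_i (f_i - f_0) g_0 + \sum_{j\in B} p_j f_0 (g_j - g_0),
\]
while $\E[f] = f_0 + \sum_{i\in A} p_i(f_i-f_0)$ and $\E[g]=g_0+\sum_{j\in B} p_j(g_j-g_0)$. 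Subtracting, the cross term $\bigl(\sum_{i\in A}p_i(f_i-f_0)\bigr)\bigl(\sum_{j\in B}p_j(g_j-g_0)\bigr)$ survives with a minus sign, and since each $f_i-f_0\ge 0$ and each $g_j-g_0\ge 0$ this product is nonnegative, giving $\Cov(f,g)\le 0$. I would present this as a short displayed calculation, being careful to keep each algebraic manipulation on its own line inside a single \texttt{align*} with no blank lines.

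**Loose ends and the main obstacle.**
A couple of small points need care: the functions $f,g$ need not be $\pm\infty$-valued but must be (at least) bounded enough for the expectations to exist, which is standard and I would simply note it; and one should handle the degenerate cases where $A$ or $B$ is empty (then one of $f,g$ is constant and the covariance is trivially zero). The honest main obstacle here is essentially bookkeeping rather than mathematical depth: one must be scrupulous that ``increasing functions of disjoint variables'' really does force the product $fg$ to take the simple atom-by-atom form above — the disjointness of $A$ and $B$ is exactly what prevents any atom from contributing $f_ig_j$ with $i\ne j$ both active, and this is the one place the hypothesis $\sum_i X_i\le 1$ is used. Once that observation is in hand, the inequality is immediate. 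I would close by remarking that this lemma is the seed from which negative association of tree-based pivotal sampling (and hence, via \Cref{def:submod-dom}, submodular dominance) is built, tying it back to the earlier discussion.
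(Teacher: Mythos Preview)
Your argument is correct: the key observation that disjointness of $A$ and $B$ together with $\sum_i X_i\le 1$ forces at most one of $\vec X_A,\vec X_B$ to be nonzero on any atom is exactly right, and the covariance computation
\[
\Cov(f,g) = -\Bigl(\sum_{i\in A}p_i(f_i-f_0)\Bigr)\Bigl(\sum_{j\in B}p_j(g_j-g_0)\Bigr)\le 0
\]
is clean and complete. Note, however, that the paper does not actually prove this proposition: it is stated as the well-known ``0-1 lemma'' with a citation to \cite{dubhashi1996balls} and used as a black box. Your write-up is essentially the standard direct proof of that lemma, so there is nothing to compare against in the paper itself.
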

It is well-known that NA is closed under products \cite{joag1983negative}.
\begin{prop}\label{lem:closure-products}
    If independent vectors $\vec{X}$ and $\vec{Y}$ are NA, then so is their concatenation, $\vec{X}\circ \vec{Y}$.
\end{prop}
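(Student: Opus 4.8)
The plan is to prove negative association of the concatenation directly from the definition, by a two-step conditioning argument that peels off one block at a time (this is the classical Joag-Dev--Proschan argument behind \cite{joag1983negative}). Write $\vec{X} = (X_1,\dots,X_m)$ and $\vec{Y} = (Y_1,\dots,Y_\ell)$, so $\vec{X}\circ\vec{Y}$ has coordinate set $[m]\sqcup[\ell]$. Fix two increasing functions $f,g$ of disjoint sets of coordinates of $\vec{X}\circ\vec{Y}$; splitting each such coordinate set into its $\vec{X}$-part and its $\vec{Y}$-part, $f$ is an increasing function of $(X_A, Y_{A'})$ and $g$ of $(X_B, Y_{B'})$ with $A\cap B = \emptyset$ and $A'\cap B' = \emptyset$. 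The goal is to show $\E[fg] \le \E[f]\,\E[g]$, i.e.\ $\Cov(f,g)\le 0$.

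First I would condition on $\vec{Y}$. With $\vec{Y}$ fixed, $f$ and $g$ are increasing functions of the $\vec{X}$-coordinates in $A$ and $B$ respectively, and $A,B$ are disjoint, so NA of $\vec{X}$ gives $\E[fg \mid \vec{Y}] \le \E[f\mid\vec{Y}]\cdot\E[g\mid\vec{Y}]$ pointwise in $\vec{Y}$ (if $A$ or $B$ is empty the corresponding factor is $\vec{Y}$-measurable and this is an equality, so no special-casing is needed). Now set $\tilde f \triangleq \E[f\mid\vec Y]$ and $\tilde g\triangleq \E[g\mid\vec Y]$. Because $\vec{X}$ and $\vec{Y}$ are independent, $\tilde f = \int f(x_A, Y_{A'})\,d\mu_{X_A}(x_A)$ is a function of $Y_{A'}$ alone, and similarly $\tilde g$ of $Y_{B'}$ alone; and since $f$ is coordinatewise increasing in $Y_{A'}$, integrating out $X_A$ preserves this, so $\tilde f$ is increasing in $Y_{A'}$ and $\tilde g$ increasing in $Y_{B'}$. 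As $A'\cap B'=\emptyset$, applying NA of $\vec{Y}$ to $\tilde f,\tilde g$ gives $\E[\tilde f\,\tilde g] \le \E[\tilde f]\,\E[\tilde g] = \E[f]\,\E[g]$. Chaining, $\E[fg] = \E\big[\E[fg\mid\vec Y]\big] \le \E[\tilde f\,\tilde g] \le \E[f]\,\E[g]$, as desired; since $f,g$ were an arbitrary pair of increasing functions on disjoint coordinate blocks, $\vec X \circ \vec Y$ is NA.

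The only real content — and the step I would be most careful about — is the pair of structural facts used in the second application of NA: that $\E[f\mid\vec Y]$ (i) depends on $\vec Y$ only through the sub-block $Y_{A'}$ that $f$ actually reads, and (ii) is coordinatewise increasing in those coordinates. Both are exactly where independence of $\vec X$ and $\vec Y$ enters: (i) says conditional expectation against an independent variable is an ordinary integral over that variable's law, and (ii) follows by integrating the pointwise inequality $f(x_A,y_{A'})\le f(x_A,y'_{A'})$ (valid for $y_{A'}\le y'_{A'}$ by monotonicity of $f$) over the law of $X_A$. I would also note the routine integrability caveat — the functions in the definition of NA are taken to have finite relevant expectations (e.g.\ bounded), which is automatic in all our uses, where the vectors are $\{0,1\}$-valued — and record that the same argument extends verbatim, by induction on the number of blocks, to the concatenation of any finite collection of mutually independent NA vectors.
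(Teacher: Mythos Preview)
Your proof is correct and is exactly the classical Joag-Dev--Proschan argument. Note, however, that the paper does not actually prove this proposition: it is stated as a well-known closure property with a citation to \cite{joag1983negative}, so there is no ``paper's own proof'' to compare against---your write-up simply supplies the standard proof from that reference.
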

A more involved example of NA is the output of \cite{chen2017correlated}. This follows since the latter is the output of pivotal sampling \cite{srinivasan2001distributions,deville1998unequal} (albeit with correlated random seed) with a tree order (see \Cref{appendix:correlated-sampling}), whose output is known to satisfy NA \cite{branden2012negative,byrka2018proportional,kramer2011negative}.
\begin{prop}\label{prop:NA-pivotal}
    The output of \cite{chen2017correlated} is NA.
\end{prop}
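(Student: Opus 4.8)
The plan is to reduce Proposition~\ref{prop:NA-pivotal} to the known fact that tree-ordered pivotal sampling produces a negatively associated output, so the first step is to recall the precise form of the \cite{chen2017correlated} algorithm. That algorithm fixes a binary tree $\T$ whose leaves are the coordinates $[n]$, assigns leaf $i$ the weight $x_i$ and every internal node the sum of the weights of its descendant leaves (so the root has weight $\|\vec x\|_1 \le k$), and then performs a top-down rounding: the (possibly larger-than-one) demand at each node is randomly split between its two children so as to preserve marginals, with the realized demand at each child being its fractional value rounded up or down; the $\{0,1\}$ values that reach the leaves form the output set. Crucially, the correlated random seed of \cite{chen2017correlated} only couples the executions on different input vectors---it is what yields the stretch bound of \Cref{prop:ICALP17algo}---and has no bearing on the distribution of $\calA(\vec x)$ for a single fixed $\vec x$. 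Hence it suffices to show that for every fixed $\vec x \in \Delta_{n,k}$ the indicator vector produced by this tree-ordered rounding is NA.

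The second step is to identify this rounding as an instance of ordered pivotal sampling of \cite{srinivasan2001distributions,deville1998unequal}: one checks that the per-node split rule used by \cite{chen2017correlated} coincides with the pivotal update, and that the tree traversal is an admissible pivot order. Given this identification, NA of the output is exactly the conclusion of the results showing that tree/ordered pivotal sampling yields a negatively associated distribution \cite{branden2012negative,byrka2018proportional,kramer2011negative}, and I would present the proof by matching the \cite{chen2017correlated} update to the pivotal update line by line, so that the cited theorem applies verbatim to $\calA(\vec x)$.

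For intuition---and to make part of the argument self-contained---I would note that once a node's demand has dropped to at most one, its entire subtree selects at most one leaf, so the indicator vector of that subtree is NA by the $0$-$1$ lemma (\Cref{lem:0-1-lemma}), and disjoint subtrees combine by closure of NA under products (\Cref{lem:closure-products}). The genuine obstacle is the complementary regime near the root, where the demand can be as large as $k > 1$ and the real-valued pivotal split is not of the ``at most one child is selected'' form, so \Cref{lem:0-1-lemma} no longer applies at those nodes. Establishing that NA survives such splits is precisely the technical heart of the Br\"and\'en--Jonasson-style argument; citing it keeps the proof short, whereas a fully self-contained treatment would have to carry that induction through the top of the tree as well.
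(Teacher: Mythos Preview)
Your approach is essentially the paper's: identify the \cite{chen2017correlated} procedure as tree-ordered pivotal sampling and invoke the known NA results \cite{branden2012negative,byrka2018proportional,kramer2011negative}; the paper's ``proof'' is just the sentence preceding the proposition. One factual correction: the \cite{chen2017correlated} rounding (see \Cref{alg:tree-rounding,alg:node-rounding}) runs \emph{bottom-up}---each internal node receives two fractional values from its children, fixes one to $0$ or $1$ via the pivotal update, and passes the surviving fractional coordinate to its parent---not top-down as you describe, so the identification with pivotal sampling is literal rather than something to be checked.
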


Recall that our interest in NA is due to its implication of submodular dominance \cite{christofides2004connection,qiu2022submodular}.
\vspace{-0.35cm}
\begin{prop}\label{lem:NA->submod}
    Let $f$ be a submodular function and $\vec{X}$ an NA vector. Then, $\E[f(\vec{X})]\geq F(\E[\vec{X}])$.
\end{prop}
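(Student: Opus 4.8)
The plan is to reduce the claim to a sequence of ``independentize one coordinate at a time'' steps, each of which collapses to a single covariance inequality supplied by negative association. Write $p_i \triangleq \E[X_i]$ and let $\vec Y\in\{0,1\}^E$ be a vector of \emph{independent} coordinates with $\E[Y_i]=p_i$. By \Cref{def:multilinear}, $F(\E[\vec X])=F(\vec p)=\E[f(\vec Y)]$, so it suffices to prove $\E[f(\vec X)]\geq \E[f(\vec Y)]$. To interpolate between the two sides, fix an ordering $1,\dots,n$ of $E$ and define hybrid random vectors $\vec Z^{(j)}$, for $j=0,1,\dots,n$, whose first $j$ coordinates are fresh independent Bernoullis with means $p_1,\dots,p_j$, and whose last $n-j$ coordinates are distributed as the marginal of $\vec X$ on $\{j+1,\dots,n\}$, with the two blocks independent of each other. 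Then $\vec Z^{(0)}=\vec X$ and $\vec Z^{(n)}=\vec Y$, so it suffices to show $\E[f(\vec Z^{(j)})]\geq \E[f(\vec Z^{(j+1)})]$ for each $j$.

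For a fixed $j$, both $\vec Z^{(j)}$ and $\vec Z^{(j+1)}$ share the same first-$j$-coordinate block; condition on its realization, a set $A\subseteq\{1,\dots,j\}$ (this block is independent of $\vec X$, so conditioning on it leaves the law of $\vec X$, hence its negative association, untouched). Writing $\delta_A(B)\triangleq f(j+1 \mid A\cup B)$ for the marginal value of adding element $j+1$, and using that coordinate $j+1$ contributes $0$ or $\delta_A(B)$ according to whether it is included, a direct computation gives
\[
\E[f(\vec Z^{(j)}) \mid A] - \E[f(\vec Z^{(j+1)}) \mid A] \;=\; \Cov\!\left(X_{j+1},\, \delta_A(X_{j+2},\dots,X_n)\right),
\]
the covariance taken under the joint law of $\vec X$: the two conditional expectations differ only in whether coordinate $j+1$ is the $\vec X$-coordinate $X_{j+1}$ (jointly distributed with coordinates $j+2,\dots,n$) or an independent Bernoulli, and the common bulk term $\E_B[f(A\cup B)]$ cancels because the marginal law of $B=(X_{j+2},\dots,X_n)$ is identical in the two hybrids.

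It remains to see this covariance is nonnegative. By submodularity (diminishing returns), $B\mapsto \delta_A(B)=f(j+1\mid A\cup B)$ is non-increasing in $B$, so $-\delta_A(X_{j+2},\dots,X_n)$ is an \emph{increasing} function of the coordinates $\{j+2,\dots,n\}$ of $\vec X$, while $X_{j+1}$ is (trivially) an increasing function of the disjoint coordinate $\{j+1\}$. Negative association of $\vec X$ then gives $\Cov(X_{j+1},\,-\delta_A(X_{j+2},\dots,X_n))\leq 0$, i.e.\ the displayed covariance is $\geq 0$. Taking expectations over $A$ yields $\E[f(\vec Z^{(j)})]\geq \E[f(\vec Z^{(j+1)})]$, and telescoping over $j=0,\dots,n-1$ gives $\E[f(\vec X)]\geq \E[f(\vec Y)]=F(\E[\vec X])$.

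As for where the difficulty lies: no single step is hard, but the point needing care is the bookkeeping in the hybrid construction, namely verifying that the marginal of $\vec X$ on the ``tail'' coordinates is genuinely unchanged when coordinate $j+1$ is swapped from correlated to independent, so that the bulk terms cancel and only the pairwise covariance survives. Everything else is the $n=2$ identity $\E[f(\vec X)]-F(\E[\vec X]) = d\cdot\Cov(X_1,X_2)$, with $d=f(\{1,2\})-f(\{1\})-f(\{2\})+f(\emptyset)\leq 0$, promoted to general $n$ one coordinate at a time. I note that a naive Möbius/Fourier-expansion approach fails, since the higher-degree Möbius coefficients of a submodular $f$ need not be sign-definite; the hybrid argument sidesteps this by only ever exposing a \emph{pairwise} interaction (coordinate $j+1$ against the rest) at each step.
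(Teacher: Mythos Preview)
Your argument is correct. The paper itself does not prove this proposition: it states it as a known fact with citations to \cite{christofides2004connection,qiu2022submodular} and moves on. So there is no ``paper's own proof'' to compare against beyond those references.

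Your hybrid/interpolation argument---replacing the coordinates of $\vec X$ one at a time by independent Bernoullis with matching means, and observing that each step's loss equals a covariance $\Cov(X_{j+1},\delta_A(X_{j+2},\dots,X_n))$ between an increasing function of one coordinate and a decreasing function of the remaining ones---is exactly the standard proof of this fact, and is essentially what appears in the cited literature (Christofides--Vaggelatou prove a more general comparison statement for supermodular-order functions via the same one-coordinate-at-a-time replacement). Your bookkeeping is clean: the tail marginal on $\{j+2,\dots,n\}$ is indeed unchanged across the two hybrids, so the bulk term $\E[f(A\cup B)]$ cancels as you claim, and conditioning on the independent block $A$ does not disturb the law (or the NA property) of $\vec X$. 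The final paragraph's remark that a naive M\"obius expansion fails because higher-order coefficients of a submodular $f$ need not be sign-definite is also on point and explains why the hybrid route is the right one.

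One minor presentational note: the statement as written in the paper does not explicitly say $\vec X\in\{0,1\}^E$, but this is implicit (since $f(\vec X)$ is otherwise undefined) and matches the surrounding \Cref{def:submod-dom}; your proof correctly treats it as such.
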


\subsubsection{Submodular Dominance of \texorpdfstring{\Cref{alg:rounding-compression}}{Algorithm \ref{alg:rounding-compression}}}

In this section we show that the output of \Cref{alg:rounding-compression} satisfies submodular dominance, 
using the connection of the latter to negative association.

\begin{thm}\label{thm:correlated-sampling-with-submod}
    The output of \Cref{alg:rounding-compression} with $\calA$ and $\calA_k$ satisfying submodular dominance itself satisfies submodular dominance.
\end{thm}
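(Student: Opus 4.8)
The plan is to show that the output of \Cref{alg:rounding-compression} is, conditioned on any realization $r$ of the initialization randomness $\calR$ from \Cref{alg:rounding-compression-init}, a negatively associated $\{0,1\}$-vector; negative association then implies submodular dominance by \Cref{lem:NA->submod}, and the unconditional statement follows since submodular dominance is preserved under mixtures. Indeed, if $\E[f(\vec X)\mid \calR=r]\geq F(\E[\vec X\mid \calR=r])$ for every $r$, then averaging over $r$ and using the fact (from \Cref{obs:conditional-marginlas}) that $\E[\vec X\mid \calR=r]=\vec x$ for every $r$ gives $\E[f(\vec X)]\geq F(\vec x)=F(\E[\vec X])$. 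So it suffices to prove conditional negative association.

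Next I would condition on $\calR=r$ and split into the three cases determined by which line produces the output. If $\calR=r$ forces termination in \Cref{line:bad-output-collision} or \Cref{line:bad-output-heavy-load}, the output is exactly $\calA(\vec x)$, which satisfies submodular dominance by hypothesis (and the conditional statement for this $r$ follows directly, without invoking NA). The interesting case is termination in \Cref{line:good-output}, where $\vec X = \{i: C_i\in T\}\cup\bigcup_{b:R_b\in T}REP_b$ with $T=\calA_k(\hat{\vec x})$. Here I would argue as follows. First, $\calA_k$'s output $\mathbf{1}[T]$ is NA by \Cref{prop:NA-pivotal} (it is the output of \cite{chen2017correlated}); equivalently, assume $\calA_k$ satisfies submodular dominance — but note we actually want NA here, so I would instead invoke \Cref{prop:NA-pivotal} directly, or assume $\calA_k$'s output is NA. Second, for each representative bucket $b$, the indicator vector of $REP_b$ over $S_b$ is a $\{0,1\}$-vector with at most one $1$, hence NA by the $0$--$1$ lemma (\Cref{lem:0-1-lemma}); these vectors for distinct $b$, together with $\mathbf{1}[T]$, are mutually independent given $\calR=r$ (the $REP_b$'s use fresh $\calA_1$-randomness on disjoint coordinate sets, and $\hat{\vec x}$ is independent of the $R_b$'s given $r$), so by closure of NA under independent products (\Cref{lem:closure-products}) the concatenation $\big(\mathbf{1}[T],(\mathbf{1}[REP_b])_b\big)$ is NA. Third, $\vec X$ is obtained from this NA vector by a coordinatewise-monotone map: coordinate $i\in[n]$ with $i\notin S$ equals the single coordinate $\mathbf{1}[C_i\in T]$ (a monotone function of $\mathbf{1}[T]$), and coordinate $i\in S\cap\calB_b$ equals $\mathbf{1}[i\in REP_b]\cdot\mathbf{1}[R_b\in T]$, a product of two monotone indicators over disjoint variable sets, hence monotone nondecreasing in the NA vector. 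Since NA is preserved under taking monotone nondecreasing functions of disjoint blocks of variables, $\vec X\mid(\calR=r)$ is NA, as desired.

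The main obstacle I anticipate is the disjointness bookkeeping needed to apply the NA-closure properties: one must check that the blocks of underlying randomness feeding the various indicators $\mathbf{1}[C_i\in T]$ (all depending on the common $T$) and $\mathbf{1}[i\in REP_b]$ and $\mathbf{1}[R_b\in T]$ are organized so that each final coordinate $X_i$ is a monotone function of a block, and that distinct coordinates either share the block $\mathbf{1}[T]$ (fine, since we take a monotone function of the whole NA vector) or use genuinely disjoint fresh randomness. Care is needed because the $R_b\in T$ event ties a representative's coordinate to the same $T$ used by the non-representative coordinates, so one cannot claim independence across all coordinates — but one does not need to: NA only requires that the final vector is a monotone image of a single NA vector, with each output coordinate a monotone function of the inputs, which is exactly the structure here. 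A secondary point to state carefully is that in the "bad" case we get submodular dominance for that $r$ directly from the hypothesis on $\calA$ rather than via NA, so the mixture argument must combine "conditionally-NA" $r$'s and "conditionally-submodular-dominant" $r$'s — both yield $\E[f(\vec X)\mid\calR=r]\ge F(\vec x)$, and that is all the averaging step uses. Full details are deferred to \Cref{appendix:correlated-sampling}.
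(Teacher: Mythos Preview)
Your overall plan---condition on $\calR=r$, handle the bad case by hypothesis, and verify a pointwise inequality $\E[f(\vec X)\mid\calR=r]\ge F(\vec x)$ that then averages to the conclusion---is sound and matches the paper's outer structure. The gap is in the good-case NA argument. You invoke the closure of NA under monotone functions of \emph{disjoint} variable blocks, but the blocks you describe are not disjoint: for two small items $j,j'\in S_b$ in the same bucket, both $X_j=\mathbf{1}[j\in REP_b]\cdot\mathbf{1}[R_b\in T]$ and $X_{j'}=\mathbf{1}[j'\in REP_b]\cdot\mathbf{1}[R_b\in T]$ depend on the \emph{same} coordinate $\mathbf{1}[R_b\in T]$ of $\mathbf{1}[T]$. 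So the standard closure lemma does not apply as you use it. (Your parenthetical that sharing $\mathbf{1}[T]$ is ``fine, since we take a monotone function of the whole NA vector'' conflates two different statements: the closure lemma requires per-output-coordinate disjointness of the input blocks, not merely that every output is some monotone function of the NA vector.) The conclusion that $\vec X\mid(\calR=r)$ is NA does seem to be true, but it needs a genuine extra step---e.g.\ condition on the representatives $(REP_b)_b$, use the covariance decomposition $\Cov(f,g)=\E[\Cov(f,g\mid REP)]+\Cov(\E[f\mid REP],\E[g\mid REP])$, and bound each term using NA of $\mathbf{1}[T]$ and of $(\mathbf{1}[REP_b])_b$ respectively.

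The paper sidesteps this entirely by a different decomposition. Rather than prove NA of the final output, it defines the auxiliary function $\widehat f(T)\triangleq\E_{REP}[f(S(T))]$, notes that $\widehat f$ is submodular (it is an average, over $REP$, of relabeled restrictions of $f$), and applies submodular dominance of $\calA_k$ to obtain $\E[f(ALG(\vec x))\mid\calR=r]=\E[\widehat f(T)]\ge\widehat F(\hat{\vec x})$. Unwinding $\widehat F(\hat{\vec x})$, the right-hand side is $\E[f(S)]$ for the distribution that samples each large item independently with probability $x_i$ and, independently for each bucket $b$, includes at most one small item. \emph{That} distribution is NA directly by \Cref{lem:0-1-lemma} and \Cref{lem:closure-products}---no shared variables, no disjointness bookkeeping---so $\E[f(S)]\ge F(\vec x)$ and the proof closes. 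Note also that the paper's route only uses the stated hypothesis (submodular dominance of $\calA_k$), whereas your route requires the stronger property that $\calA_k$'s output is NA; this is true for the specific $\calA_k$ of \Cref{prop:ICALP17algo} via \Cref{prop:NA-pivotal}, but it is more than the theorem assumes.
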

\begin{proof}
Fix a realization $r$ of $\calR$. This in particular fixes the definition of large and small items, the buckets and the mappings.\todo{double check this references right lines}
If $r$ is such that we return a set in \Cref{line:bad-output-collision} or \Cref{line:bad-output-heavy-load}, then our output is $\calA_k(\vec{x})$, which satisfies submodular dominance by \Cref{prop:NA-pivotal}, and so by Property \ref{prop:marginals} of $\calA_k$, $$\E[f(ALG(\vec{x})) \mid \calR=r] = \E[f(\calA_k(\vec{x})) \mid \calR=r] \geq F(\E[\calA_k(\vec{x}) \mid \calR=r]) = F(\vec{x}).$$
Otherwise, given $T=\calA(\widehat{\vec{x}})$, our \Cref{alg:rounding-compression} outputs in \Cref{line:good-output} a set $$S(T)\triangleq \left\{i\mid C_i\in T\right\}\cup \bigcup_{b:R_b\in T} Rep_b.$$
To argue submodular dominance, we define the following auxiliary submodular function:
$$\widehat{f}(T) \triangleq \sum_{\substack{(i_1,\dots,i_r)\in (b_1,\dots,b_r):\\ T\cap \bigcup_b \{R_b\} =\{R_{b_1},\dots, R_{b_r}\}}} f\left((S(T)\cap L) \cup\bigcup_{j=1}^r \{i_j\}\right)\prod_{j=1}^r \frac{x_{i_j}}{\widehat{x}_{R_b}}.$$
Since the output of $\calA$ is independent of $Rep_b$, for any realization of $T$ we can imagine that (independently) for each $R_b\in T$, some $Rep_b$ is drawn from $\calB_b$ according to the discrete distribution $(x_i/\widehat{x}_{R_b})_{i\in \calB_b}$, by Property \ref{prop:marginals} of $\calA_0$.
Thus $\widehat{f}(T)=\E[f(S(T))]$ according to this sampling of $\{Rep_b\}_b$. So, if we let $I\triangleq \{(i_1,\dots,i_r)\in S^r \mid i_1,\dots,i_r \textrm{ belong to distinct $\calB_b$}\}$, then taking total probability over $T$, expanding $\widehat{f}(\cdot)$, and using the subomdular dominance of $\calA$,
we have:
\begin{align*}
    \E[f(S(T))] & =\E[\widehat{f}(T)] \\
    & \geq \sum_{T\subseteq [m^3]} \widehat{f}(T)\prod_{i\in T}x_i \prod_{i\in [m^3]\setminus T}(1-x_i) \\
    & = \sum_{S\subseteq L}\sum_{(i_1,\dots,i_r)\in I} f\left(L_s\cup \bigcup_{j=1}^r \{i_j\}\right)\prod_{i\in S}x_i \prod_{i\in L\setminus S}(1-x_i)\prod_{j=1}^r x_{i_j} \prod_{b: R_b\cap \bigcup_j \{i_j\}=\emptyset} (1-\widehat{x}_{R_b}).
\end{align*}
Fortunately, this unwieldy expression has a simple interpretation: this is precisely the subomdular objective obtained by sampling each large item $i$ independently with probability $x_i$, and sampling at most one element $i$ per bucket $\calB_b$ with probability $x_i$, independently of the large items and other buckets. But by \Cref{lem:0-1-lemma,lem:closure-products}, this distribution is NA, and since NA implies submodular dominance \cite{christofides2004connection,qiu2022submodular}, we obtain submodular dominance conditioned on $\calR=r$. By \Cref{obs:conditional-marginlas}, this implies that 
$\E[f(ALG(\vec{x})) \mid \calR=r] \geq F(ALG(\vec{x}))$.
The subomdular dominance for the unconditional distribution then follows since by total expectation over $\calR$.
\end{proof}
\section{Applications}\label{sec:applications}

Recall that correlated sampling algorithms for $\Delta_n$ have found numerous varied applications. In this section we discuss three applications of our rounding scheme: (i) an application to rounding fractional online paging algorithms, which while yielding worse algorithms than known, is simple, and hints at more applications in online settings, (ii) an application to metric multi-labeling via a framework of \cite{chen2017correlated}, improving their approximation from $O(\log n)$ to $O(\log k)$, and (iii) a more complicated application to collusion-free and swift submodular welfare maximizing reallocation in dynamic settings.
We conclude the latter with a discussion of combining dimension-free bounds with a simple rounding scheme to effectively obtain constant stretch, at the cost of only nearly achieving the marginals (Property \ref{prop:marginals}), and hence incurring a $(1-\eps)$ loss in submodular objectives.

We emphasize that our main message here is the applications' variety, which we believe hints at potential future wide-ranging applications of our more general correlated sampling machinery.

% As a quick application to online algorithms, we note that our correlated sampling algorithm combined with $O(\log k)$-competitive fractional algorithms \cite{buchbinder2009design} implies an $O(\log^2k)$-competitive algorithm for paging. While better algorithms are known for this problem, this discussion should hopefully.

% Recall that in (online) paging, one is given a cache of size $k$, each slot of which can contain one of $n$ pages in memory. At each time $t$, a page is requested. 
% If the page is not in cache, a \emph{page miss} is incurred, and one must evict some page from the cache, and replace it with the requested page.

% We now turn to more elaborate applications obtaining new bounds.

% \section{Applications to Online Paging and its Generalizations}

% In this section we provide a number of applications of our rounding scheme to variants and generalizations of online paging.
% To motivate this section, we note that our correlated sapmling algorithm implies an online rounding algorithm for paging problems, as follows.

\subsection{Online Paging}
Recall that in the (online) paging problem, a cache of size $k$ is given, able to store any of $n$ pages.
At each time $t$ a page $i_t\in [n]$ is requested. If $i_t$ is not in the cache and the cache is full, this causes a \emph{cache miss} and some other page must be \emph{evicted} from the cache to make room for $i_t$.
The objective is to minimize the number of evictions (hence cache misses), compared to the hindsight-optimal solution, measured in terms of the (multiplicative) \emph{competitive ratio}.

A natural LP (linear programming) relaxation of the problem, with decision variables $y_{i,t}\in [0,1]$ corresponding to the extent to which page $i$ is in the cache at time $t$, and $z_{i,t}=|y_{i,t}-y_{i,t-1}|$ the extent to which page $i$ is evicted at time $t$, is as follows.
\begin{alignat*}{4}
    \min\; & \sum_i \sum_t z_{i,t} \\
    \mathrm{s.t.}\; & y_{i_t,t} \geq 1 && \forall t  \\
    & y_{i,t} - y_{i,t-1} \leq z_{i,t} &&\forall i,t \\
    & y_{i,t} - y_{i,t-1} \leq z_{i,t} & \quad &\forall i,t \\
    & y_{i,t}\geq 0 && \forall i,t.
\end{alignat*}

No fractional online algorithm is better than $O(\log k)$-competitive with respect to this LP, and this ratio can be attained using the online primal-dual method \cite{buchbinder2007online}.
We show that our correlated rounding algorithms provide $O(\log^2 k)$-competitive randomized integral algorithms. While better randomized algorithms with competitive ratio $O(\log k)$ have been known since the '90s \cite{fiat1991competitive}, this hints at broader applicability of our correlated sampling algorithm for online problems.

\paragraph{Rounding fractional caches using correlated sampling.} To round fractional caching algorithms is we use our $O(\log k)$-stretch correlated sampling \Cref{alg:rounding-compression} for $\calP_{n,k}$, abbreviated by $\calA$.
For $\vec{x}^{(t)}$  the fractional cache at time $t$, our integral cache at time $t$ is simply $\calA(\vec{x}^{(t)})$.
The random cache at each time $t$ is feasible, since (i) $|\calA(\vec{x}^{(t)})|\leq k$ by Property \ref{prop:k-uniform}, and (ii) $\Pr[i_t\in \calA(\vec{x}^{(t)})]=y_{i,t}$ by Property \ref{prop:marginals} implies that this cache contains page $i_t$ with probability one. 
On the other hand, the expected number of page misses at time $t$ is, by Property \ref{prop:distance}, at most $|\calA(\vec{x}^{(t)}) \oplus \calA(\vec{x}^{(t-1)})| = O(\log k)\cdot \norm{\vec{x}^{(t)}- \vec{x}^{(t-1)}}_1 = O(\log k)\cdot 
\sum_{i,t} z_{i,t}$. 
Therefore, by linearity of expectation, the obtained (feasible) online paging algorithm incurs at most $O(\log k)\cdot \sum_{i,t}z_{i,t}$ evictions in expectation. Therefore, since the fractional online algorithm's objective $\sum_{i,t}z_{i,t}\leq O(\log k)\cdot OPT $, the randomized algorithm is at most $O(\log^2 k)$ competitive.

\subsection{Metric Multi-Labeling}

Chen et al.~\cite{chen2017correlated} introduced the correlated sampling problem which we study. Their motivation comes from multi-label classification problems, in which labels need to be assigned to objects, e.g., news articles, given some observed data. They considered the case where multiple labels can be assigned to objects, called the {\em metric multi-labeling} problem. It
arises in various settings, e.g., classification of textual data such as web pages, semantic tagging of images and videos, and functional genomics. The assignment of labels to objects should be done in a manner that is most consistent with the observed data, from which two important ingredients are derived.
The first is an {\em assignment cost} for every (object,label) pair, reflecting a recommendation given by a local learning process which infers label preferences of objects.
The second is similarity information on pairs of objects, giving rise to {\em separation costs} incurred once different label sets are assigned to a pair of similar objects.
The goal is to find a labeling that minimizes a global cost function, while taking into account both local and pairwise information.

Chen et al. \cite{chen2017correlated} considered the setting in which the number of labels that an object can receive is at most $k\ll n$. They formulated the metric multi-labeling problem in this setting as a linear program that maps the $m$ objects into a set of (fractional) vectors $\by^1,\ldots,\by ^m\in \Delta_{n,k}$, where the $i^{th}$ entry of vector $j$ indicates the fraction of label $i$ that object $j$ receives. The objective function minimizes the global cost function of the (fractional) labeling, i.e., the sum of the assignment costs and the separation costs. 
The following lemma is proved in \cite{chen2017correlated}.
\begin{prop}
    If there is a polynomial-time $\alpha$-stretch correlated sampling algorithm for $\Delta_{n,k}$, then the metric multi-labeling problem admits a polynomial-time $\alpha$-approximation algorithm.
\end{prop}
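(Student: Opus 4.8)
The plan is to solve the LP relaxation of \cite{chen2017correlated}, obtaining fractional label vectors $\by^1,\dots,\by^m\in\Delta_{n,k}$ whose total (assignment plus separation) cost is a lower bound on the integral optimum $OPT$, and then to round all $m$ vectors \emph{simultaneously} using a single draw $\calA\sim\calD$ from the given $\alpha$-stretch correlated sampling algorithm: object $j$ is assigned the label set $L^j\triangleq\calA(\by^j)$. The entire argument is then a three-line charging of the rounded cost against the fractional cost, one property of Definition~\ref{def:correlated-sampling} per ingredient.

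Feasibility is immediate from Property \ref{prop:k-uniform}: $|L^j|=|\calA(\by^j)|\le k$ (and, if the LP guarantees $\norm{\by^j}_1\ge 1$, also $|L^j|\ge 1$), so every object receives at most $k$ labels. For the objective I would split the cost into its two parts. Writing $c_{ij}$ for the assignment cost of giving label $i$ to object $j$, Property \ref{prop:marginals} gives $\Pr[i\in L^j]=y^j_i$, so by linearity of expectation the expected assignment cost $\E\big[\sum_j\sum_{i\in L^j}c_{ij}\big]=\sum_j\sum_i c_{ij}\,y^j_i$ equals \emph{exactly} the fractional assignment cost. Writing $w_{jj'}$ for the separation weight of a similar pair $(j,j')$, Property \ref{prop:distance} gives $\E[\,|L^j\oplus L^{j'}|\,]\le\alpha\cdot\norm{\by^j-\by^{j'}}_1$; since the LP's separation term is (a fixed universal constant times) $\sum_{(j,j')}w_{jj'}\norm{\by^j-\by^{j'}}_1$ while the integral separation cost is the same constant times $\sum_{(j,j')}w_{jj'}|L^j\oplus L^{j'}|$, the expected separation cost is at most $\alpha$ times the fractional one. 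Adding the two bounds and using $\alpha\ge 1$, the expected cost of the rounded labeling is at most $\alpha$ times the LP value, hence at most $\alpha\cdot OPT$. The LP has polynomial size (and is in any case polynomial-time solvable), and the sampling algorithm runs in polynomial time by hypothesis, so the whole procedure is polynomial time; a worst-case / high-probability guarantee follows by Markov's inequality with independent repetitions, or by standard derandomization.

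I do not expect a genuine obstacle: this is the textbook ``round each object's fractional vector, charge assignment to marginals and separation to $\ell_1$ distance'' scheme, whose only point is that correlated sampling for $\Delta_{n,k}$ is precisely the primitive it requires. The items needing care are purely bookkeeping: (i) verifying that the LP of \cite{chen2017correlated} is genuinely a relaxation, so that its optimum is at most $OPT$; (ii) checking that its separation term really is, up to a universal constant consistent between the fractional and integral objectives, the $\ell_1$/symmetric-difference distance, which is what makes Property \ref{prop:distance} directly applicable; and (iii) noting that a \emph{single} common draw of $\calA$ must be shared by all $m$ objects, so that Property \ref{prop:distance} can be invoked on each pair.
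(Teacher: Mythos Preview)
Your proposal is correct and matches the paper's own argument essentially verbatim: the paper states only that (1) marginal preservation keeps assignment costs exact in expectation and (2) the $\alpha$-stretch bounds separation costs by a factor~$\alpha$, which is precisely your charging scheme. Your write-up is in fact more detailed than what the paper provides, since the proposition is attributed to \cite{chen2017correlated} and only sketched here.
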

The proof of the lemma is based on the following observations: (1) the preservation of the marginals in the correlated sampling algorithm guarantees that assignment costs are preserved in expectation; and (2) the $\alpha$-stretch implies that separation costs are preserved in expectation with a loss of a factor $\alpha$.
As discussed in detail above and in \Cref{appendix:correlated-sampling}, \cite{chen2017correlated} provided a polynomial-time $O(\log n)$-stretch algorithm, and from this they obtain an $O(\log n)$-approximate algorithm for metric multi-labeling.
Our main result that improves the stretch to $O(\log k)$ then implies the same improved approximation ratio for this problem. The latter is an asymptotic improvement of interest, as $k$ is typically much smaller than $n$ for this problem.

\subsection{Swift and Collusion-Free Reallocation}\label{sec:swf}

We now turn to our most involved application. Part of our arguments are inspired by and build upon recent unpublished work of \cite{buchbinder2025chasing}. 
They studied the problem of low-recourse cardinality-constrained submodular maximization in a dynamic setting with unknown future. 

In contrast, we show how our correlated sampling can be used to provide similar guarantees that are history-independent, and therefore collusion-resistant, given a sequence of potential scenarios that might require swift reallocation.
This application relies on all properties of our algorithm, including submodular dominance (\Cref{thm:correlated-sampling-with-submod}), and dimension-free stretch $O(\log k)$. 
We show that the latter property combines nicely with a simple constant-stretch algorithm which approximately preserves marginals and submodular objectives up to an additional $(1-\eps)$ factor for the regime that $k\geq \poly(1/\eps)$ (by simple concentration arguments): this combination results in $O(\log(1/\eps)$-stretch at the cost of a $(1-\eps)$ loss in the objective, by running the appropriate (near-)correlated sampling algorithm depending on whether $k\geq \poly(1/\eps)$ or not.
We turn to describing the problem we study in this section.

\paragraph{Submodular welfare maximization.} In the basic submodular welfare (SWF) maximization problem, we have $n$ buyers and $m$ sellers. Each seller $s$ sells some number $k_s$ of homogeneous items from some universe $U$. Each buyer $b$ in turn has a monotone submodular valuation function $f_b:U\to\mathbb{R}_+$ (i.e., $f_b(S)\leq f_b(T)$ for every $S\subseteq T\subseteq U$).
The objective is to allocate these items, at most one of each, to the buyers, so as to maximize the social welfare, $\sum_b f_b(S_b)$, for $S_b\subseteq U$ the set of items allocated to $b$.
We can encode the constraint that at most $k_s$ items sold by $s$ are allocated, at most one of which to each buyer, by requiring the (possibly fractional) allocation vector $\vec{x}$ with $(s,b)$ to be a concatenation of $m$ points in hypersimplexes, $\Delta_{n,k_s}$ with appropriate $k_s$.
\begin{alignat*}{4}
    & \sum_b x_{b,s} \leq k_s && \qquad \forall s \\
    & 0\leq x_{b,s}\leq 1 && \qquad \forall b,s.
\end{alignat*}

SWF is NP-hard to approximate within a factor $(1-1/e)$ \cite{khot2005inapproximability}.
There are many approaches to attain this approximation ratio in polynomial time. 
Most relevant to our subsequent dynamic SWF reallocation problem is the relax-and-round approach, as follows: the continuous greedy algorithm can be used to provide a $(1-1/e)$-approximation to the maximum multilinear extension subject to any solvable LP constraints, such as the above linear constraints \cite{calinescu2011maximizing}. 
One can then round the vectors using \Cref{alg:rounding-compression}, whose output satisfies the hypersimplex constraints (Property \ref{prop:k-uniform}, matches the marginals (Property \ref{prop:marginals}) and satisfies submodular dominance (see \Cref{thm:correlated-sampling-with-submod}) (see \Cref{prop:NA-pivotal}), applying this algorithm to each sub-vector $\vec{x}^s$ with coordinates $\{(s,b) \mid b\in [n]\}$, corresponding to the cardinality constraint of $k_s$ for items sold by $s$. By the algorithm's properties it satisfies all constraints, and by \Cref{lem:uniform-to-partition}, this also satisfies submodular dominance for the entire vector (with the target marginals), and so the obtained value is at least as high as the multilinear extension, which gives us a $(1-1/e)$-approximation \cite{calinescu2011maximizing}.

\textbf{The dynamic problem.} Suppose now that we have $r$ possible \emph{scenarios},  corresponding to some buyers or sellers leaving or entering the market.
Let $SWF(i)$ denote the optimal SWF of scenario $i$.
We wish to provide a good approximation of $SWF(i)$, while guaranteeing swift reallocation when switching from one scenario to the other. 
In particular, we wish to provide a $\beta$ approximation (to be chosen shortly) for each configuration, while minimizing the maximum number of items reallocated.
This can be captured by the following LP constraints, where $x_{b,s,i}$ is the fractional allocation to buyer $b$ of items of seller $s$ under scenario $i$:
\begin{alignat}{4}
    \min\; & R \tag{MMD-LP}\label{MMD-LP} \\
    \mathrm{s.t.}\; & \sum_{b,s} z_{b,s,i,j}\leq R && \quad\forall i,j \notag\\
    & x_{b,s,i} - x_{b,s,j} \leq z_{b,s,i,j} &&\quad \forall b,s,i,j \notag\\
    & x_{b,s,j} - x_{b,s,i} \leq z_{b,s,i,j} && \quad \forall b,s,i,j \notag\\
    & \sum_b x_{b,s} \leq k_s && \quad \forall s\in \textrm{ scenario $i$} \notag\\
    & \sum_b x_{b,s} \leq 0 && \quad \forall s\not\in \textrm{ scenario $i$} \notag\\
    & \vec{x}\geq \vec{0}. \notag
\end{alignat}

The above formulation is still missing the SWF-approximation. 
Using the ideas of \cite{buchbinder2025chasing} (as outlined in \Cref{app:bnw}), we can provide in polynomial-time a solution to the above constraint subject to the multilinear extension of the objective submodular function over $\vec{x}_{\mid i}\triangleq (x_{b,s,i})_{b,s}$ (the fractional assignment for scenario $i$) being a $(1-1/e-\eps)^2$-approximation of $SWF(i)$ for each $i$.
The question remains: how do we round all these vectors in a way to obtain bona fide \emph{integral} solutions with low movement cost? 
\cite{buchbinder2025chasing} provide an approach that results in $O(R)$ movement cost (i.e., a constant-approximation of the optimal movement cost). Unfortunately, their algorithm is history dependent, which may incentivize buyers and sellers to time their arrival and departure.
Our correlated sampling algorithm, in contrast, is \emph{history-independent}. Using our correlated sampling algorithm, which satisfies the hard cardinality constraints (Property \ref{prop:k-uniform}), the target marginals (Property \ref{prop:marginals}) and submodular dominance (\Cref{thm:correlated-sampling-with-submod}), combined with \Cref{lem:uniform-to-partition}, we obtain an algorithm with $(1-1/e-\eps)^2$-approximation, which by the algorithm's stretch uses movement cost $O(R\cdot \log k)$, i.e., $O(\log k)$ times that of an optimal algorithm.

\subsection{Discussion: From Dimension-Free to Constant}\label{sec:discussion} The above stretch can in fact be decreased to a \emph{constant} for any constant $\eps>0$, and specifically to $O(\log(1/\eps))$, albeit at the cost of a (negligible) $(1-\eps)$ deterioration in approximation, and losing the history independence. Since this is subsumed by the history-dependent $O(1)$-stretch rounding of \cite{buchbinder2025chasing}, we only outline the idea here briefly.

By sampling a uniform threshold for each coordinate, we can decide which elements to take to our solution (which item to sell to which buyer), using stretch $O(1)$: when $x_i\geq \tau_i$ for i.i.d., $\tau_i\sim \Uni[0,1]$, we add $i$ to the output. 
This yields the correct marginals independently, and so yields a set of expected value exactly equal to the multilinear extension, $F(\vec{x})$, and moreover results in a set of expected size $\|\vec{x}\|\leq k$. 
To make this a feasible set, we can scale down the value by a factor of $1+\Theta(\sqrt{k})$, thus obtaining a set of expected size $k-\Theta(\sqrt{k \ln k})$. By simple concentration arguments, the probability that this set exceeds size $k$ is polynomially small in $k$, and by submodularity, one can show that taking only a subset of the sampled elements if more than $k$ are sampled will only incur an expected additive loss of $(1/\poly(k))\cdot f(OPT)$. 
Combined with the $(1-\Theta(\sqrt{k \ln k}))$ multiplicative loss from scaling down, this is a $1-\eps$ multiplicative factor and $\eps\cdot f(OPT)$ additive factor, provided $k=\Omega(\log(1/\eps)1/\eps^2)$ is sufficiently large.
On the other hand, by only keeping a subset of cardinality $\min\{k,|S|\}$ of the sampled elements $S$, it is easy to obtain stretch $O(1)$.
Thus, we get a $(1-\eps)$ loss in objective (provided the target value is $\Omega(f(OPT))$) with $O(1)$ stretch if $k$ is larger than some $\poly(1/\eps)$. Alternatively, if $k$ is smaller than $\poly(1/\eps)$, then by applying this paper's main correlated sampling algorithm, we get no loss in the objective value, but stretch $O(\log k) = O(\log(1/\eps))$, i.e., constant for any constant $\eps>0$. 
In either case, we incur a stretch no worse than a $1-\eps$ deterioration in approximation quality, and a constant stretch of $O(\log (1/\eps))$.
We believe this pattern will find future applications.
\section{Summary and Open Questions}

In this work we revisit the correlated sampling question for the hypersimplex, $\Delta_{n,k}$, and provide dimension-free stretch guarantees for the latter. 
We provide stretch $O(
\log k)$ using a recursive algorithm of depth $O(\log^* n)$.

We note that any dimension-dependent stretch $f(n)\ll \log n$ should imply by similar approaches an $O(f(k))$ stretch with $O(f^*(n))$ levels of recurrence, where $f^*(n)\triangleq \min\{i\mid f^{(i)}(n)\leq 1\}$ is ``the iterated $f$'', defined using the $i^{th}$ iteration of $f$, namely $f^{(i)}(x)\triangleq f(f^{(i-1)}(x))\cdot \mathds{1}[i>0] + x\cdot \mathds{1}[i=0]$.
However, better than logarithmic dependence on $n$ is as yet unknown. Is this inherent, and is $\Theta(\log k)$ the optimal stretch given Properties \ref{prop:k-uniform} and \ref{prop:marginals}? We leave this as a tantalizing open question. 

We presented a number of applications, to online paging, metric multi-labeling and swift and history-independent dynamic reallocation for approximate submodular welfare maximization. 
While we do not see any of these applications as particularly compelling on their own, the variety of applications hints at further applications of our correlated sampling machinery. 
In particular, we discussed in \Cref{sec:discussion}, the dimension-independent stretch can, for some applications, be combined with simple rounding schemes to obtain \emph{constant} stretch (while only approximately preserving marginals).
We are optimistic that similarly to correlated sampling for the probability simplex, such (dimension-free) correlated sampling for the hypersimplex will find broader applications. We leave the search for such applications as a direction for future research.

\appendix
\section*{APPENDIX}
\section{Deferred Proofs of \texorpdfstring{\Cref{sec:correlated-sampling}}{Section \ref{sec:correlated-sampling}}}\label{appendix:correlated-sampling}

In this section we expand on the proofs only sketched in \Cref{sec:correlated-sampling}, restating the lemmas for ease of reference. We start by bounding the contribution to the stretch from a good pair of runs.

\goodstretch*
\begin{proof}
    By \Cref{obs:triangle-ineq}, we have $\vec{y} = \vec{x} + \eps \cdot \vec{e}_i$ for some $i\in [n]$, so that $\norm{\vec x - \vec y}_1 = \veps$. Let $\hat{\vec x}$ and $\hat{\vec y}$ be the vectors of dimension $m^3$ obtained after compressing and hashing the elements in $\vec x$ and $\vec y$ respectively. By \Cref{prop:ICALP17algo}, and since the mapping of the coordinates from $\vec x$ and $\vec y$ to $\hat{\vec x}$ and $\hat{\vec y}$ is the same in a good pair of runs (as the randomness initialized in \Cref{alg:rounding-compression-init} is shared), we have: 
    \begin{align}\label{eq: term1}
    \E\left[|T(\vec x)\oplus T(\vec y)| \mid G\right] \leq (2\log (m^3) + 2)\cdot \E\left[\norm{\hat{\vec x}-\hat{\vec y}}_1 \mid G\right].
    \end{align}
    To recover the final outputs $X$ and $Y$ from $T(\vec x)$ and $T(\vec y)$, we invert the projection $[n]\to[m^3]$: each large item is mapped back from $C_i\mapsto i$, and each bucket $b\in [m]$ is mapped from $R_b\mapsto \text{Rep}_b$ (where we let $\text{Rep}_b$ denote the single element representative drawn for that bucket). An additional discrepancy between $X$ and $Y$ may arise if, for the bucket $b$ containing $i$,   
    $R_b$ appears in both $T(\vec x)$ and $T(\vec y)$, but the respective representatives differ, i.e. $\text{Rep}_b(\vec{x}) \neq \text{Rep}_b(\vec{y})$. This increases the expected distance by at most 
\begin{align}\label{eq: term2}
    \E[|REP_b(\vec x) \oplus REP_b(\vec y)| \mid R_b\in (T(\vec x)\cap T(\vec y))\wedge G]\cdot\Pr[R_b\in (T(\vec x)\cap T(\vec y))\mid G].
\end{align}
We would thus get the following upper bound on the expected distance: $$\E\left[|X\oplus Y| \mid G\right]\leq (\ref{eq: term1})+
(\ref{eq: term2}),$$ and it remains to upper bound each of the two terms.

Before proceeding, observe that only the randomness generated in Lines \ref{line:R-start}---\ref{line:R-end} of \Cref{alg:rounding-compression-init} determines whether a pair of runs on the given input is good. This randomness also determines the projection of coordinates $[n]\to[m^3]$ and the partition of items into small and large ones. At the same time, randomness generated in Lines \ref{line:init-k=1} and Line \ref{line:init-k=k} for Algorithms $\calA_1$ and $\calA_k$ directly affects the distance between the outputs.

We will consider the following three disjoint cases:
\begin{itemize}
    \item[(i)] Event $E_1$: item $i$ is large in both runs, i.e., $\frac{1}{10} + \sigma < x_i<y_i$;
\item[(ii)] Event $E_2$: item $i$ is small in both runs, i.e., $x_i< y_i \leq \frac{1}{10}+\sigma$;
\item[(iii)] Event $E_3$: item $i$ is small in one run and large in the other, i.e., $x_i\leq \frac{1}{10}+\sigma< y_i$.
\end{itemize}

We claim that in cases (i) and (ii), $\norm{\hat{\vec x}-\hat{\vec y}}_1 = \norm{{\vec x}-{\vec y}}_1$ always holds.
Indeed, in the event $E_1$, the vectors $\hat{\vec x}$ and $\hat{\vec y}$ differ only at coordinate $C_i$: in particular, $\hat x_{C_i} = x_i$ and $\hat y_{C_i} = y_i = x_i+\eps$. Thus, $\norm{\hat{\vec x} - \hat{\vec y}}_1 = \veps$. In the event $E_2$, suppose bucket $b\in [m]$ is such that $i\in \calB_b$. Then $\hat{\vec x}$ and $\hat{\vec y}$ differ only at coordinate $R_b$, and $\hat y_{R_b} = \hat x_{R_b} +\veps= \sum_{j\in S_b(\vec x)}x_j + \veps$, which implies again that $\norm{\hat{\vec x} - \hat{\vec y}}_1 = \veps$.

As for the expression \ref{eq: term2}, since $\calA_1$ used in \Cref{line:KT} in \Cref{alg:rounding-compression-project} is a $2$-stretch algorithm, we obtain
    $$ (\ref{eq: term2})\leq
    2\cdot \E\left[\text{dist}_{l_1}\left(
    \frac{\vec x|_{S_b(\vec x)}}{\hat x_{R_b}},
    \frac{\vec y|_{S_b(\vec y)}}{\hat y_{R_b}}
    \right)\Bigg| R_b\in T(\vec x)\cap T(\vec y)\wedge G\right]\cdot \Pr[R_b\in T(\vec x)\cap T(\vec y)\mid G].
     $$
In the event $E_1$, we have that $\vec x|_{S_b(\vec x)} = \vec y|_{S_b(\vec x)}
$ and $\hat x_{R_b} = \hat y_{R_b}$
so the above expression equals $0$. Now, assume event $E_2$ occurs: then since $\frac{\vec x|_{S_b(\vec x)}}{\hat x_{R_b}}$ and $
    \frac{\vec y|_{S_b(\vec y)}}{\hat y_{R_b}}$ are probabilistic vectors, $x_i/\hat x_{R_b} < y_i/\hat y_{R_b} = (x_i+\eps)/(\hat x_{R_b}+\eps)$ and 
        $x_j/\hat x_{R_b} > y_j/\hat x_{R_b} = (x_j+\veps)/(\hat x_{R_b}+\veps)$ for $j\neq i$, and we obtain
that the hamming distance equals
$$
\text{dist}_{l_1}\left(
    \frac{\vec x|_{S_b(\vec x)}}{\hat x_{R_b}},
    \frac{\vec y|_{S_b(\vec y)}}{\hat y_{R_b}}
    \right)=
2\cdot \left(
\frac{x_i+\eps}{\hat x_{R_b}+\eps} - \frac{x_i}{\hat x_{R_b}}
\right)\leq \frac{2\eps}{\hat x_{R_b}}.
$$
On the other hand, by Property \ref{prop:marginals} we have $
\Pr[R_b\in (T(\vec x)\cap T(\vec y))\mid E_2\wedge G] = \hat x_{R_b}
$, thus, term (\ref{eq: term2}) is upper bounded by $4\cdot\norm{\vec x - \vec y}_1
$.

Finally, the case when event $E_3$ occurs is different, as in this case the vectors  $\hat{\vec x}$ and $\hat{\vec y}$ differ at two coordinates: $R_b$ and $C_i$. Specifically, $\hat y_{C_i} = y_i = x_i+\veps$ and $\hat x_{C_i} = 0$,  while $\hat y_{R_b} = \sum_{j\in S_b(\vec x)}y_j = \hat x_{R_b}+x_i$. Therefore, $\norm{\hat{\vec x} - \hat{\vec y}}_1 = 2x_i+\veps$, which is possibly much greater than $O(\eps)$. However, it is still $O(1)$ --- which as we shall soon see is sufficient since the probability that the event $E_3$ occurs is $O(\varepsilon)$. 
Indeed, since $\sigma\sim \textrm{Uni}[0,1/10]$, we have $\Pr[E_3] = \Pr[x_i -\frac{1}{10} \leq \sigma < y_i\frac{1}{10}] \leq 10\veps$. Following the same logic, it suffices to argue that term $(\ref{eq: term2})$ is upper bounded by $2$, simply because $|REP_b(\vec x)\oplus REP_b(\vec y)|\leq 2$ with probability $1$.

We are now ready to bound the expected $\ell_1$ distance in a good pair of runs, by total expectation:
\begin{align*}
    \E\left[\norm{\hat{\vec x} - \hat{\vec y}}_1 \mid G\right] &= \sum_{\ell=1}^3 \E\left[\norm{\hat{\vec x} - \hat{\vec y}}_1 \mid G \wedge E_\ell \right] \cdot \Pr[E_\ell\mid G] \\
        &\leq \norm{\vec x - \vec y}_1 \cdot(\Pr[E_1\mid G]+\Pr[E_2\mid G]) + (2x_i+\veps)\cdot\frac{\Pr[E_3]}{\Pr[G]} \\
        & \leq 
        \norm{\vec x - \vec y}_1 + 2\cdot\frac{10\eps }{\Pr[G]} & x_i\leq y_i = x_i+\eps\leq 1 \\
        & \leq \frac{21}{\Pr[G]}\cdot\norm{\vec x - \vec y}_1.  & \norm{\vec x - \vec y}_1 = \eps.
    \end{align*}

Similarly, we have
\begin{align*}
    (\ref{eq: term2})= \sum_{\ell =1}^3 
    &\E[|REP_b(\vec x) \oplus REP_b(\vec y)| \mid R_b\in (T(\vec x)\cap T(\vec y))\wedge E_\ell \wedge G]\\
    &\cdot\Pr[R_b\in (T(\vec x)\cap T(\vec y))\mid E_\ell \wedge G]\cdot \Pr[E_\ell\mid G]\\
    &\leq 4\norm{\vec x - \vec y}_1\cdot \Pr[E_2\mid G] + 2\cdot \frac{\Pr[E_3]}{\Pr[G]} \leq \frac{24}{\Pr[G]}\cdot \norm{\vec x - \vec y}_1.
\end{align*}

Combining the above and by \Cref{prop:ICALP17algo}, we obtain
\begin{align*}
    \E\!\left[|X \oplus Y| \mid G\right] \cdot \Pr[G] 
    &\leq \left((2\log (m^3) + 2)\cdot \frac{21}{\Pr[G]} + \frac{24}{\Pr[G]}\right)\cdot \|\vec{x} - \vec{y}\|_1 \cdot \Pr[G] 
    \\
    &\leq (42\log (m^3) + 66)\cdot \|\vec{x} - \vec{y}\|_1.
\end{align*}

Finally, by our choice of $m = \m$, and using $k,\alpha \geq 2$ so that $\log k, \log \alpha \geq 1$, we have
\begin{align*}
    \E\!\left[|X \oplus Y| \mid G\right] \cdot \Pr[G] 
    &\leq  (42\cdot 15(\log \alpha + \log k + 1) + 66)\cdot \|\vec{x} - \vec{y}\|_1 \\
    & = (42\cdot15(\log \alpha + \log k) + 348\cdot 2)\cdot \|\vec{x} - \vec{y}\|_1 \\
    & \leq (630\cdot(\log \alpha + \log k) + 348\cdot (\log \alpha + \log k))\cdot \|\vec{x} - \vec{y}\|_1 \\
    &= 978 \cdot (\log \alpha + \log k)\cdot \|\vec{x} - \vec{y}\|_1.
    \qedhere
\end{align*}
\end{proof}

We next prove that for any realization of $\sigma$, we are unlikely to have a particularly full bucket.
\heavybucket*
    \begin{proof}
     Fix $b\in [m]$. For simplicity of notation, we abuse notation and let $\Pr$ denote the probability space conditioned on $\sigma=s$. 
     We call a small item $j$ \emph{tiny} if $x_j$ is smaller than $\lambda \triangleq  \frac{1}{10\log  m + 5}$ and \emph{little} else.
     Let $S_b,T_b$ and $L_b$ denote the set of small, tiny and little items in bucket $\calB_b$, respectively. Then, by union bound
     \begin{align*}
         \Pr\left[\sum_{i\in S_b}x_i > \frac{4}{5} \right] 
            & \leq \Pr\left[\left(\sum_{i\in T_b}x_i > \frac{1}{5}\right) \bigvee_{} \left(\sum_{i\in L_b} x_i > \frac{3}{5}\right)\right] 
         \leq \Pr\left[\sum_{i\in T_b}x_i > \frac{1}{5}\right]+\Pr\left[\sum_{i\in L_b} x_i > \frac{3}{5}\right].
     \end{align*}
    
    As each tiny item is in $\calB_b$ with probability $\frac{1}{m}$, we have  $\mu\triangleq \E[\sum_{i\in T_b} x_i]\leq \E[\sum_{i\in \calB_b} x_i] \leq \frac{k}{m}$.
    Thus, $\frac{1}{5}\geq \frac{m}{5k}\mu \geq 2e\mu$, by our choice of $m$. 
    Therefore, since $Y_i\triangleq \mathds{1}[i\in T_b]\cdot  \frac{x_i}{\lambda} \in [0,1]$ are independent, by standard Chernoff bounds (see \cite[Theorem 4.4]{mitzenmacher2017probability}) we have that
    \begin{align*}       \Pr\left[\sum_{i\in T_b} x_i > \frac{1}{5}\right] &= \Pr\left[\sum_{i\in T_b} \frac{x_i}{\lambda}  > \frac{1}{5 \lambda} \right]  
        \leq 2^{-\frac{1}{5 \lambda}} = 2^{-2 \log m - 1} = \frac{1}{2m^2}.
    \end{align*}
    
Next, for the sum of little items (each of size at most $1/10+\sigma\leq 1/5$) in $\calB_b$ to exceed $3/5$,  this bucket must contain at least $4$ such items. There are at most $\|x\|/\lambda \leq k/\lambda$ of these, each placed in $\calB_b$ with probability $\frac{1}{m}$. 
Therefore, by union bound over all quadruples of little items, we have that
\begin{align*}
    \Pr\left[\sum_{i\in L_b} x_i > \frac{3}{5}\right] &\leq \binom{k/\lambda}{4}\frac{1}{m^4}  \leq \frac{k^4 (10\log m+5)^4}{24m^4}  \leq \frac{1}{2m^2}.
\end{align*}
Above, the last inequality follows from our choice of $m$ as follows: First, without loss of generality, $k\geq 2$, and $\alpha \geq 2$ by \cite{bavarian2020optimality}. Therefore, by our choice of $m$ we have $m = 2^{\constantC}k^{\expM}\alpha^{\expM} \geq 2^{11}k^4$, and $m\geq 2^{-11}12^{-1}(10\log m+5)^4$ (for all $m\geq 2^{15})$, and so $m^2\geq k^4(10\log m + 5)^4/12$, as desired above.
Finally, taking union bound over both events above over all $m$ buckets, the lemma follows.
\end{proof} 

We are now ready to upper bound the probability of bad and tragic pairs of runs.
\probbadtragic*
        \begin{proof}
        The bad event occurs when \Cref{alg:rounding-compression} terminates in Line \ref{line:bad-output-collision} or \ref{line:bad-output-heavy-load} for inputs $\vec{x}$ and $\vec{y}$ due to a hash collision or a heavy bucket. So, since $1 - \tau \geq \frac{4}{5}$, by  \Cref{lem:heavy-bucket,lem:collisions} and union bound, 
        \begin{align*}
            \Pr[\textrm{bad}] \leq \frac{1}{m}+\frac{1}{m}= \frac{2}{m}.
        \end{align*}
        
        The tragic event occurs if \Cref{alg:rounding-compression} terminates in \Cref{line:good-output} for one input, but terminates early in Lines \ref{line:bad-output-collision} or \ref{line:bad-output-heavy-load} for the other input as the result of a hash collision or a heavy bucket (i.e., the sum of small elements in some bucket is greater than $\frac{4}{5}$). There are three such possible situations, determined by $\calR$. We show that each occurs with probability at most $\frac{10 \veps}{m}$.

        Recall that $\vec{y} = \vec{x} + \eps\cdot \vec{e}_i$ for some $i\in [n]$, by \Cref{obs:triangle-ineq}. Let $S_b(\vec x)$ be the small coordinates in a bucket $b$ for $\vec x$ and define $S_b(\vec y)$ similarly. Let $E_1 \triangleq  \left[x_i \leq 1/10 + \sigma < x_i + \veps\right] $ be the event that coordinate $i$ is small for $\vec x$ but not for $\vec y$. Let $E_2 \triangleq  \left[\sum_{j \in S_b(\vec x)} x_j \leq 1 - \tau < \sum_{j \in S_b(\vec x)} x_j + \veps\right]$ be the event that the sum of small elements in bucket $b$ exceeds $1 - \tau$ for $\vec y$ but not for $\vec x$. As $\sigma$ and $\tau$ are distributed $\Uni\left[0, 1/{10}\right]$, the probability that their values fall within any fixed interval of width $\veps$ is at most $10 \veps$ (and strictly less if this interval is not contained in $[0,1/10]$). Consequently, $\Pr[E_1] \leq 10\veps$ and $\Pr[E_2] \leq 10\veps$.

        There are three cases, conditioning on events determined by the randomness fixed by $\calR$. For events $E_1$ and $E_2$, the probabilities are with respect to the randomness for choosing the target hashes, the choice of $\sigma$ and the choice of $\tau$. These events are independent of each other. 

        \textbf{Case 1:} The run for $\vec x$ ends in \Cref{line:bad-output-heavy-load} while the run for $\vec y$ ends in \Cref{line:good-output}. This occurs when  $i$ is small for $\vec{x}$ but not for $\vec{y}$ (i.e., $E_1$ holds) and there is a heavy bucket in $\vec x$, while there is no such bucket for $\vec y$.  Hence, by \Cref{lem:heavy-bucket}:
        \begin{align*}
            &\Pr\left[\exists b \textrm{ such that} \sum_{j \in S_b(\vec x)} x_j \geq \frac{4}{5} \textrm{ and } E_1\right] = \Pr\left[\exists b \textrm{ such that} \sum_{j \in S_b(\vec x)} x_j \geq \frac{4}{5} \;\middle\vert\; E_1\right] \cdot \Pr\left[E_1\right] \leq \frac{10\veps}{m}.
        \end{align*}
        
        \textbf{Case 2:} The run for $\vec y$ ends in \Cref{line:bad-output-collision} while the run for $\vec y$ ends in Line \Cref{line:good-output}. This occurs when $i$ is small for $\vec{x}$ but not for $\vec{y}$ (i.e., $E_1$ holds), and $C_i$ has a hash collision with a large coordinate or a bucket, i.e., $C_i=C_j$ for large item $j$ or $C_i=R_b$ for some bucket $\calB_b$.  Hence, by \Cref{lem:collisions}:
        \begin{align*}
            & \Pr\left[\exists j \textrm{ or } b \textrm{ such that } C_i = C_j \textrm{ or } C_i = R_b \textrm{ and } E_1\right] \\
            &= \Pr\left[\exists j \textrm{ or } b \textrm{ such that } C_i = C_j \textrm{ or } C_i = R_b \;\middle\vert\; E_1\right] \cdot \Pr[E_1] \\
            & \leq  \frac{1}{m} \cdot 10\veps = \frac{10\veps}{m}.
        \end{align*}

        \textbf{Case 3:} The run for $\vec x$ ends in \Cref{line:good-output} while the run for $\vec y$ ends in \Cref{line:bad-output-heavy-load}. This occurs when $i$ is small for $\vec{x}$ and $\vec{y}$, and there is a heavy bucket in $\vec y$, while there is no such bucket for $\vec x$. Note that $\sum_{j \in S_b(\vec x)} x_j + \veps = \sum_{j \in S_b(\vec y)} y_j$. Hence, by \Cref{lem:heavy-bucket}:
        \begin{align*}
            &\Pr\left[E_2 \textrm{ and } \exists b \textrm{ such that} \sum_{j \in S_b(\vec y)} y_j \geq \frac{4}{5} \right] \\
            &=  \Pr\left[E_2 \;\middle\vert\; \exists b \textrm{ such that} \sum_{j \in S_b(\vec y)} y_j \geq \frac{4}{5} \right] \cdot \Pr\left[\exists b \textrm{ such that} \sum_{j \in S_b(\vec y)} y_j \geq \frac{4}{5} \right]\\
            &\leq 10 \veps \cdot \frac{1}{m}  = \frac{10\veps}{m}.
        \end{align*}

        By union bound over the three events, the probability of the tragic event is at most $\frac{30\veps}{m}$. 
    \end{proof}

\paragraph{The recursive construction.} We now turn to upper bounding the number of levels needed by our algorithm to obtain $O(\log k)$ stretch. The key lemma is the following, proven by a simple inductive argument.
\logstar*
\begin{proof}
By \Cref{thm:base-of-recurrence}, the stretch $\alpha_i$ of Algorithm $\calA'_i$ satisfies $\alpha_i \leq C (\log k + \log\alpha_{i-1})$ for all $i\geq 1$, for some constant $C>4$.
Define $A \triangleq \log(Ck)$. We prove by induction on $i \in \{0,1,\dots,\log^*n-2\}$ that
\begin{equation*}\label{eq:bound}
\alpha_i \le C\left(\log^{(i+1)} n + iA\right).
\end{equation*}

The base case is immediate, since $\alpha_0 = 2\lceil\log n\rceil\leq 4\log n \leq C\log n$.
Similarly for $i=1$ we have $\alpha_i \leq C(\log k + \log\alpha_0) \leq C(\log k + \log C + \log \log n) = C(\log \log n + \log(Ck)) = C(\log^{(2)}n+A)$.
For the inductive step for $i\geq 2$, assume that the claimed bound holds for $i-1 \ge 1$, i.e.,
\[
\alpha_{i-1} \le C\left(\log^{(i)} n + (i-1)A\right).
\]
Using the recurrence $\alpha_i \le C(\log k + \log a_{i-1})$ and the definition of $A = \log(Ck)$, we obtain
\begin{align*}
\alpha_i 
&\le C\left(\log k + \log\bigl[C(\log^{(i)} n + (i-1)A)\bigr]\right) \\
&= C\left(A + \log\left(\log^{(i)} n + (i-1)A\right)\right) \\
& \leq C(\log^{(i+1)}n + iA),
\end{align*}
where the last inequality used
that for $x,y \ge 2$ one has $\log(x+y) \le \log(x\cdot y) = \log x + \log y \le \log x + y$; applying this identity to 
$x=\log^{(i)}n$ (which satisfies $x\geq 2$ for $i\leq \log^*n-2$ as we consider) and $y=(i-1)A\geq A = \log(Ck)\geq \log C \geq 2$,  yields the desired inequality, as
\begin{align*}
\log\left(\log^{(i)} n + (i-1)A\right) 
& \le \log^{(i+1)} n + (i-1)A. \qedhere 
\end{align*}
\end{proof}

\begin{cor}
    For $i=\log^*n-2$, Algorithm $\calA'_i$ has stretch $O(\log k \cdot \log^*n)$.
\end{cor}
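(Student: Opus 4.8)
The statement is an immediate consequence of \Cref{lem:first-recursive-algos}, applied at the extreme permissible index $i = \log^*n - 2$ (the lemma allows all $i \le \log^*n - 2$, so equality is fine). Plugging in, we get
\[
\alpha_{\log^*n-2} = O\!\left((\log^*n-2)\cdot \log k + \log^{(\log^*n-2)} n\right).
\]
So the whole proof reduces to observing that the ``residual'' iterated logarithm $\log^{(\log^*n-2)}n$ is a constant, after which the first term, which is $O(\log k \cdot \log^*n)$, dominates (using $\log k \ge 1$).

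The one small point to carry out is the bound $\log^{(\log^*n-2)}n = O(1)$. By \Cref{def:log*}, $\log^*n$ is the least $i$ with $\log^{(i)}n \le 1$, so $\log^{(\log^*n)} n \le 1$. Writing $z \triangleq \log^{(\log^*n-2)}n$, we have $\log\log z = \log^{(2)}z = \log^{(\log^*n)}n \le 1$, hence $\log z \le 2$ and $z \le 4$. (For the degenerate small-$n$ regime where $\log^*n \le 2$, the claim is vacuous/trivial since then $\alpha_0 = 2\lceil\log n\rceil = O(1)$ already; as the corollary is asymptotic we need not belabor this.) Therefore
\[
\alpha_{\log^*n-2} = O\!\left((\log^*n)\cdot \log k + 1\right) = O(\log k \cdot \log^*n),
\]
as claimed.

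I do not anticipate any real obstacle here: the content is entirely in \Cref{lem:first-recursive-algos}, and this corollary is just the substitution $i = \log^*n - 2$ together with the elementary fact that two further applications of $\log$ past the $\log^*$ threshold already land below $1$, so the stopped iterated logarithm is bounded by $4$. The only thing worth double-checking is that $\log^*n - 2$ is indeed within the range of validity of the lemma (it is, by the ``$i \le \log^*n-2$'' hypothesis), and that $\log^{(i)}n \ge 2$ fails exactly at the boundary we are using, which is why the lemma cannot be pushed one step further without the halving argument invoked in \Cref{thm:logstartheorem}.
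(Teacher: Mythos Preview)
Your proof is correct and matches the paper's approach: the corollary is stated in the paper without proof, as an immediate consequence of \Cref{lem:first-recursive-algos} at $i=\log^*n-2$. Your verification that $\log^{(\log^*n-2)}n\le 4$ is exactly the right observation to make the substitution go through.
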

Combined with our more heavy-handed proof strategy for \Cref{thm:correlated-sampling} in \Cref{sec:correlated-sampling}, we obtain the improved depth for our recurrence.
\logstarthm*
\begin{proof}
    The proof is essentially identical to the previous proof of \Cref{thm:correlated-sampling}, only starting with a better stretch $\alpha'_0 = O(\log k\cdot \log^*n)$ obtained from the preceding corollary using $\log^*n-2$ levels of recurrence. Consequently, by the same argument, after a further $\log_{3/2}(\alpha'_0 / \log k) = O(\log\log^*n)$ levels of recursion, we obtain a stretch $\alpha_i \leq p = O(\log k)$.
\end{proof}

Finally, we discuss simple parallel and dynamic byproducts of our recursion's modest depth.

\parallel*
\begin{proof}
    The basic building blocks of \Cref{alg:rounding-compression} consist of operations easily implemented in constant parallel depth (hashing and checking for collisions) or logarithmic depth (checking for overfull buckets), and applications of a correlated sampling for the probability simplex, which can be implemented in depth $O(\log n)$ and work $O(n)$, by \Cref{prop:k=1}.
    These operations are repeated for at most $O(\log^*n)$ levels (sequentially), with a final application of the algorithm of \Cref{prop:ICALP17algo} (for some dimension), which can also be implemented in depth $O(\log n)$ and work $O(n)$, by \Cref{prop:k=1}.
    Combining the above, the claimed work and depth follow.
\end{proof}

\dynamic*
\begin{proof}[Proof (Sketch)]
    The arguments resemble those of the parallel implementation, with the following modifications: note that if we maintain the output of each call to $\calA$ and $\calA_k$ (even when we do not output these in the algorithm), then since the basic correlated sampling algorithms for $k=1$ and $k>1$ which we use can be updated in time $O(\log n)$ (by \Cref{prop:k=1,prop:ICALP17algo}), then the algorithm's bottleneck is the maintenance of $O(\log^* n)$ such algorithms' outputs over all levels of the recursive construction. 
    The only delicate point is that \Cref{line:KT}, at face value, requires the scaling of several values per change to $\vec{x}$. However, inspecting \Cref{alg:exp-round}, we note that it is invariant under scaling of all its input's coordinates, and so we can in fact avoid the scaling altogether.
\end{proof}
\section{Previous Correlated Sampling Algorithms}\label{sec:correlated-sampling-appendix}

In this section we provide details of previous correlated sampling algorithms, for completeness.

\subsection{Correlated Sampling for \texorpdfstring{$\Delta_{n,1}$}{}}\label{app:k=1}

Two algorithms with optimal stretch of $2$ are known for rounding in the probability simplex, $\Delta_n$. These include an application of von-Neumann's rejection sampling to discrete settings, e.g., \cite{kleinberg2002approximation,holenstein2007parallel,liu2022simple}, and so-called exponential clocks, e.g., \cite{ge2011geometric,buchbinder2018simplex}. Since our goal is to show that these are also fast, we provide a self-contained analysis of the exponential clock algorithm.\todo{call it sticky randomness too?}
We then show how any such $2$-stretch algorithm for $\Delta_n$ can be extended simply (by adding a dummy coordinate) to obtain a $2$-stretch algorithm for the more general polytope $\Delta_{n,1}$, containing \textbf{sub-}distributions $\vec{x}\in [0,1]$ with $\sum_i x_i\mathbf{\leq} 1$. 

We start by describing and analyzing the exponential clocks-based algorithm.

\begin{algorithm}[H]
	\caption{Exponential Clocks for $\Delta_{n,1}$: Initialization}
	\label{alg:exp-init}
 \begin{algorithmic}[1]
    \State Sample i.i.d.~$R_i\sim \Exp(1)$ for all $i\in [n]$.
\end{algorithmic}	
\end{algorithm}	
\begin{algorithm}[H]
	\caption{Exponential Clocks for $\Delta_{n,1}$: Round}
	\label{alg:exp-round}
 \begin{algorithmic}[1]
    \State Output $\arg\min_i R_i/x_i$.
\end{algorithmic}	
\end{algorithm}	

\begin{prop}\label{prop:fast-exponential-clocks}
    \Cref{alg:exp-round} is a $2$-stretch algorithm for the probability simplex, with sample time $O(nnz)$, implementable in $O(\log n)$ depth and $O(\log (nnz)) = O(\log n)$~update~time.
\end{prop}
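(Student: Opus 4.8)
## Proof Proposal for Proposition \ref{prop:fast-exponential-clocks}

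\textbf{Overview.} The plan is to establish four things about \Cref{alg:exp-round}: (i) it outputs a single coordinate with the correct marginals; (ii) it has stretch $2$; (iii) it runs in $O(nnz)$ time; and (iv) it admits the claimed parallel and dynamic implementations. The marginals and stretch are the mathematical heart; the running time and implementations are bookkeeping that leverage the specific $\arg\min$ structure.

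\textbf{Marginals.} For the marginals, I would use the standard memorylessness/competing-exponentials fact: if $R_i \sim \Exp(1)$ i.i.d., then $R_i/x_i \sim \Exp(x_i)$ independently, and hence $\Pr[\arg\min_i R_i/x_i = j] = x_j / \sum_i x_i = x_j$, since $\|\vec x\|_1 = 1$ on the probability simplex (so $|\calA(\vec x)| = 1 = \lceil \|\vec x\|_1\rceil$, giving Property \ref{prop:k-uniform} trivially). This handles Properties \ref{prop:k-uniform} and \ref{prop:marginals}.

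\textbf{Stretch.} For the stretch, by \Cref{obs:triangle-ineq} it suffices to consider $\vec y = \vec x + \eps \cdot \vec e_i$ for infinitesimal $\eps > 0$ — but wait, on the simplex we cannot simply add mass, so I would instead take $\vec y$ to be $\vec x$ with an infinitesimal amount $\eps/2$ shifted off one coordinate onto another, so $\|\vec x - \vec y\|_1 = \eps$. Let $X = \arg\min_i R_i/x_i$ and $Y = \arg\min_i R_i/y_i$; since both are singletons, $|X \oplus Y| = 2 \cdot \mathds{1}[X \neq Y]$, so I must show $\Pr[X \neq Y] \leq \eps = \|\vec x - \vec y\|_1$ (giving stretch $\E[|X\oplus Y|] \le 2\eps$, matching the claimed factor $2$). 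The clean way: couple by using the \emph{same} $R_i$'s; then $X \neq Y$ only if the scaling change flips which coordinate achieves the minimum. Condition on the minimizing coordinate and the runner-up; the event that decreasing $x_j$ to $y_j$ (for the one coordinate $j$ whose value dropped) causes $j$ to lose its status as argmin, or that increasing another coordinate's value causes it to overtake the current winner, each has probability bounded using the exponential density — the winning "gap" between the smallest and second-smallest value of $R_i/x_i$ is itself exponentially distributed with parameter $\sum_i x_i = 1$, and the fractional change in the relevant ratio is $O(\eps)$. Integrating out, $\Pr[X \neq Y] \le \|\vec x - \vec y\|_1 + o(\eps)$, and taking $\eps \to 0$ via \Cref{obs:triangle-ineq} yields stretch exactly $2$. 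I expect this coupling computation — making the $O(\eps)$ bound on the flip probability rigorous via the joint density of the top two order statistics of the $\Exp(x_i)$ variables — to be the main obstacle, though it is a standard argument (essentially the one showing the $k=1$ lower bound of $2$ is tight, cited from \cite{kleinberg2002approximation,holenstein2007parallel,ge2011geometric,buchbinder2018simplex,angel2019pairwise,liu2022simple}).

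\textbf{Running time and implementations.} For the sample time, note $R_i/x_i = \infty$ whenever $x_i = 0$, so the $\arg\min$ only ranges over the $nnz$ nonzero coordinates; iterating through them in the sparse representation and tracking a running minimum is $O(nnz)$ time. (As discussed in the \textbf{Interface} paragraph, the $R_i$'s are drawn lazily the first time coordinate $i$ is touched, so initialization is free.) For $O(\log n)$ parallel depth: computing an $\arg\min$ over $n$ values is a standard parallel reduction of depth $O(\log n)$ and work $O(n)$. For $O(\log(nnz))$ dynamic update time: maintain the values $R_i/x_i$ over nonzero coordinates in a balanced binary search tree (or min-heap) keyed by value; an update to a single coordinate $x_i$ changes one key, costing $O(\log(nnz))$ to delete and reinsert, after which the global minimum (the output) is read off in $O(1)$; here we use that changing one coordinate of $\vec x$ in $\Delta_{n,1}$ affects only $R_i/x_i$, and $\log(nnz) = O(\log n)$. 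This completes all four parts.
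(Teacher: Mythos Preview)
Your treatment of the marginals and the implementations (sequential, parallel, dynamic) matches the paper essentially verbatim: competing exponentials for the marginals, a linear scan over nonzeros for $O(nnz)$ sample time, a min-reduction for depth $O(\log n)$, and a heap/balanced BST over the values $R_i/x_i$ for $O(\log(nnz))$ update time.

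Where you diverge from the paper is the stretch argument, and here your proposal is both different in route and incomplete. You sketch a gap-based argument via the top two order statistics and explicitly flag the key step (``making the $O(\eps)$ bound on the flip probability rigorous'') as the main obstacle---so the proof is not actually there. Your auxiliary claim that the gap between the smallest and second-smallest $R_i/x_i$ is $\Exp(1)$ is also not quite right: conditioned on coordinate $j$ winning, memorylessness gives that the gap is $\Exp\!\big(\sum_{i\neq j} x_i\big)=\Exp(1-x_j)$, not $\Exp(1)$. This is fixable, but the bookkeeping to turn it into a clean $\Pr[X\neq Y]\le \|\vec x-\vec y\|_1$ bound is exactly the work you are deferring.

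The paper sidesteps all of this with a direct computation that you should adopt. Write $\vec y=\vec x+\epsilon\,\vec e_1-\epsilon\,\vec e_2$ and, for each $i$, lower-bound $\Pr[E_i]$, the event that \emph{both} runs output $i$. The point is that for each $i$ the two ``output $i$'' conditions are nested or jointly implied by a single competing-exponentials event: for $i=1$ the $\vec x$-condition implies the $\vec y$-condition (since $R_1/(x_1+\epsilon)\le R_1/x_1$ and $R_2/(x_2-\epsilon)\ge R_2/x_2$), giving $\Pr[E_1]=x_1$; for $i=2$ the roles swap, giving $\Pr[E_2]=x_2-\epsilon$; for $i>2$ both conditions follow from $R_i/x_i\le \min\{R_1/(x_1+\epsilon),R_2/x_2,R_3/x_3,\dots\}$, an event whose rates sum to $1+\epsilon$, giving $\Pr[E_i]=x_i/(1+\epsilon)\ge x_i(1-\epsilon)$. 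Summing yields $\Pr[X=Y]\ge 1-2\epsilon$, hence $\E[|X\oplus Y|]\le 4\epsilon=2\|\vec x-\vec y\|_1$. No density integration or order-statistic analysis is needed---just one more application of the same competing-exponentials identity you already used for the marginals.
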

\begin{proof}
    Since sampling from exponential distributions can be done in constant time (assuming the same is true for sampling from $\Uni[0,1]$), the sampling time is immediate. Parallel depth follows by time to compute the minimum of $k$ values. Finally, update time can be obtained by maintaining a heap or balanced binary search tree of the values $R_i/x_i$ for non-zero $x_i$, in time logarithmic in their number, $nnz$.
    
    Next, the output set has cardinality one with probability one, so Property \ref{prop:k-uniform} is satisfied. Moreover, since $R_i\sim \Exp(1)$ implies $R_i/c\sim \Exp(c)$, and since $\Pr[Y_i \leq \min_{j\neq i} Y_j] = {\lambda_i}/{\sum_j \lambda_j}$ for independent $Y_1\sim \Exp(\lambda_1),\dots,Y_n\sim \Exp(\lambda_n)$, and since $\norm{\vec{x}}_1=1$ for input $\vec{x}$ in the probability simplex, $\Pr[ALG(\vec{x})=\{i\}] = \frac{x_i}{\norm{\vec{x}}_1}=x_i$ for all $i\in [n]$, i.e., Property \ref{prop:marginals} is satisfied.
    To prove that Property \ref{prop:distance} is satisfied with $\alpha=2$, by a similar argument to \Cref{obs:triangle-ineq} we may focus on two input vectors $\vec{y},\vec{x}$ of the form $\vec{y}=\vec{x}+\epsilon\cdot \vec{e}_1 - \epsilon\cdot \vec{e}_2$  (up to renaming of coordinates) for infinitesimally small $\epsilon>0$.
    To bound the expected deviation, it suffices to upper bound the probability that the sets differ, or equivalently to lower bound the probability that the output sets agree. Following the argument in \cite{buchbinder2018simplex}, we denote by $E_i$ the event that \Cref{alg:exp-round} outputs the singleton set $\{i\}$ when run both on $\vec{x}$ and $\vec{y}$, i.e., that $j=i$ minimizes both $Z_j/x_j$ and $Z_j/y_j$. 
    \begin{align*}
        \forall i>2: \qquad \Pr[E_i] & = \Pr\left[\frac{R_i}{x_i}\leq \min\left\{\frac{R_1}{x_1+\epsilon}, \frac{R_2}{x_2},\frac{R_3}{x_3},\cdots \frac{R_n}{x_n} \right\}\right] = \frac{x_i}{\norm{\vec{x}}_1+\epsilon} = \frac{x_i}{1+\epsilon} \geq x_i(1-\epsilon).\\
        \Pr[E_2] & = \Pr\left[\frac{R_2}{x_2-\epsilon}\leq \min\left\{\frac{R_1}{x_1+\epsilon}, \frac{R_3}{x_3},\cdots \frac{R_n}{x_n} \right\}\right] = \frac{x_2-\epsilon}{\norm{\vec{x}}_1} = x_2-\epsilon. \\
        \Pr[E_1] & = \Pr\left[\frac{R_1}{x_1}\leq \min\left\{\frac{R_2}{x_2}, \frac{R_3}{x_3},\cdots \frac{R_n}{x_n} \right\}\right] = \frac{x_1}{\norm{\vec{x}}_1} = x_1.
    \end{align*}
    Summing the above, we have that
    $$\Pr[ALG(\vec{x}) = ALG(\vec{y})]= \sum_i \Pr[E_i] \geq x_1 + (x_2-\epsilon) + \sum_{i>2} x_i(1-\epsilon) \geq 1-2\epsilon.$$
    Therefore, 
    \begin{align*}
    \E[|ALG(\vec{x})\Delta ALG(\vec{y})] & = 2\cdot (1-\Pr[ALG(\vec{x}) = ALG(\vec{y})]) \geq 4\epsilon = 2\cdot \norm{\vec{x}-\vec{y}}_1. \qedhere
    \end{align*}
\end{proof}

We next extend the above algorithm's guarantees to obtain fast correlated sampling for $\Delta_{n,1}$.
The simplest way to generalize from the probability simplex to the unit simplex (i.e., allowing for vectors with sum \emph{at most} one) is obtained by adding a dummy coordinate and projecting the result back to the original $n$ coordinates, as in \Cref{alg:KT-dummy-round}.
\begin{algorithm}[H]
	\caption{General $\Delta_{n,1}$ dummy coordinate rounding init}
	\label{alg:KT-dummy-init}
 \begin{algorithmic}[1]
    \State Initialize $2$-stretch algorithm $\calA$ for the $n$-dimensional probability simplex.
    \State Sample $R \sim \Uni[0,1]$.
\end{algorithmic}	
\end{algorithm}	
\vspace{-0.5cm}
\begin{algorithm}[H]
	\caption{General $\Delta_{n,1}$ dummy coordinate rounding}
	\label{alg:KT-dummy-round}
 \begin{algorithmic}[1]
    \State $\hat{\vec{x}}\gets \vec{x}+(1-\norm{\vec{x}}_1)\cdot \vec{e}_{n+1}$. \label{line:dummy-coord}
    \State $S\gets \calA(\hat{\vec{x}})$.
    \If{$S=[n+1]$}
    \State \textbf{Return} $\emptyset$.
    \Else 
    \State \textbf{Return} $S$.
    \EndIf
\end{algorithmic}	
\end{algorithm}	

\begin{lem}
    \Cref{alg:KT-dummy-round}, denoted by $\calB$, is a $3$-stretch correlated sampling algorithm for $\Delta_{n,1}$ implementable using $O(nnz)$ sampling time, $O(\log n)$ depth and $O(\log (nnz))=O(\log n)$ update~time. 
\end{lem}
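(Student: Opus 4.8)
The plan is to verify Properties \ref{prop:k-uniform}--\ref{prop:distance} in turn, with only the stretch bound requiring real work. Write $\calA$ for the underlying $2$-stretch correlated sampling algorithm that \Cref{alg:KT-dummy-round} invokes on the $(n{+}1)$-dimensional probability simplex (instantiated, for the running-time claims, with the exponential-clocks algorithm of \Cref{prop:fast-exponential-clocks}). First I would observe that for $\vec x\in\Delta_{n,1}$ the augmented vector $\hat{\vec x} = \vec x + (1-\norm{\vec x}_1)\vec e_{n+1}$ genuinely lies in that simplex (nonnegative entries, since $\norm{\vec x}_1\le 1$, summing to $1$), so by Property \ref{prop:k-uniform} of $\calA$ the set $\calA(\hat{\vec x})$ is a singleton and $\calB(\vec x) = \calA(\hat{\vec x})\setminus\{n+1\}$. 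Property \ref{prop:marginals} for $\calB$ is then immediate: for $i\in[n]$, $\Pr[i\in\calB(\vec x)] = \Pr[i\in\calA(\hat{\vec x})] = \hat x_i = x_i$. For Property \ref{prop:k-uniform}, $|\calB(\vec x)|\in\{0,1\}$ always, and when $\norm{\vec x}_1 = 1$ we have $\hat x_{n+1}=0$, so $\calA(\hat{\vec x})\neq\{n+1\}$ (almost surely by Property \ref{prop:marginals} of $\calA$, and in fact always for the exponential-clocks instantiation, which never selects a zero-valued coordinate), forcing $|\calB(\vec x)|=1$; this matches $\{\lfloor\norm{\vec x}_1\rfloor,\lceil\norm{\vec x}_1\rceil\}$ in every regime of $\norm{\vec x}_1\in[0,1]$.

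The substantive step is the stretch. By \Cref{obs:triangle-ineq} specialized to $k=1$, it suffices to take $\vec y = \vec x + \veps\vec e_i\in\Delta_{n,1}$ with $\veps>0$ infinitesimal, so $\norm{\vec x - \vec y}_1 = \veps$; then $\hat{\vec y} = \hat{\vec x} + \veps\vec e_i - \veps\vec e_{n+1}$, whence $\norm{\hat{\vec x}-\hat{\vec y}}_1 = 2\veps$, and Property \ref{prop:distance} of the $2$-stretch algorithm $\calA$ gives $\E[\,|\calA(\hat{\vec x})\oplus\calA(\hat{\vec y})|\,]\le 4\veps$. The key observation is that passing from $\calA$ to $\calB$ simply deletes the coordinate $n+1$, so $\calB(\vec x)\oplus\calB(\vec y) = (\calA(\hat{\vec x})\oplus\calA(\hat{\vec y}))\setminus\{n+1\}$, and taking expectations
\[
\E\!\left[|\calB(\vec x)\oplus\calB(\vec y)|\right] = \E\!\left[|\calA(\hat{\vec x})\oplus\calA(\hat{\vec y})|\right] - \Pr\!\left[n+1\in\calA(\hat{\vec x})\oplus\calA(\hat{\vec y})\right].
\]
I would then lower-bound the subtracted term by $\veps$, using that the two runs of $\calA$ assign $n+1$ different marginals: $\Pr[n+1\in\calA(\hat{\vec x})\oplus\calA(\hat{\vec y})]\ge\bigl|\hat x_{n+1}-\hat y_{n+1}\bigr| = \veps$ by Property \ref{prop:marginals} of $\calA$. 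Combining yields $\E[|\calB(\vec x)\oplus\calB(\vec y)|]\le 4\veps-\veps = 3\veps = 3\norm{\vec x-\vec y}_1$, as desired.

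Finally I would dispatch the running-time claims, which are routine. Computing $\norm{\vec x}_1$ and forming $\hat{\vec x}$ (at most one extra nonzero) costs $O(nnz)$, the call to $\calA$ costs $O(nnz)$ by \Cref{prop:fast-exponential-clocks}, and the remaining check costs $O(1)$; the parallel bound follows since a summation and an $\arg\min$ over $\le n+1$ values have depth $O(\log n)$; and for the dynamic bound I would note that any single-coordinate update to $\vec x$ changes exactly two coordinates of $\hat{\vec x}$ — the updated one and the dummy $n+1$, whose value tracks $1-\norm{\vec x}_1$ — which are fed to the $O(\log(nnz))$-update-time implementation of $\calA$, after which the output and check are re-read in $O(1)$.

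I expect the only place needing care is getting the constant down to $3$ rather than the $4$ one obtains from the trivial inclusion $|\calB(\vec x)\oplus\calB(\vec y)|\le|\calA(\hat{\vec x})\oplus\calA(\hat{\vec y})|$. The gain comes precisely from the observation in the second paragraph: the dummy coordinate $n+1$ lies in the symmetric difference of the two $\calA$-outputs with probability at least $\veps$ (because its $\calA$-marginal differs by $\veps$ between $\hat{\vec x}$ and $\hat{\vec y}$), and this element is stripped away when projecting back to $\calB$, shaving an additive $\veps$ off the expected symmetric difference.
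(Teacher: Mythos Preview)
Your proof is correct and follows essentially the same approach as the paper: both arguments observe that $\calB(\vec x)\oplus\calB(\vec y)=(\calA(\hat{\vec x})\oplus\calA(\hat{\vec y}))\setminus\{n+1\}$, take expectations, and then lower-bound the probability that $n+1$ lies in the $\calA$-level symmetric difference by $\veps$. Your route to that lower bound via the general inequality $\Pr[A\oplus B]\ge|\Pr[A]-\Pr[B]|$ is in fact a bit cleaner than the paper's, which instead drops one of the two disjoint terms and appeals to inclusion--exclusion on the remaining one.
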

\begin{proof}
    The implementation times are immediate, by \Cref{prop:fast-exponential-clocks}. Properties \ref{prop:k-uniform} and \ref{prop:marginals} are likewise immediate. Next, fix two vectors $\vec{x},\vec{y}\in \Delta_{n,1}$, which by \Cref{obs:triangle-ineq}, we can assume satisfy $\vec{y}=\vec{x}+\epsilon\cdot \vec{e}_i$ for some $i\in [n]$. Next, let $\hat{\vec{x}},\hat{\vec{y}}\in \Delta_{n+1,1}$ be the corresponding $(n+1)$-dimensional points defined in \Cref{line:dummy-coord} when run on $\vec{x}$ and $\vec{y}$, respectively.
    Then, $$\norm{\hat{\vec{x}}-\hat{\vec{y}}}_1 = \norm{{\vec{x}}-{\vec{y}}}_1 + |(1-\norm{\vec{x}}_1) - (1-\norm{\vec{y}}_1)| = 2\cdot \norm{{\vec{x}}-{\vec{y}}}_1 = 2\epsilon.$$
    Therefore, by Property \ref{prop:distance} of $\calA$, we have that $\E[|\calA(\hat{\vec{x}})\Delta \calA(\hat{\vec{y}})|]\leq 4\epsilon$.
    Accounting for the probability that $\calA(\hat{\vec{x}})=[n+1]\neq \calA(\hat{\vec{y}})$ (and vice versa), which does not contribute as much to the symmetric difference, we have by Property \ref{prop:marginals} of $\calA$, that
    \begin{align*}
        \E[|\calB(\vec{x})\Delta \calB(\vec{y})|] & = \E[|\calA(\hat{\vec{x}})\Delta \calA(\hat{\vec{y}})|] - \Pr[\calB({\vec{x}}) = \emptyset \wedge \calB({\vec{y}})\neq \emptyset] - \Pr[\calB({\vec{x}}) \neq \emptyset \wedge \calB({\vec{y}}) = \emptyset] \\
        & \leq \E[|\calA(\hat{\vec{x}})\Delta \calA(\hat{\vec{y}})|] - \Pr[\calB({\vec{x}}) \neq \emptyset \wedge \calB({\vec{y}}) = \emptyset] \\
        & = \E[|\calA(\hat{\vec{x}})\Delta \calA(\hat{\vec{y}})|] - \left(\Pr[\calB({\vec{y}}) = \emptyset] - \Pr[\calB({\vec{x}}) = \emptyset \wedge \calB({\vec{y}}) = \emptyset]\right) \\
        & \leq \E[|\calA(\hat{\vec{x}})\Delta \calA(\hat{\vec{y}})|] - \left(\Pr[\calB({\vec{y}}) = \emptyset] - \Pr[\calB({\vec{x}}) = \emptyset]\right) \\
        & \leq 4\epsilon - \epsilon = 3\epsilon. \qedhere 
    \end{align*}
\end{proof}

\subsection{Correlated Sampling for \texorpdfstring{$\Delta_{n,k}$}{} with \texorpdfstring{$O(\log n)$}{} stretch}\label{sec:icalp17}

In this section we provide a self-contained description and analysis of the stretch of the $O(\log n)$-stretch correlated sampling algorithm for $\Delta_{n,k}$ of \cite{chen2017correlated} (\Cref{prop:ICALP17algo}).

\paragraph{Algorithm description.}
Define \emph{coordinate tree} $T$ to be a full binary tree with $n$ leaves, one for each possible coordinate. The algorithm is correlated, that is the same random bits are used in the execution for all inputs. Draw random thresholds uniformly from $[0, 1]$ for all non-leaf nodes in $T$, and draw an additional threshold for the root $r$. The leaves of the tree pass their coordinate and marginal values to their parent. At each non-leaf node $z$, given the coordinates and marginal values from its children, we execute the rounding procedure given in \Cref{alg:node-rounding}. The random threshold is used to `fix' one of the coordinates, that is to round the corresponding marginal to $0$ or $1$. Both marginals are then updated,  preserving their sum, and the non-fixed coordinate and updated marginals are passed up to the parent of $z$.  At the root node after the above rounding procedure, the final coordinate is fixed using the threshold drawn for the root.
See pseudocode in \Cref{alg:tree-rounding}.

\begin{algorithm}[H]
	\caption{Initialization for Coordinate Tree Rounding($T$)}
	\label{alg:label-tree-initialization}
 \begin{algorithmic}[1]
    \State For every non-leaf node $z$, draw $\theta_z \sim \Uni[0, 1]$. 
    \State For root $r$ of $T$, draw $\theta_r \sim \Uni[0, 1]$.
    \end{algorithmic}	
\end{algorithm}	

\vspace{-0.5cm}
\begin{algorithm}[H]
	\caption{Resolve$(i, \alpha, j, \beta, \theta_z)$ \Comment{Executed at every non-leaf node}}
	\label{alg:node-rounding}
 \begin{algorithmic}[1]
    \If {$0 \leq \alpha + \beta \leq 1$} 
        \If {$\theta \leq \frac{\alpha}{\alpha + \beta}$}\label{alg:first-case}
        
        $\alpha' \leftarrow \alpha + \beta, \beta' \leftarrow 0, s \leftarrow j$.
        \Else\label{alg:second-case}
        
        $\alpha' \leftarrow 0, \beta' \leftarrow \alpha + \beta, s \leftarrow i$.
        \EndIf
    \EndIf

    \If {$1 \leq \alpha + \beta \leq 2$} 
        \If {$\theta \leq \frac{1- \beta}{2 - \alpha - \beta}$}
        
        $\alpha' \leftarrow 1, \beta' \leftarrow \alpha + \beta -1, s \leftarrow i$.
        \Else 
        
        $\alpha' \leftarrow \alpha + \beta - 1, \beta' \leftarrow 1, s \leftarrow j$.
        \EndIf
    \EndIf
    \State \textbf{Return}  {$(i, \alpha', j, \beta')$ and fix $s$}.
    \end{algorithmic}	
\end{algorithm}	

\vspace{-0.5cm}

\begin{algorithm}[H]
	\caption{Coordinate Tree Rounding$(\vec x, T)$ \Comment{Coordinate Tree Rounding}}
	\label{alg:tree-rounding}
 \begin{algorithmic}[1]
    \For{every leaf $i \in \{1, \dots, n\}$}
        \State Sends its parent coordinate $i$ and marginal value $x_i$.
    \EndFor
    \For{every non-leaf node $z \in V$, from leaf} 
        \State $z$ receives coordinates and marginals $(i, \alpha), (j, \beta)$ from its children. 
        \State Run \Cref{alg:node-rounding} on input $(i, \alpha, j, \beta, \theta_z)$ to obtain updated marginals $\alpha', \beta'$. 
        \State Send $\alpha', \beta'$, the non-fixed coordinate and updated marginals to the parent of $z$. 
    \EndFor
    \State Let $s$ be the last label that was not fixed with updated marginal $\gamma$. 
    \State Root $r$ rounds the marginal to 1 if $\gamma \leq \theta_r$ and to 0 otherwise.
    \State \textbf{Return}  {$\hat{\vec x}$ where $\hat{\vec x}_i = 1$ if coordinate $i$'s marginal was rounded to 1, and 0 otherwise}. 
    \end{algorithmic}	
\end{algorithm}	

As any tree on $n$ leaves has $O(n)$ nodes, and each node in tree $T$ requires $O(1)$ time, we have:
\begin{obs}
    \Cref{alg:tree-rounding} and \Cref{alg:label-tree-initialization} take $O(n)$ time.
\end{obs}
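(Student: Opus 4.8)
The plan is to bound the total work as ``number of tree nodes $\times$ per-node cost, plus the cost of emitting the answer.'' First I would recall the combinatorics of the coordinate tree: since $T$ is a full binary tree with $n$ leaves, it has exactly $n-1$ internal (non-leaf) nodes and $2n-1$ nodes in all, so $|V(T)| = O(n)$. For the initialization routine \Cref{alg:label-tree-initialization}, each call draws one uniform threshold $\theta_z\sim\Uni[0,1]$ per non-leaf node plus one extra threshold $\theta_r$ for the root; that is $O(n)$ independent samples, each costing $O(1)$ under the standard assumption that drawing a $\Uni[0,1]$ value takes constant time. Hence initialization runs in $O(n)$.

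Next I would walk through the three phases of \Cref{alg:tree-rounding}. (i) Each of the $n$ leaves forwards its index and marginal value to its parent, which is $O(1)$ per leaf and $O(n)$ overall. (ii) We process the internal nodes bottom-up in postorder; at each of the $n-1$ internal nodes we invoke \textsc{Resolve} (\Cref{alg:node-rounding}), which performs only a constant number of comparisons, additions/subtractions and (at most two) divisions, fixes a single coordinate, and passes $O(1)$ data to its parent --- so $O(1)$ per internal node and $O(n)$ for the whole traversal. (iii) The root performs one final $O(1)$ comparison to fix the last coordinate, and then reading off the fixed $0/1$ values produces the output vector $\hat{\vec x}\in\{0,1\}^n$ in $O(n)$ time. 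Summing phases (i)--(iii) gives $O(n)$, and together with the $O(n)$ initialization this proves the claim.

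I do not anticipate a real obstacle --- the argument is essentially the sentence preceding the statement. The only point worth flagging is a modeling one: \textsc{Resolve} involves divisions such as $\tfrac{\alpha}{\alpha+\beta}$ and $\tfrac{1-\beta}{2-\alpha-\beta}$, so the $O(1)$-per-node claim is made in the standard real-RAM / bounded-precision model in which an arithmetic operation (and a random sample) costs $O(1)$, consistent with the $O(1)$-time bit-wise operations already assumed for this algorithm in \Cref{sec:icalp17}. Under that convention the whole bound is immediate.
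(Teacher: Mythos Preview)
Your proposal is correct and matches the paper's approach exactly: the paper itself does not give a separate proof, merely noting in the preceding sentence that any tree on $n$ leaves has $O(n)$ nodes and each node requires $O(1)$ time. Your write-up is just a careful unpacking of that sentence, including the same per-node cost accounting for \textsc{Resolve} and the same real-RAM assumption on arithmetic.
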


By a simple inductive argument
\Cref{alg:tree-rounding}
preserves the marginals and sum of marginals at each step, and hence satisfies the cardinality constraint  and marginal preservation  \cite{srinivasan2001distributions}.
\begin{prop}\label{prop:tree-properties}
\Cref{alg:tree-rounding} satisfies Properties \ref{prop:k-uniform} and \ref{prop:marginals}.
\end{prop}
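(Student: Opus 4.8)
The statement to prove is \Cref{prop:tree-properties}: that \Cref{alg:tree-rounding} satisfies Properties \ref{prop:k-uniform} (cardinality) and \ref{prop:marginals} (marginal preservation).

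\textbf{Proof plan.} The plan is to argue by induction over the levels of the coordinate tree $T$, showing that the rounding procedure \textsc{Resolve} maintains a precise invariant: at every node $z$, if $\alpha$ and $\beta$ are the two marginals arriving from $z$'s children, then (i) all the coordinates already fixed in the subtree rooted at $z$, together with the non-fixed coordinate carried up with its updated marginal, preserve the total sum $\alpha+\beta$ (equivalently, the sum of marginals over the whole original subtree is unchanged), and (ii) each coordinate $i$ in the subtree is ``conditionally correct'' in the sense that, over the randomness of the thresholds inside the subtree, the probability that $i$ is fixed to $1$ equals $x_i$ and otherwise its carried marginal has the right conditional expectation. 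The cleanest way to package this is the standard fact about pivotal sampling \cite{srinivasan2001distributions}: a single \textsc{Resolve} step applied to $(i,\alpha,j,\beta)$ with $\theta_z\sim\Uni[0,1]$ produces $(\alpha',\beta')$ with $\alpha'+\beta'=\alpha+\beta$ deterministically, $\E[\alpha']=\alpha$, $\E[\beta']=\beta$, and one of the two coordinates fixed to a $\{0,1\}$ value whose expectation matches its incoming marginal.

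\textbf{Key steps, in order.} First I would verify the single-node guarantee for \textsc{Resolve} (\Cref{alg:node-rounding}) by a short case analysis on whether $\alpha+\beta\in[0,1]$ or $\alpha+\beta\in[1,2]$. In the first case, with probability $\frac{\alpha}{\alpha+\beta}$ we set $\alpha'=\alpha+\beta$ and fix $s=j$ to $0$; one checks $\E[\alpha'] = \frac{\alpha}{\alpha+\beta}(\alpha+\beta) = \alpha$, $\E[\beta']=\beta$, $\Pr[j\text{ fixed to }1]=0=\beta$ restricted appropriately — wait, more carefully, $j$ is fixed to $0$ here and to its full value in the other branch, so $\Pr[\text{value of }j=1]$ across branches works out to $\beta$; symmetrically for $i$ in the second sub-branch. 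In the range $\alpha+\beta\in[1,2]$, with probability $\frac{1-\beta}{2-\alpha-\beta}$ we fix $i$ to $1$ and set $\alpha'=1$, $\beta'=\alpha+\beta-1$, and the complementary branch fixes $j$ to $1$; again a one-line computation gives $\E[\alpha']=\alpha$ and $\Pr[i=1]=\alpha$ over the two branches, and $\alpha'+\beta'=\alpha+\beta$ in both. Second, I would set up the induction on the tree: the leaf sends $(i,x_i)$, trivially satisfying the invariant (sum $=x_i$, ``marginal'' $=x_i$). For an internal node $z$ with children carrying $(i,\alpha)$ and $(j,\beta)$ where by induction $\E[\alpha]=\sum_{\ell\in \text{subtree}_1}x_\ell$ restricted to the not-yet-fixed part and all fixed coordinates have the right marginal, applying the single-node guarantee and linearity of expectation (plus the independence of $\theta_z$ from the subtree randomness) propagates both parts of the invariant to $z$. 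Third, at the root, the final non-fixed coordinate $s$ has updated marginal $\gamma$ with $\E[\gamma]=x_s$ (from the invariant) and the total number of coordinates fixed to $1$ plus $\gamma$ equals $\|\vec x\|_1$; rounding $s$ to $1$ with probability $\gamma$ via $\theta_r$ then gives $\Pr[s=1]=\E[\gamma]=x_s$ and makes $|\hat{\vec x}| = $ (integer) $\in\{\lfloor\|\vec x\|_1\rfloor,\lceil\|\vec x\|_1\rceil\}$, since $\|\vec x\|_1$ is preserved as the number of $1$'s plus $\gamma\in[0,1]$, so rounding $\gamma$ up or down yields exactly those two possibilities. This establishes \ref{prop:k-uniform} and \ref{prop:marginals}.

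\textbf{Main obstacle.} The routine case analysis inside \textsc{Resolve} is mechanical; the genuinely delicate point is getting the induction hypothesis \emph{stated} strongly enough. It is not enough to say ``marginals are preserved in expectation'' — one needs the joint statement that at every node, \emph{conditioned on the entire history outside the subtree}, each coordinate's fixing probability or carried marginal is exactly right, and that the carried sum is preserved pointwise (not just in expectation). The pointwise sum preservation is what actually delivers the hard cardinality constraint \ref{prop:k-uniform} (a purely expectation-level argument would only bound $\E[|\hat{\vec x}|]$, not force $|\hat{\vec x}|\le k$ deterministically). So the crux is formulating the invariant as ``$\alpha'+\beta' = \alpha+\beta$ always, and $\E[\cdot]$ correct conditionally,'' and then checking that \textsc{Resolve} respects the pointwise equality in all four branches — which it does by inspection of \Cref{alg:node-rounding}. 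I would cite \cite{srinivasan2001distributions} for the underlying pivotal-sampling fact and keep the write-up to the inductive propagation.
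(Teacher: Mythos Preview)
Your proposal is correct and follows exactly the approach the paper indicates: the paper simply states that ``by a simple inductive argument \Cref{alg:tree-rounding} preserves the marginals and sum of marginals at each step'' and cites \cite{srinivasan2001distributions}, which is precisely the invariant (pointwise sum preservation plus expected marginal preservation via the single-\textsc{Resolve} case analysis) that you flesh out. Your write-up is more detailed than the paper's one-line justification, but the underlying argument is the same.
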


\Cref{alg:tree-rounding} (with initialization \Cref{alg:label-tree-initialization}) was shown to have a stretch of $O(\log n)$ by \cite{chen2017correlated}. As a full proof is not available online, we provide one here for completeness.\footnote{Again, we thank Roy Schwartz for providing us a full version of \cite{chen2017correlated}.}

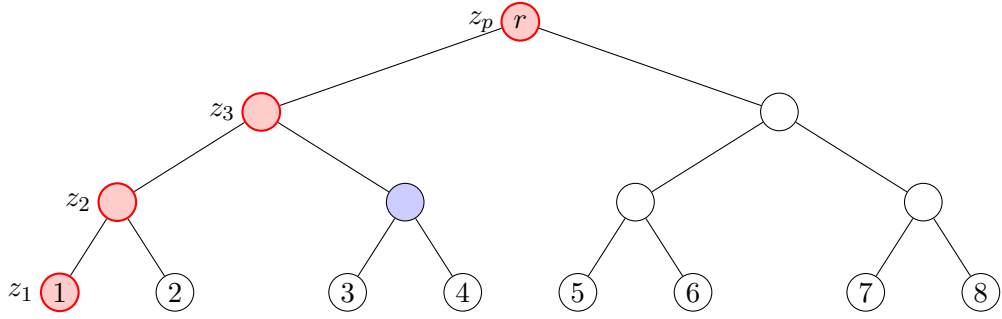
\begin{figure}
        \centering
        
\begin{tikzpicture}[
level 1/.style={sibling distance=18em},
  level 2/.style={sibling distance=10em},
  level 3/.style={sibling distance=4em},
  level 4/.style={sibling distance=2em},
  level 5/.style={sibling distance=1.2em},
  every node/.style={circle, draw, minimum size=0.5cm, inner sep=0pt},
  path node/.style={fill=white, draw=red, thick, fill=red!20},   
  fixed node/.style={fill=blue!20}, 
  label style/.style={text height=1.5ex, text depth=.25ex},
  level distance=1.2cm
]

\node[path node, label=west:$z_p$] (r) {$r$}
  child {
    node[path node, label=west:$z_3$] { }
    child {
      node[path node, label=west:$z_2$] { }
      child {
        node[path node, label=west:$z_1$] {$1$}
      }
      child {
        node {$2$}
      }
    }
    child {
      node[fixed node]  { }
      child {
        node{$3$}
      }
      child {
        node {$4$}
      }
    }
  }
  child {
    node { }
    child {
      node { }
      child {
        node {$5$}
      }
      child {
        node {$6$}
      }
    }
    child {
      node { }
      child {
        node {$7$}
      }
      child {
        node {$8$}
      }
    }
  };

\end{tikzpicture}
        \caption{Red nodes highlight the path $P$ from the root $r$ to the leaf representing coordinate $1$ in $T$. The blue node is not on $P$ and the input from this node to $z_3$ is the same for both inputs $\vec x$~and~$\vec y$.}
        \label{fig:tree-rounding-path}
    \end{figure}

\begin{lem}\label{lem:tree-rounding}
     \Cref{alg:tree-rounding} is a $2\lceil \log n\rceil$-stretch correlated rounding algorithm for~$\Delta_{n,k}$.
\end{lem}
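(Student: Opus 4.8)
\textbf{Proof plan for \Cref{lem:tree-rounding}.}
The plan is to fix two vectors $\vec{x},\vec{y}\in\Delta_{n,k}$ and, by \Cref{obs:triangle-ineq}, reduce to the case $\vec{y}=\vec{x}+\veps\cdot\vec{e}_i$ for infinitesimally small $\veps>0$, so that $\norm{\vec{x}-\vec{y}}_1=\veps$. Since \Cref{alg:tree-rounding} already satisfies Properties \ref{prop:k-uniform} and \ref{prop:marginals} (\Cref{prop:tree-properties}), only the stretch needs to be bounded, and by triangle inequality it suffices to show $\E[|X\oplus Y|]\le 2\lceil\log n\rceil\cdot\veps$ where $X=\hat{\vec{x}},\ Y=\hat{\vec{y}}$ are the rounded outputs. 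The central object is the root-to-leaf path $P=(z_1=\text{leaf }i,\ z_2,\dots,z_{h}=r)$ from the leaf representing coordinate $i$ up to the root, as in \Cref{fig:tree-rounding-path}; note $|P|\le\lceil\log n\rceil+1$ since $T$ is a full binary tree on $n$ leaves.

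The key structural observation I would establish is a \emph{coupling / propagation} claim: as long as the two executions agree on everything that has happened so far, the only place the two runs can differ is along $P$. Concretely, at every non-leaf node $z\notin P$, both children deliver identical coordinate-marginal pairs in the two runs (this is an easy induction up the tree starting from the leaves, where only leaf $i$ differs, by an amount $\veps$), so $z$'s threshold $\theta_z$ resolves it identically and fixes the same coordinate to the same value; hence $z$ contributes nothing to $X\oplus Y$. Along $P$, at each node $z_t$ the two runs receive inputs that differ in exactly one marginal, by exactly $\veps$ (before $z_t$) — here I would track the invariant that the ``discrepancy'' is a single coordinate whose marginal differs by $\veps$ in the two runs, and show \Cref{alg:node-rounding} preserves this: with probability $O(\veps)$ (the threshold $\theta_{z_t}$ lands in an interval of width proportional to $\veps/(\text{normalizing denominator})$) the run ``diverges'' at $z_t$, fixing possibly-different coordinates or values and creating an $O(1)$-size symmetric difference in the final output; otherwise the same coordinate is fixed to the same value in both runs, one passed-up marginal differs by exactly $\veps$, and the invariant is maintained one level higher. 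Since each of the $\le\lceil\log n\rceil$ internal nodes on $P$ contributes divergence probability $O(\veps)$ and, conditioned on divergence at that node, at most $2$ to $|X\oplus Y|$ (two coordinates flipped), a union bound over the nodes of $P$ gives $\E[|X\oplus Y|]\le 2\lceil\log n\rceil\cdot\veps$, i.e. stretch $2\lceil\log n\rceil$. The root step is handled the same way, with $\theta_r$ landing in a width-$\veps$ interval.

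I would carry this out in the following order: (1) set up the reduction and notation; (2) prove the off-path agreement lemma by bottom-up induction; (3) formulate the on-path invariant (single coordinate, marginal gap exactly $\veps$, same set of already-fixed coordinates/values in both runs) and verify \Cref{alg:node-rounding} preserves it with probability $1-O(\veps)$ and otherwise causes $O(1)$ output discrepancy — this requires a short case analysis over the two regimes $\alpha+\beta\in[0,1]$ and $\alpha+\beta\in[1,2]$ and, within each, the two threshold sub-cases, checking that the ``bad'' threshold window has width $\le\veps$ after multiplying by the relevant Jacobian/denominator, so that its probability is $\le\veps$; (4) union-bound over $P$ and add the root contribution; (5) conclude.

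The main obstacle I anticipate is step (3): carefully pinning down the right invariant so the induction goes through, and verifying that in \emph{every} branch of \Cref{alg:node-rounding} the probability of divergence is genuinely $O(\veps)$ with the correct constant (so that the per-node contribution is exactly $2\veps$ rather than some larger multiple). In particular one must be careful about the normalization: the relevant threshold intervals are ratios like $\alpha/(\alpha+\beta)$ or $(1-\beta)/(2-\alpha-\beta)$, and a $\veps$ change in $\alpha$ or $\beta$ changes these by $O(\veps/(\alpha+\beta))$ or similar, which could blow up if $\alpha+\beta$ is tiny — so I would need to check that in those degenerate regimes the passed-up marginal discrepancy is correspondingly damped, keeping the \emph{absolute} probability of divergence bounded by $\veps$. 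A clean way to organize this is to note that \Cref{alg:node-rounding} is a correlated-rounding step on a two-coordinate subproblem summing to $\alpha+\beta$, and the probability that it behaves differently on two inputs at $\ell_1$-distance $\veps$ (with equal sums, or sums differing by $\veps$) is at most $\veps$ by a direct computation — essentially the $k=1$-style stretch-2 analysis localized to one node — after which the global bound is just the union bound over the $\le\lceil\log n\rceil$ nodes of $P$ times the factor $2$ for the size of each local discrepancy.
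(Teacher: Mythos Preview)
Your high-level structure matches the paper's: reduce via \Cref{obs:triangle-ineq} to $\vec{y}=\vec{x}+\veps\cdot\vec{e}_i$, restrict attention to the root-to-leaf path $P$, observe that off-path nodes behave identically in both runs, and sum a per-level contribution along $P$. The divergence lies in what you track per level and, accordingly, in the claimed per-level bound.

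The genuine gap is your repeated assertion that at each node $z_t$ the probability of ``divergence'' (the two runs making different rounding decisions) is at most $\veps$. This is false. Take the regime $\alpha+\beta\le 1$: the thresholds are $\alpha/(\alpha+\beta)$ and $(\alpha+\veps)/(\alpha+\beta+\veps)$, so the divergence probability is $\frac{\beta\veps}{(\alpha+\beta)(\alpha+\beta+\veps)}$, which is $\Theta(\veps/(\alpha+\beta))$ and can be $\Omega(1)$ when $\alpha+\beta$ is itself of order $\veps$ (e.g.\ $\alpha=\beta=\veps$ gives probability $1/6$). Your suggested fix --- that in such degenerate regimes ``the passed-up marginal discrepancy is correspondingly damped, keeping the absolute probability of divergence bounded by $\veps$'' --- does not hold: the passed-up marginal gap is \emph{always} exactly $\veps$ (by conservation of sums), so nothing damps the probability. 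What \emph{is} damped is the magnitude of the damage: when $\alpha+\beta$ is small, a divergence shifts fractional mass of size only $O(\alpha+\beta)$, and the product (probability $\times$ magnitude) is $\frac{\beta\veps}{(\alpha+\beta)(\alpha+\beta+\veps)}\cdot 2(\alpha+\beta)=\frac{2\beta\veps}{\alpha+\beta+\veps}\le 2\veps$.

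This is exactly the quantity the paper tracks. Rather than a ``probability $\le\veps$, cost $\le 2$'' accounting, the paper defines $\vec{x}^t,\vec{y}^t$ to be the full marginal vectors after processing $z_t$ and proves the telescoping bound $\E\big[\norm{\vec{x}^t-\vec{y}^t}_1-\norm{\vec{x}^{t-1}-\vec{y}^{t-1}}_1\big]\le 2\veps$ for every $t$ (\Cref{lem:tree-distance}). This framing also dispenses with your invariant ``same set of already-fixed coordinates/values in both runs'': once a divergence has occurred, the two runs pass up \emph{different} coordinates $i_x\ne i_y$ (still with marginals differing by $\veps$), your invariant is broken, and future nodes must be analyzed in that asymmetric state. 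The paper handles this explicitly with a case split $i_x=i_y$ versus $i_x\ne i_y$ at every level, verifying the $2\veps$ bound in both. Your ``union bound over divergences, each contributing $O(1)$ to the output'' does not capture this compounding behavior. Replacing your per-node claim with the expected-$\ell_1$-increase statement, and carrying the case analysis through for $i_x\ne i_y$, would repair the argument and bring it in line with the paper's proof.
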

\begin{proof}
    Let $\vec x, \vec y$ be inputs for \Cref{alg:tree-rounding}, $T$ be the coordinate tree and $\hat{\vec x}, \hat{\vec y}$ be the corresponding outputs of the algorithm. We will show 
    $$\mathbb{E}\left[\ \norm{\hat{\vec x} - \hat{\vec y}}_1\right] \leq 2\lceil \log n\rceil \cdot \norm{\vec x - \vec y}_1.$$

    By \Cref{obs:triangle-ineq}, we can assume input vectors $\vec x, \vec y$ only differ by $\veps$ in the first coordinate. Let $P$ be the path of length $p$ from the leaf representing coordinate $1$ to the root $r$ in $T$, and denote the sequence of nodes in this path $z_1 = 1, \dots, z_p = r$. For any non-leaf node $z_t$ along path $P$, $z_t$ receives coordinates from two children, one on $P$ and one not. The input from the child not on $P$ is always the same for both inputs, see \Cref{fig:tree-rounding-path}. It follows that the contribution to $\mathbb{E}\left[\norm{\hat{\vec x} - \hat{\vec y}}_1\right]$ from coordinates greater than $p$  is zero.

    Let $\vec x^t, \vec y^t \in [0, 1]^p$ be the vectors of marginals of the first $p$ coordinates after the execution of \Cref{alg:node-rounding} for node $z_t$ for inputs $\vec x$ and $\vec y$ respectively. We only consider the first $p$ coordinates since the execution can only differ along $P$. For $p = \lceil \log n \rceil$, the proof follows from the fact that $\norm{\vec{x}^1 - \vec{y}^1}_1 = \veps$ (by \Cref{obs:triangle-ineq}), and by summing over $p$ the inequality proved in \Cref{lem:tree-distance}.
\end{proof}

\begin{lem}\label{lem:tree-distance} 
For any vectors $\vec x, \vec y$ and $\forall t = 2, \dots, p$,
$$\mathbb{E}\left[\norm{\vec{x}^t - \vec{y}^t}_1 - \norm{\vec{x}^{t-1} - \vec{y}^{t-1}}_1\right] \leq 2 \veps.$$
\end{lem}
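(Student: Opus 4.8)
The plan is to reduce to an infinitesimal single‑coordinate perturbation and then analyze one Resolve step in isolation. By \Cref{obs:triangle-ineq} we may assume $\vec y = \vec x + \veps\cdot \vec e_1$, so the two executions of \Cref{alg:tree-rounding} use identical thresholds and differ only through coordinate $1$. Fix $t\in\{2,\dots,p\}$ and condition on all the randomness at nodes that are proper descendants of $z_t$. This conditioning determines the states $\vec x^{t-1},\vec y^{t-1}$, and in particular it determines (i) the live coordinate $\ell^{\vec x}$ (resp.\ $\ell^{\vec y}$) with marginal $a^{\vec x}$ (resp.\ $a^{\vec y}$) passed up to $z_t$ by its on‑path child $z_{t-1}$ in the $\vec x$‑execution (resp.\ $\vec y$‑execution), and (ii) the coordinate $c$ and marginal $b$ passed up by the off‑path child $w_t$ of $z_t$. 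The crucial point is that $(c,b)$ is \emph{identical} in the two executions: the subtree rooted at $w_t$ contains only leaves $i$ with $x_i=y_i$, hence is processed in exactly the same way. The only fresh randomness consumed at $z_t$ is the single threshold $\theta_{z_t}\sim\Uni[0,1]$, and \Cref{alg:node-rounding} changes only the marginals of coordinates in $\{\ell^{\vec x},\ell^{\vec y},c\}$. Therefore $\norm{\vec x^t-\vec y^t}_1 - \norm{\vec x^{t-1}-\vec y^{t-1}}_1$ equals the change in the $\ell_1$‑discrepancy restricted to these at most three coordinates, and as a function of $\theta_{z_t}$ it is a step function; the plan is to show $\E_{\theta_{z_t}}\!\left[\,\cdot\mid \text{randomness below }z_t\right]\le 2\veps$ in every configuration, and then take expectation over the conditioning.

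The heart of the argument is a bounded case analysis, splitting according to (a) whether $\ell^{\vec x}=\ell^{\vec y}$; (b) whether the $\vec x$‑execution's call lands in the branch $\alpha+\beta\le 1$ or $\alpha+\beta\ge 1$ of \Cref{alg:node-rounding}; (c) likewise for the $\vec y$‑execution; and, when $\ell^{\vec x}\neq \ell^{\vec y}$, according to the (already determined) values to which $\ell^{\vec x}$ and $\ell^{\vec y}$ were previously fixed in the other execution. In each case the two executions' thresholds --- which in the light branch with $\ell^{\vec x}=\ell^{\vec y}$ are $\frac{a^{\vec x}}{a^{\vec x}+b}$ and $\frac{a^{\vec y}}{a^{\vec y}+b}$, and in the other branches are the corresponding heavy‑case ratios or a regime boundary --- partition $[0,1]$ into at most three sub‑intervals. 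On the two ``outer'' sub‑intervals the executions make matching structural decisions and the restricted discrepancy changes by $O(\veps)$ (it is carried by a single coordinate with a $\pm\veps$ gap); only on the single ``middle'' sub‑interval can the restricted discrepancy jump to $\Theta(1)$. Two facts, both consequences of the shared off‑path input $b$ and of the sum‑preservation of Resolve, make this integrate to $2\veps$: the middle sub‑interval has width $O(\veps)$ --- e.g.\ in the light/light, $\ell^{\vec x}=\ell^{\vec y}$ case it is $\frac{b\,|a^{\vec x}-a^{\vec y}|}{(a^{\vec x}+b)(a^{\vec y}+b)}$ with $|a^{\vec x}-a^{\vec y}|=\veps$, and in the boundary‑straddling cases it is exactly the $\theta_{z_t}$‑window in which one execution is ``light'' and the other ``heavy'' --- and the cancellation forced by sum‑preservation makes (width of middle interval)$\times$(discrepancy jump there) telescope, e.g.\ to $\tfrac{2\veps b}{\max(a^{\vec x},a^{\vec y})+b}\le 2\veps$ in that same case.

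I expect the main obstacle to be purely the bookkeeping in this case analysis: in each of the (light/heavy)$\times$(light/heavy)$\times$(live coords equal/unequal) branches one must track which physical coordinate holds which marginal in each of the two executions, and verify that the middle‑interval contribution is $\le 2\veps$ rather than merely $O(\veps)$ (the constant is essentially tight, the bound degrading to $2\veps(1-a^{\vec x})$ in the extreme case). The cleanest way to keep this under control is to observe that a light‑branch call of \Cref{alg:node-rounding} on $(\ell,a)$ and $(c,b)$ is itself a two‑point correlated‑sampling step --- it assigns the whole combined mass $a+b$ to $\ell$ with probability $\frac{a}{a+b}$ and to $c$ otherwise --- so the ``outer interval'' contributions are bounded exactly as in the $2$‑stretch analysis for $\Delta_{n,1}$ (cf.\ \Cref{prop:fast-exponential-clocks}), and a direct computation is needed only for the finitely many remaining heavy or boundary‑straddling configurations. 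Summing the per‑step bound over $t=2,\dots,p$ and adding $\norm{\vec x^1-\vec y^1}_1=\veps$ then yields the $2\lceil\log n\rceil$‑stretch bound of \Cref{lem:tree-rounding}.
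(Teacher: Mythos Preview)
Your proposal is correct and follows essentially the same route as the paper: condition on all randomness below $z_t$, use that the off‑path child contributes the identical $(c,b)$ to both executions while the on‑path live marginal differs by exactly $\veps$ (via sum‑preservation of Resolve), then case‑analyze the single threshold $\theta_{z_t}$ according to whether $\alpha+\beta\lessgtr 1$ and whether the two live coordinates coincide. The paper carries out the same plan with explicit tables for the three threshold sub‑intervals and, when the live coordinates differ, the four fixings of the other execution; one simplification it exploits that you can too is that by taking $\veps$ infinitesimal (legitimate by \Cref{obs:triangle-ineq}) the ``boundary‑straddling'' light/heavy mixed regime never occurs, so only the light/light and heavy/heavy branches need the bookkeeping you describe.
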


\begin{proof}
Fix $t \in \{2, \dots, p\}$ and consider node $z_t$ on path $P$ which takes inputs from $z_{t-1}$ which lies on $P$ and $w_{t-1}$ which does not. Let $i_x$ and $i_x$ denote the coordinates that $z_t$ receives from $z_{t-1}$ for $\vec x$ and $\vec y$ respectively, while $j$ is the coordinate $z_t$ receives from $w_{t-1}$ in both cases. 
From the definitions of \Cref{alg:node-rounding}, \Cref{alg:tree-rounding} and \Cref{prop:tree-properties}, the marginal value of $i_y$ that $z_{t-1}$ sends to $z_t$ in the execution of rounding for $\vec y$ is greater by exactly $\veps$ from the marginal value $i_x$ that $z_{t-1}$ sends to $z_t$ in the execution of rounding for $\vec x$. Hence we can assume that node $z_t$ receives coordinate and marginals $(i_x, \alpha)$ and $(j, \beta)$ for input $\vec x$, while it receives $(i_y, \alpha + \veps)$ and $(j, \beta)$ for input $\vec y$. Let $\vec x^{t-1}$ denote the vector $\vec x$ after performing coordinate tree rounding for $t-1$ iterations and $\vec x_{i}^{t-1}$ its $i^{th}$ coordinate. We have $\norm{\vec{x}^1 - \vec{y}^1}_1 = \veps$. Let $\Delta\triangleq \norm{\vec{x}^t - \vec{y}^t}_1 - \norm{\vec{x}^{t-1} - \vec{y}^{t-1}}_1$. 

There are two cases depending on the value of $\alpha + \beta$, namely $0 \leq \alpha + \beta < \alpha + \beta + \veps \leq 1$ and $1 \leq \alpha + \beta + \veps \leq 2$. Since we can take $\veps$ to be arbitrarily small, it is never the case that $0 \leq \alpha + \beta < 1 < \alpha + \beta + \veps \leq 2$. We will prove that
\[\mathbb{E}\left[\norm{\vec{x}^t - \vec{y}^t}_1 - \norm{\vec{x}^{t-1} - \vec{y}^{t-1}}_1\right] = 
		\begin{cases}
  			\veps \frac{2 \beta}{\alpha + \beta + \veps},  & 0 \leq \alpha + \beta < \alpha + \beta + \veps \leq 1, \\
  			\veps \frac{2(1 - \beta)}{2 - \alpha - \beta}, & 1 \leq \alpha + \beta + \veps \leq 2.
		\end{cases}\] 
This concludes the proof, since $\frac{2 \beta}{\alpha + \beta + \veps} \leq 2$ and $\frac{2(1 - \beta)}{2 - \alpha - \beta} \leq 2$ when $0 \leq \alpha, \beta \leq 1$.

\paragraph{Case 1: $\alpha$ and $\beta$ satisfy $0 \leq \alpha + \beta < \alpha + \beta + \veps \leq 1$.} There are three possible outcomes for the values $\vec x^t$ and $\vec y^t$. Case (i) occurs when $\vec x^t, \vec y^t$ are rounded down, case (ii) occurs when $\vec x^t$ is rounded down and $\vec y^t$ is rounded up, and case (iii) occurs when  $\vec x^t, \vec y^t$ are rounded up. The probability of each of these cases is given in \Cref{table:prob-case1}.

\renewcommand{\arraystretch}{1.4}
\begin{table}[h!]
\centering
\begin{tabular}{|c|c|c|c|}
\hline
Case & $\vec{x}^t$ values & $\vec{y}^t$ values & Probability \\
\hline
(i) & $\vec{x}_{i}^t = \alpha + \beta$, $\vec{x}_{j}^t = 0$ & $\vec{y}_{i}^t = \alpha + \beta + \veps$, $\vec{y}_{j}^t = 0$ & $\frac{\alpha}{\alpha + \beta}$ \\
(ii) & $\vec{x}_{i}^t = 0$, $\vec{x}_{j}^t = \alpha + \beta$ & $\vec{y}_{i}^t = \alpha + \beta + \veps$, $\vec{y}_{j}^t = 0$ & $\frac{\beta \veps}{(\alpha + \beta)(\alpha + \beta + \veps)}$ \\
(iii) & $\vec{x}_{i}^t = 0$, $\vec{x}_{j}^t = \alpha + \beta$ & $\vec{y}_{i}^t = 0$, $\vec{y}_{j}^t = \alpha + \beta + \veps$ & $\frac{\beta}{\alpha + \beta + \veps}$ \\
\hline
\end{tabular}
\caption{Probabilities of all possible outcomes for rounding $\vec x^t$ and $\vec y^t$ for Case 1.}
\label{table:prob-case1}
\end{table}

Given $\vec x^t, \vec y^t$,  $\Delta$ is determined by the values of $\vec x^{t-1}_{i_x}$ and $\vec y^{t-1}_{i_y}$. There are two cases, depending on the coordinate $i$.

\paragraph{1.} When $i = i_x = i_y$, $\vec x^{t-1}_{i_x} = \vec x^{t}_{i_x}$ and $\vec y^{t-1}_{i_y} = \vec y^{t}_{i_y}$, so $\lVert \vec{x}^{t-1} - \vec{y}^{t-1} \rVert_1 = \veps$. $\Delta$ takes the following values in cases (i)-(iii). 

\begin{table}[h!]
\centering
\begin{tabular}{|c|c|c|c|}
\hline
 $ $ & Case (i) & Case (ii) & Case (iii) \\
\hline
$\Delta$ & $0$ & $2(\alpha + \beta)$ & $0$ \\
\hline
\end{tabular}
\caption{$\Delta$ values for $i = i_x = i_y$ in Case 1.}
\label{table:delta-case11}
\end{table}

We can directly compute:
\begin{align*}
    \E[\Delta] = \veps \frac{2 \beta}{\alpha + \beta + \veps}.
\end{align*}

\paragraph{2.} When $i_x \neq i_y$, $\Delta$ depends on how the coordinates $i_x$ and $i_y$ were rounded in $\vec x^{t-1}$ and $\vec y^{t-1}$ respectively. There are four subcases (a)-(d). The probabilities for cases (i) - (iii) are the same as in Table 1.

\begin{table}[h!]
\centering
\begin{tabular}{|c|c|c|c|c|}
\hline
$ $ & $\vec x^{t-1}$ and $\vec y^{t-1}$ values & Case (i) & Case (ii) & Case (iii) \\
\hline
(a) & $\vec{x}_{i_y}^{t-1} = 0, \vec{y}_{i_x}^{t-1} = 0$ & $2\beta$ & $2\beta$ & $-2\alpha$ \\
(b) & $\vec{x}_{i_y}^{t-1} = 0, \vec{y}_{i_x}^{t-1} = 1$ & $0$ & $2(\alpha + \beta)$ & $0$ \\
(c) & $\vec{x}_{i_y}^{t-1} = 1, \vec{y}_{i_x}^{t-1} = 0$ & $0$ & $0$ & $2\veps$ \\
(d) & $\vec{x}_{i_y}^{t-1} = 1, \vec{y}_{i_x}^{t-1} = 1$ & $-2\beta$ & $2\alpha$ & $2(\alpha + \veps)$ \\
\hline
\end{tabular}
\caption{$\Delta$ Values for $i_x \neq i_y$ in Case 1.}
\label{table:delta-case12}
\end{table}

We can directly compute:
\begin{align*}
    \mathbb{E}[\Delta] = \frac{\veps (2 \beta)}{\alpha + \beta + \veps}.
\end{align*}

\paragraph{Case 2: $\alpha$ and $\beta$ satisfy $1 \leq \alpha + \beta < \alpha + \beta + \veps \leq 2$.} There are three possible outcomes for the values $\vec x^t$ and $\vec y^t$. Case (i) occurs when $\vec x^t, \vec y^t$ are rounded down, case (ii) occurs when $\vec x^t$ is rounded down and $\vec y^t$ is rounded up, and case (iii) occurs when  $\vec x^t, \vec y^t$ are rounded up. The probability of each of these cases is given in \Cref{table:prob-case2}.

\begin{table}[h!]
\centering
\begin{tabular}{|c|c|c|c|}
\hline
Case & $\vec{x}^t$ values & $\vec{y}^t$ values & Probability \\
\hline
(i) & $\vec{x}_{i_x}^t = 1$, $\vec{x}_{j}^t = \alpha + \beta - 1$ & $\vec{y}_{i_y}^t = 1$, $\vec{y}_{j}^t = \alpha + \beta + \veps - 1$ & $\frac{1 - \beta}{2 - \alpha - \beta}$ \\
(ii) & $\vec{x}_{i_x}^t = \alpha + \beta - 1$, $\vec{x}_{j}^t = 1$ & $\vec{y}_{i_y}^t = 1$, $\vec{y}_{j}^t = \alpha + \beta + \veps - 1$ & $\frac{\veps(1 - \beta)}{(2 - \alpha - \beta)(2 - \alpha - \beta - \veps)}$ \\
(iii) & $\vec{x}_{i_x}^t = \alpha + \beta - 1$, $\vec{x}_{j}^t = 1$ & $\vec{y}_{i_y}^t = \alpha + \beta + \veps - 1$, $\vec{y}_{j}^t = 1$ & $\frac{1 - \alpha - \veps}{2 - \alpha - \beta - \veps}$ \\
\hline
\end{tabular}
\caption{Probabilities of all possible outcomes for rounding $\vec x^t$ and $\vec y^t$ for Case 2.}
\label{table:prob-case2}
\end{table}

Given $\vec x^t, \vec y^t$,  $\Delta$ is determined by the values of $\vec x^{t-1}_{i_x}$ and $\vec y^{t-1}_{i_y}$. There are two cases, depending on the coordinate $i$.

\paragraph{1.} When $i = i_x = i_y$, $\vec x^{t-1}_{i_x} = \vec x^{t}_{i_x}$ and $\vec y^{t-1}_{i_y} = \vec y^{t}_{i_y}$, so $\lVert \vec{x}^{t-1} - \vec{y}^{t-1} \rVert_1 = \veps$. $\Delta$ takes the following values in cases (i)-(iii). 

\begin{table}[h!]
\centering
\begin{tabular}{|c|c|c|c|}
\hline
 $ $ & Case (i) & Case (ii) & Case (iii) \\
\hline
$\Delta$ & $0$ & $2(2 - \alpha - \beta - \veps)$ & $0$ \\
\hline
\end{tabular}
\caption{$\Delta$ values for $i = i_x = i_y$ in Case 2.}
\label{table:delta-case21}
\end{table}

We can directly compute:
\begin{align*}
    \mathbb{E}[\Delta] = \veps \frac{2(1 - \beta)}{2 - \alpha - \beta}.
\end{align*}

\paragraph{2.} When $i_x \neq i_y$, $\Delta$ depends on how the coordinates $i_x$ and $i_y$ were rounded in $\vec x^{t-1}$ and $\vec y^{t-1}$ respectively. The probabilities for cases (i) - (iii) are the same as in Table 4. 

\begin{table}[h!]
\centering
\begin{tabular}{|c|c|c|c|c|}
\hline
Case & $\vec x^{t-1}$ and $\vec y^{t-1}$ values & Case (i) & Case (ii) & Case (iii) \\
\hline
(a) & $\vec{x}_{i_y}^{t-1} = 0, \vec{y}_{i_x}^{t-1} = 0$ & $2(1 - \alpha)$ & $2(1 - \alpha - \veps)$ & $-2(1 - \beta)$ \\
(b) & $\vec{x}_{i_y}^{t-1} = 0, \vec{y}_{i_x}^{t-1} = 1$ & $0$ & $2(2 - \alpha - \beta - \veps)$ & $0$ \\
(c) & $\vec{x}_{i_y}^{t-1} = 1, \vec{y}_{i_x}^{t-1} = 0$ & $2\veps$ & $0$ & $0$ \\
(d) & $\vec{x}_{i_y}^{t-1} = 1, \vec{y}_{i_x}^{t-1} = 1$ & $-2(1 - \alpha - \veps)$ & $2(1 - \beta)$ & $2(1 - \beta)$ \\
\hline
\end{tabular}
\caption{$\Delta$ Values when $i_x \neq i_y$ in Case 2.}
\label{table:delta-case22}
\end{table}

We can directly compute:
\begin{align*}
    \mathbb{E}[\Delta] = \veps \frac{2(1 - \beta)}{2 - \alpha - \beta}.
\end{align*}

This completes the proof.
\end{proof}

\begin{obs}\label{obs:fractional-bound-tree-rounding}
    \Cref{alg:tree-rounding} applied to $\vec{x},\vec{y}$ has stretch at most $2(frac(\vec{x})+frac(\vec{y}))$, for $frac(\vec{z})=|\{i\mid z_i\not\in\{0,1\}\}|$ the number of fractional coordinates of $z$. 
\end{obs}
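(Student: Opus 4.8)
The plan is to rerun the path-tracing argument from the proof of \Cref{lem:tree-rounding}, but to bound the number of \emph{relevant} nodes on the critical path by the number of fractional coordinates instead of by $\lceil\log n\rceil$. As in the proofs of \Cref{thm:base-of-recurrence,lem:tree-rounding}, by \Cref{obs:triangle-ineq} it suffices to treat $\vec y=\vec x+\veps\cdot\vec e_i$ for infinitesimally small $\veps>0$ and to show $\E[\,|\hat{\vec x}\oplus\hat{\vec y}|\,]\le 2(frac(\vec x)+frac(\vec y))\cdot\veps$. Let $P=z_1,\dots,z_p$ be the leaf-to-root path with $z_1$ the leaf of coordinate $i$, and for $t\ge 2$ let $w_{t-1}$ be the child of $z_t$ not on $P$. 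Every coordinate outside $P$ lies in the subtree of exactly one $w_{t-1}$, and since $\vec x$ and $\vec y$ agree outside coordinate $i$, each such subtree receives identical leaf marginals in the two runs; hence the accounting of \Cref{lem:tree-rounding} gives $\E[\,|\hat{\vec x}\oplus\hat{\vec y}|\,]=\E[\|\hat{\vec x}-\hat{\vec y}\|_1]=\veps+\sum_{t=2}^{p}\E[\Delta_t]$ with $\Delta_t=\|\vec x^t-\vec y^t\|_1-\|\vec x^{t-1}-\vec y^{t-1}\|_1$.

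The first ingredient is that a subtree all of whose leaves carry integral marginals is ``inert'': by induction on its height, every call to \Cref{alg:node-rounding} inside it has both inputs in $\{0,1\}$, and the case analysis of \Cref{alg:node-rounding} then shows that its outputs are again in $\{0,1\}$ and that the fixed coordinate and updated marginals are determined up to a null set (the only thresholds that arise are $0/0$ or comparisons against $0$ or $1$). Consequently such a subtree passes an integral marginal $\beta\in\{0,1\}$ up to its parent, and it does so identically in the runs on $\vec x$ and on $\vec y$, so it contributes nothing to $\hat{\vec x}\oplus\hat{\vec y}$.

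The second ingredient is that an inert sibling contributes nothing to $\sum_t\E[\Delta_t]$ either. Writing $\alpha$ (resp.\ $\alpha+\veps$) for the marginal $z_{t-1}$ passes up to $z_t$ when rounding $\vec x$ (resp.\ $\vec y$) and $\beta$ for the marginal $w_{t-1}$ passes up, the closed forms computed in the proof of \Cref{lem:tree-distance} are $\E[\Delta_t]=\veps\cdot\tfrac{2\beta}{\alpha+\beta+\veps}$ when $\alpha+\beta+\veps\le 1$ and $\E[\Delta_t]=\veps\cdot\tfrac{2(1-\beta)}{2-\alpha-\beta}$ when $\alpha+\beta+\veps\ge 1$. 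The $P$-side marginal never reaches $1$ (it is always at most $1-\veps$, since it is raised by $\veps$ in the $\vec y$-run and marginals stay in $[0,1]$), so $\beta=0$ forces the first case and $\beta=1$ the second, and in both cases $\E[\Delta_t]=0$. Therefore $\E[\Delta_t]\ne 0$ only when $w_{t-1}$ passes up a fractional marginal, which by the previous paragraph forces the subtree of $w_{t-1}$ to contain a fractional coordinate. These subtrees are pairwise disjoint and none contains coordinate $i$, so the number of indices $t$ with $\E[\Delta_t]\ne 0$ is at most $|\{j\ne i:\ x_j\notin\{0,1\}\}|=frac(\vec y)-1$ (using $y_j=x_j$ for $j\ne i$ and that $y_i=x_i+\veps$ is fractional). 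Since each $\E[\Delta_t]\le 2\veps$ by \Cref{lem:tree-distance}, we get $\E[\,|\hat{\vec x}\oplus\hat{\vec y}|\,]\le \veps+2\veps\,(frac(\vec y)-1)=(2\,frac(\vec y)-1)\,\veps\le 2(frac(\vec x)+frac(\vec y))\cdot\|\vec x-\vec y\|_1$, as desired.

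I expect the subtlest point to be the inert-subtree claim: when both marginals entering a node equal $1$ (so $\alpha+\beta=2$), the threshold $\tfrac{1-\beta}{2-\alpha-\beta}$ in \Cref{alg:node-rounding} is degenerate, so one must argue by hand that both coordinates are kept (one fixed at $1$ inside the subtree, the other passed up with marginal $1$) and, crucially, that this behaviour — and all the ``which coordinate gets fixed'' choices inside the subtree — is identical in the $\vec x$- and $\vec y$-runs, so that it leaves $\hat{\vec x}\oplus\hat{\vec y}$ untouched. Everything else is a direct reuse of the per-node quantities already computed in the proof of \Cref{lem:tree-distance}.
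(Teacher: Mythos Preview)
The paper states this observation without proof, so there is nothing to compare against; your per-node analysis is the natural argument and is correct. In particular, your key refinement of \Cref{lem:tree-distance}---that $\E[\Delta_t]=0$ whenever the sibling subtree $w_{t-1}$ contains no fractional leaf, because then $\beta\in\{0,1\}$ deterministically and both closed-form expressions vanish---is exactly right, and the count of contributing $t$'s by disjointness of the sibling subtrees is clean.

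There is, however, a gap in your opening reduction. \Cref{obs:triangle-ineq} justifies passing to infinitesimal steps only when the target stretch $\alpha$ is a \emph{fixed constant} independent of the pair; here $2(frac(\vec x)+frac(\vec y))$ depends on the endpoints, so after decomposing a general pair along an axis-parallel path $\vec z\to\vec z+\veps\,\vec e_j$ you must bound each step by the fractional count of the \emph{intermediate} vector, not of $\vec x$ or $\vec y$. Carrying your infinitesimal bound through (every coordinate of $\vec z$ other than the moving one lies in $\{x_{j'},y_{j'}\}$, so $|\{j'\ne j:z_{j'}\notin\{0,1\}\}|\le|F_{\vec x}\cup F_{\vec y}|$) yields stretch at most $1+2(frac(\vec x)+frac(\vec y))$, off by an additive $1$. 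That $+1$ is in fact unavoidable: if $\vec x,\vec y\in\{0,1\}^n$ are distinct, your inert-subtree argument shows the algorithm outputs them unchanged, so the stretch is $1$ while the stated bound is $0$. So the observation as written is slightly imprecise; your argument proves the correct version $1+2(frac(\vec x)+frac(\vec y))$, and your infinitesimal analysis is the substantive content.
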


\subsubsection{Input-Sparsity Time and Other Computational Considerations}

In this section we consider the time needed to implement \Cref{alg:tree-rounding} in input-sparsity time, and logarithmic parallel depth and update time. Since the latter two are quite straightforward, we defer them to the end of this section, and focus now on input-sparsity time implementation.

Naively, the algorithm of \cite{chen2017correlated} requires $O(n)$ time --- proportional to the number of nodes in the balanced tree used.
However, much of the computation and random bits used in this tree are unnecessary if the input vector $\vec{x}$ is sparse.
Intuitively, one can skip work in internal nodes of \Cref{alg:node-rounding} for nodes that do not have a leaf with non-zero initial value in both their subtrees. 
That is, we would ideally like to ``shortcut'' this step in internal nodes which are not the lowest-common ancestor (LCA) of two leaves with initially non-zero value.
Computing such LCAs will therefore prove useful shortly. As we now show, this task is simple to do in $O(1)$ time using simple bit tricks.\footnote{The following lemma is undoubtedly known, but we could not find a reference for it. We do not claim any novelty for it, and prove it only for completeness.}

\begin{figure}
        \centering
        \begin{tikzpicture}
         [level distance=1cm, 
    level 1/.style={sibling distance=10em},
      level 2/.style={sibling distance=8em},
      level 3/.style={sibling distance=4em}, level distance=1.2cm, every node/.style={circle, draw, minimum size=0.5cm, inner sep=1pt},]
        \node[label=west:$001$] {$1$}
            child { node[label=west:$010$] {$2$}
                child { node[label=west:$100$] {$4$} }
                child { node[label=west:$101$] {$5$} }
            }
            child { node[label=east:$011$] {$3$}
                child { node[label=east:$110$] {$6$} }
                child { node[label=east:$111$] {$7$} }
            };
    \end{tikzpicture}
        \caption{The standard binary encoding of nodes where the root is labeled with $1$. The left child of node $i$ is labeled $2i$ and the right child is labeled $2i + 1$.}
        \label{fig:standard-encoding}
    \end{figure}
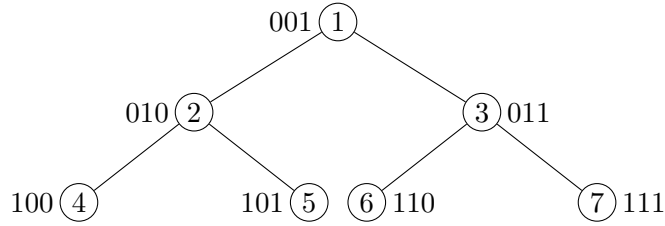

    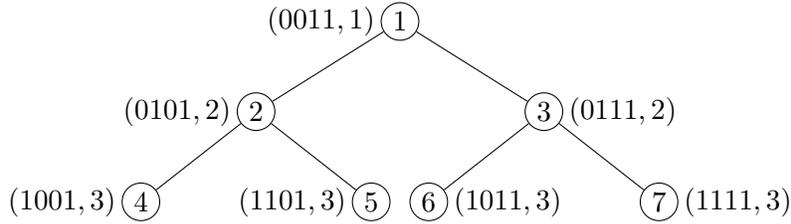
\begin{figure}
    \centering
    \begin{tikzpicture}
    [level distance=1cm, 
    level 1/.style={sibling distance=10em},
      level 2/.style={sibling distance=8em},
      level 3/.style={sibling distance=4em}, level distance=1.2cm, every node/.style={circle, draw, minimum size=0.5cm, inner sep=1pt},]
        \node[label=west:${(0011, 1)}$] {$1$}
            child { node[label=west:${(0101, 2)}$] {$2$}
                child { node[label=west:${(1001, 3)}$] {$4$} }
                child { node[label=west:${(1101, 3)}$] {$5$} }
            }
            child { node[label=east:${(0111, 2)}$] {$3$}
                child { node[label=east:${(1011, 3)}$] {$6$} }
                child { node[label=east:${(1111, 3)}$] {$7$} }
            };
    \end{tikzpicture}
    \caption{Binary tree where node $i$ is associated with tuple $Encode(i)$.}
    \label{fig:binary-tree-labels}
    \end{figure}

\begin{lemma}
    There exists an encoding of the nodes of a complete binary tree with $n$ leaves that can be stored in $O(n)$ space such that the parent and children of any node can be computed in $O(1)$ time and the LCA of two nodes can be computed in $O(1)$, time using standard RAM operations. 
\end{lemma}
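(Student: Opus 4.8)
The plan is to exhibit the encoding depicted in Figure~\ref{fig:binary-tree-labels} explicitly and show that each required operation reduces to a constant number of word operations. Place the root at depth $1$, and for a node $v$ at depth $d_v$ write its root-to-$v$ path as a bit string $p_1p_2\cdots p_{d_v-1}$, where $p_i=0$ means ``go to the left child'' and $p_i=1$ means ``go to the right child''. Define the \emph{code}
\[
c_v \;\triangleq\; 2^{d_v} \;+\; \sum_{i=1}^{d_v-1} p_i\,2^{i} \;+\; 1,
\]
and store $\mathrm{Encode}(v)\triangleq(c_v,d_v)$; one checks directly that this matches the tuples in Figure~\ref{fig:binary-tree-labels}. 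A complete binary tree with $n$ leaves has depth $\log_2 n+1$, so every code fits in $W\triangleq\log_2 n+2$ bits and $2^{W}=O(n)$. The only stored data will be a single table $\mathrm{low}[\cdot]$ indexed by $\{0,\dots,2^{W}-1\}$, with $\mathrm{low}[x]$ the position of the least significant set bit of $x$; this table is built in $O(n)$ time with one pass using only increments and shifts, so no nonstandard RAM primitive is needed and total space is $O(n)$. (A symmetric ``highest set bit'' table recovers $d_v$ as the top bit position of $c_v$, so storing $d_v$ is in fact optional.)

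I would then read the parent/children relations directly off the definition of $c_v$. A child sits at depth $d_v+1$, which raises the leading term from $2^{d_v}$ to $2^{d_v+1}$ (a $+2^{d_v}$ change), and carries one new path bit contributing $0$ or $2^{d_v}$. Hence the left child of $(c,d)$ is $(c+2^{d},\,d+1)$, the right child is $(c+2^{d+1},\,d+1)$, and, inverting, the parent of a non-root $(c,d)$ is $(c-2^{d-1+p},\,d-1)$ where $p\triangleq\lfloor c/2^{d-1}\rfloor\bmod 2$ is bit $d-1$ of $c$ (which equals the last path bit). Each of these is $O(1)$ shifts, additions and one bit extraction; I would sanity-check them against Figure~\ref{fig:binary-tree-labels} (e.g.\ the children of node $3=(7,2)$ are $(11,3)$ and $(15,3)$, namely nodes $6$ and $7$).

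For the LCA the key structural fact is that bits $1,\dots,d_v-1$ of $c_v$ store \emph{exactly} the path $p_1,\dots,p_{d_v-1}$, so $\mathrm{LCA}(u,v)$ is governed by the longest common prefix of these two path strings. Assuming WLOG $d_u\le d_v$, set $m\triangleq d_u$ and compute $D\triangleq(c_u\oplus c_v)\ \&\ (2^{m}-2)$, which masks the XOR to bit positions $1,\dots,m-1$. If $D=0$ then the two paths agree on all of $u$'s path (this case also subsumes $m=1$, i.e.\ $u$ the root), so $u$ is an ancestor of, or equal to, $v$ and $\mathrm{LCA}(u,v)=(c_u,d_u)$. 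Otherwise $j\triangleq\mathrm{low}[D]\in\{1,\dots,m-1\}$ is the first position at which the two paths differ, so $p_1,\dots,p_{j-1}$ is the path to the LCA; these bits are readable from $c_u$ as $c_u\ \&\ (2^{j}-2)$, whence $\mathrm{LCA}(u,v)=\bigl((c_u\ \&\ (2^{j}-2))+2^{j}+1,\ j\bigr)$. This is a bounded number of XOR/AND/shift/add operations plus one lookup, hence $O(1)$. I would verify on the running example: for leaves $4=(9,3)$ and $7=(15,3)$ we get $D=(9\oplus 15)\,\&\,6=6$, $j=1$, answer $(3,1)$ (the root); for $4=(9,3)$ and $5=(13,3)$ we get $D=4$, $j=2$, answer $(5,2)$ (node $2$); and for $2=(5,2)$ and $4=(9,3)$ we get $D=(5\oplus 9)\,\&\,2=0$, answer $(5,2)$ (node $2$). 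Finally, as the application requires it, I would note that converting a leaf index $i\in[n]$ (the $i$-th leaf from the left) to its code is also $O(1)$: that leaf has depth $\log_2 n+1$ and path equal to the $(\log_2 n)$-bit binary expansion of $i-1$, so its code is $2^{\log_2 n+1}+2\cdot\mathrm{rev}(i-1)+1$ with $\mathrm{rev}$ the bit-reversal of a $(\log_2 n)$-bit word, which is one more $O(n)$-size table.

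The main obstacle is entirely in choosing the code: it must simultaneously (i) place the root-to-node path in contiguous low-order bits, so that ``length of common prefix'' turns into a least-significant-set-bit query on $c_u\oplus c_v$, and (ii) carry the sentinel terms $2^{d}+\cdots+1$, so that parent/children become pure arithmetic and the extracted common prefix reassembles into a legal code. Once the definition above is in hand, each claim is a short and routine bit manipulation; the one boundary case worth stating carefully is the ancestor case $D=0$, which pleasantly also absorbs the degenerate empty-window case $m=1$. The $O(1)$ cost of the least-significant-bit step is discharged by the $O(n)$-size precomputed table, so the entire construction stays within standard RAM operations.
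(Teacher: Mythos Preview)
Your proof is correct and follows essentially the same route as the paper: the same bit-reversed encoding (your explicit formula $c_v = 2^{d_v}+\sum_i p_i 2^i + 1$ coincides with the paper's $2^L + \mathrm{reverse}(\mathrm{binary}(i))$), equivalent parent/children formulas, and the same reduction of LCA to locating the least significant set bit of $c_u \oplus c_v$. The differences are purely in implementation. The paper isolates that bit via the two's-complement trick $c \gets z \;\mathrm{AND}\; (\mathrm{NOT}(z)+1)$ and outputs $\bigl((x \;\mathrm{AND}\; (c{-}1))+c,\;\log_2 c\bigr)$, tacitly treating $\log_2$ of a power of two as a unit-cost primitive; you instead precompute an $O(n)$-size LSB-position table, which is cleaner with respect to the stated RAM model. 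You also add an explicit mask $(2^m-2)$ restricting the XOR to the shallower node's path bits and branch separately on the ancestor case $D=0$, whereas the paper omits the mask and relies on the sentinel bit $2^{d}$ to handle depth differences---this is fine for two same-depth leaves (the primary use case downstream) but can misfire on arbitrary internal-node pairs (e.g., nodes $2=(5,2)$ and $5=(13,3)$ in the figure). So your version is somewhat more defensive about edge cases and RAM-model purity, but the underlying idea is identical.
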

\begin{proof}
    The standard encoding of complete binary trees in  array-based representations of binary heaps \cite{cormen2022introduction} is as follows: Encode the root as $1$ and  all children of node encoded with $i$ is represented using $2i$ and $2i+1$ for left and right child, respectively. This is shown in \Cref{fig:standard-encoding}. 

    The LCA of two nodes in the standard binary encoding is given by the longest common subsequence of the most significant bits. In order to use constant-time operations, we will define a new binary encoding that will allow us to find the LCA in terms of least significant bits. To represent the complete binary tree, associate every node $i$ with $Encode(i) = (2^{L} + \text{reverse}(\text{binary encoding}(i)), L)$. $L$ is the level of $i$, given by $\lfloor \log_2(i) \rfloor + 1$. See \Cref{fig:binary-tree-labels} for an illustration of the encoding scheme. 
    
    Every subsequence of most significant bits of $i$ in the standard encoding is equivalent to the subsequence of the same length of the least significant bits in $Encode(i)$. The encoding scheme allows us to represent a complete binary tree with $n$ leaves in $O(n)$ space. Given a node $i$, the parent of $i$ and the left and right children of $i$ can be computed in $O(1)$ time using bit-wise operations, as addition, subtraction and exponentiation can be done with bit-wise operations right shift, $AND, OR, NOT$. 
    \begin{algorithm}[H]
	\caption{Parent($i, L$)}
	\label{alg:parent}
 \begin{algorithmic}[1]
    \State \textbf{Output} $(i - 2^{L+1} + (i$  $AND$  $2^L == 0) \cdot 2^L, L-1)$
    \end{algorithmic}	
\end{algorithm}	

    \begin{algorithm}[H]
    	\caption{Children($i, L$)}
    	\label{alg:child}
     \begin{algorithmic}[1]
        \State $Right \gets i + 2^{L+1}$.
        \State $Left \gets i - 2^L + 2^{L+1}$. 
        \State \textbf{Output} $(Right, L+1), (Left, L+1)$.
        \end{algorithmic}	
    \end{algorithm}	

    The LCA of two nodes can be found in $O(1)$ time using the following algorithm that finds the longest common subsequence of least significant bits. See \Cref{fig:lca-computation} for an illustration.

    \begin{algorithm}[H]
	\caption{LCA($(x, L_x),(y, L_y)$)}
	\label{alg:lca}
     \begin{algorithmic}[1]
        \State $z\gets XOR(x,y)$.
        \State $c\gets AND(z,NOT(z)+1)$. 
        \State \textbf{Output} $(AND(x,c-1) + c, \log_2(c))$.
        \end{algorithmic}	
    \end{algorithm}	

    This algorithm clearly runs in $O(1)$ time (given that all bit-wise operations it uses do). The first line computes a value $z$ whose least significant non-zero bit is the least significant bit for which $x$ and $y$ differ. The next line is a classic exercise to compute $c=2^i$, where $i$ is the first non-zero bit of $z$. It uses that $NOT(z) + 1$ has all bits below $i$ set to zero, the $i^{th}$ bit set to one, and all higher bits being the bit-wise negation of their counterparts in $z$. Finally, $c-1=2^i-1$ is then a mask for the least significant bits of $x$ and $y$ that are equal. The addition of $c$ appends the leading $1$ that gives the desired LCA of $x$ and $y$, represented in the form of the $Encode$ function. 
\end{proof}

\begin{figure}
    \centering
    \begin{tikzpicture}
  % Cell size
  \def\cellsize{0.8}
  % Vertical spacing between rows
  \def\rowsep{1.2}

  % Define each binary number as a list of bits (4 bits each)
  \def\binaries{{1, 1, 1, 0, 1, 0, 1}, {1, 0, 1, 1, 1, 0, 1},  {0, 1, 0, 1, 0, 0, 0}, {1, 0, 1, 1, 0, 0, 0}, {0, 0, 0, 1, 0, 0, 0}, {0, 0, 0, 0, 1, 1, 1}, {0 ,0, 0, 1, 1, 0, 1}}

  % Define labels 
  \def\labels{{"x", "y", "z", "NOT(z) + 1", "c", "c-1", "AND(x, c-1) + c"}}

  % Number of arrays per column before moving to next column (for layout)
  \def\arraysPerCol{8}

  % Loop over each binary number
  \foreach \bin [count=\idx from 0] in \binaries {
    % Compute decimal label (starting from 1)
    \pgfmathtruncatemacro{\dec}{\idx + 1}
    % Column and row index for layout
    \pgfmathtruncatemacro{\col}{int(\idx / \arraysPerCol)}
    \pgfmathtruncatemacro{\row}{int(mod(\idx,\arraysPerCol))}

    % Calculate x and y offsets
    \pgfmathsetmacro{\xstart}{\col * 5 * \cellsize} % leave space for 4 bits + label
    \pgfmathsetmacro{\ystart}{-\row * \rowsep}

    % Draw decimal label to the left
    \pgfmathparse{\labels[\idx]}
    \let\labeltext\pgfmathresult
    \node[right] at ({\xstart - 2.9}, {\ystart - \cellsize/2}) {\small \labeltext};

    % Draw bits of current binary number
    \foreach \bit [count=\bitidx from 0] in \bin {
      \pgfmathsetmacro{\x}{\xstart + \bitidx * \cellsize}
      \node[draw, minimum width=\cellsize cm, minimum height=\cellsize cm, anchor=north west]
        at (\x, \ystart) {\bit};
    }
  }
\end{tikzpicture}
    \caption{\Cref{alg:lca} on binary strings $x = 1110101$ and $y = 1011101$ to find their LCA, $1101$.}
    \label{fig:lca-computation}
\end{figure}

\begin{rem}
    Note that we assume that $n$  fits in a single machine word (or $O(1)$ such), otherwise computing the LCA can be done in $O(\log n)$ time using binary search. 
    However, similar slow downs then occur due to the time to read pointers for example. 
    We ignore this point and stick to the word RAM model of computation with word size at least $\log n$ bits (necessary for pointers, etc).
\end{rem}

Returning to the implementation of pivotal sampling (\Cref{alg:tree-rounding}) in input-sparsity time, we only need to perform (non-trivial) versions of \Cref{alg:node-rounding} in internal nodes $v$ with non-zero coordinates represented by leaves in both left and right subtrees. 
That is, only nodes that are LCAs of two consecutive such leaves, whose indices we denote by $\ell_1<\ell_2<\dots<\ell_{nnz}$.
The following algorithm performs the required postorder traversal of the compressed graph containing only these internal nodes and leaves (after compressing nodes with no such non-zero leaf in one of their subtrees). See \Cref{fig:sparse-algorithm} for an illustration of this algorithm. 

\begin{algorithm}[H]
	\caption{Input-sparsity time pivotal sampling}
	\label{alg:nnz-pivotal-sampling}
 \begin{algorithmic}[1]
    \State $S\gets \emptyset$. \Comment{$S$ is a stack}
    \For{$i=1,\dots,nnz-1$}
    \State $(w, L_w) \gets LCA((\ell_i, L_i), (\ell_{i+1}, L_{i+1}))$. 
    \While{$S\neq \emptyset$ and $LCA(S.top, w)\neq S.top$}
    \State $(w', \ell_{j}, \ell_{k}) \gets S.pop$.
    \State $\alpha_j \gets $ Non-fixed marginal at $\ell_{j}$.
    \State $\alpha_{k} \gets $ Non-fixed marginal at $\ell_{k}$.
    \State $(\ell_j, \alpha_j', \ell_{k}, \alpha_{k}', s) \gets $ Resolve$(\ell_j, \alpha_j, \ell_{k}, \alpha_k)$.
    \State Update the marginals at $w'$ with $\alpha_j', \alpha_k'$.
    \EndWhile 
    \If{$S=\emptyset$ \textbf{or} $S.top\neq w$}
    \State $S.push(w, \ell_i, \ell_{i+1})$. 
    \EndIf
    \EndFor
    \end{algorithmic}	
\end{algorithm}

\begin{lem}
    \Cref{alg:nnz-pivotal-sampling} is an $2\lceil \log n\rceil = O(\log n)$-stretch $O(nnz)$ sample time correlated sampling algorithm for $\Delta_{n,k}$.
\end{lem}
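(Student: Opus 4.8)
The statement asserts that \Cref{alg:nnz-pivotal-sampling} is equivalent to \Cref{alg:tree-rounding} in its output distribution, and moreover runs in $O(nnz)$ time. I would prove this in two parts: correctness (the distribution of outputs is identical, hence the $2\lceil\log n\rceil$-stretch, marginal, and cardinality guarantees carry over from \Cref{lem:tree-rounding} and \Cref{prop:tree-properties}), and running time.

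For correctness, the key observation is the one already flagged in the surrounding text: in \Cref{alg:tree-rounding}, an internal node $z$ of the coordinate tree performs a \emph{nontrivial} call to \Cref{alg:node-rounding} only when both of its subtrees contain a leaf whose initial marginal $x_i$ is nonzero; at every other internal node, one incoming coordinate has marginal $0$ (or the whole subtree never received any nonzero mass), and \Cref{alg:node-rounding} simply passes the nonzero coordinate and its marginal up unchanged (the $\theta_z$ threshold is consumed but the fixed coordinate is the zero one, which contributes nothing to the output). Thus the only nodes that matter are the LCAs of pairs of leaves with nonzero marginals, and by a standard fact these are exactly the LCAs of \emph{consecutive} such leaves $\ell_1<\dots<\ell_{nnz}$ (in the left-to-right leaf order). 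I would argue that the stack-based postorder traversal in \Cref{alg:nnz-pivotal-sampling} visits precisely these LCA nodes in the same order that \Cref{alg:tree-rounding}'s bottom-up pass would visit them, and at each such node invokes $\textsc{Resolve}$ with the same two incoming (coordinate, marginal) pairs and the same shared random threshold $\theta_w$. Since all random bits are drawn from the same shared seed indexed by node identity, the two algorithms produce identically-distributed outputs; one should also handle the root fixing step, which is a single extra threshold applied to the last surviving coordinate — identical in both. The main subtlety here is verifying that the stack discipline (popping while $LCA(S.\mathrm{top},w)\neq S.\mathrm{top}$, i.e., while the current top is not an ancestor of the new LCA $w$) correctly simulates a postorder traversal of the compressed tree, and that marginals are propagated to the correct parent node on each pop; this is the step I expect to require the most care, essentially an induction on the processed prefix $\ell_1,\dots,\ell_{i+1}$ showing the stack always holds the ``right spine'' of LCAs pending resolution.

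For the running time, I would observe that the loop runs $nnz-1$ iterations; each iteration computes one $LCA$ in $O(1)$ time (by the preceding lemma), and the inner \texttt{while} loop performs pops, each costing $O(1)$ ($LCA$ test, a $\textsc{Resolve}$ call, a constant-size marginal update). Since each LCA node is pushed at most once over the whole run, the total number of pops is $O(nnz)$, so by amortization the entire algorithm runs in $O(nnz)$ time. Reading the nonzero coordinates of $\vec{x}$ in sorted order and computing their levels $L_i$ is $O(nnz)$ by the assumed sparse representation. Combining: \Cref{alg:nnz-pivotal-sampling} realizes the same output distribution as \Cref{alg:tree-rounding}, inheriting its $2\lceil\log n\rceil$ stretch and Properties \ref{prop:k-uniform}, \ref{prop:marginals}, while running in $O(nnz)$ time, which is exactly the claim. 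Finally, I would note (as the excerpt does for \Cref{prop:ICALP17algo}) that the compressed traversal also admits the stated $O(\log n)$ parallel depth and update-time implementations, but since the lemma statement here is only about stretch and sample time, that discussion can be deferred.
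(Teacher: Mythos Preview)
Your proposal is correct and follows essentially the same approach as the paper: reduce correctness to the observation that nontrivial \textsc{Resolve} calls occur only at LCAs of consecutive nonzero leaves, then verify the stack-based traversal visits these in postorder via an inductive invariant (what you call the ``right spine'' is the paper's ``ancestry invariant''---the stack holds nodes in ancestor-to-descendant order from bottom to top), and bound the running time by amortizing pops against pushes. Your treatment is slightly more explicit about the root-fixing step and the shared random seed indexing, but the structure is the same.
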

\begin{proof}
    Each internal node $v$ that is an LCA of two consecutive non-zero leaves is clearly pushed to the stack at some point (during iteration $i$ for $\ell_i$ the right-most non-zero leaf in $v$'s left subtree and $\ell_{i+1}$ the left-most non-zero leaf in $v$'s right subtree). Moreover, as we prove below, each such internal node is popped after all nodes in its subtree have been processed. The stretch then follows from \Cref{lem:tree-rounding}, together with all nodes for which we implicitly perform \Cref{alg:node-rounding} leaves the zero and non-zero values unchanged.
    The running time follows from LCA and Resolve both taking $O(1)$ time, and the number of nodes processed being precisely $2nnz-1=O(nnz)$, by the above.
    
    It remains to prove that each internal node is processed after all nodes in its subtree. To prove this, we prove by induction that the stack maintains the ancestry invariant, that is the nodes in the stack appear in ancestor-to-descendant order from bottom to top, and no node is popped before its descendants. This invariant holds vacuously before any iterations occurred, since the stack is empty.
    For the inductive step, assume that after $i - 1$ iterations, the ancestry variant is maintained and consider iteration $i$, where we process pair $(\ell_i, \ell_{i+1})$. Let $w = \mathrm{LCA}(\ell_i, \ell_{i+1})$. If the stack is empty we push $w$, and the invariant holds. 
    Otherwise, let $S.\mathrm{top}$ be the node at the top of the stack. By definition, $LCA(S.top, w) = S.top$ if and only if $w$ is a descendant of $S.top$. Hence, if $w$ is not a descendant of $S.\mathrm{top}$, then $\mathrm{LCA}(S.top, w) \neq w$ and the condition for the while loop is satisfied. We pop nodes from the stack until $S.top$ is an ancestor of $w$ (or the stack becomes empty). When $w$ is pushed onto the stack, this maintains the ancestry invariant. 
\end{proof}

\paragraph{Parallel Implementation.} Using a (near-)complete binary tree on $n$ leaves, it is easy to implement \Cref{alg:tree-rounding} using $n$ processors and depth $\log n + 1$, by iteratively performing \Cref{alg:node-rounding} for all nodes of depth $\log n + 1, \log n, \log n - 1,\dots, 1$. We thus obtain the following from \Cref{lem:tree-rounding}.
\begin{fact}\label{fact:parallel-ICALP17}
    There exits an $O(\log n)$-stretch correlated sampling algorithm for  $\Delta_{n,k}$ with parallel depth $O(\log n)$ and $O(n)$ work.
\end{fact}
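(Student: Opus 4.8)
The plan is to take \Cref{alg:tree-rounding} (Coordinate Tree Rounding) run on a near-complete binary tree $T$ with $n$ leaves, and to observe that its layered structure makes it immediately parallelizable. First I would fix $T$ to be a binary tree of height $h = \lceil \log_2 n \rceil$ with one leaf per coordinate, so that $T$ has $O(n)$ nodes partitioned into $h+1 = O(\log n)$ levels. The initialization step (\Cref{alg:label-tree-initialization}) draws one i.i.d.\ uniform threshold per internal node plus one for the root, which costs $O(n)$ work and $O(1)$ depth; as noted in the Preliminaries, this randomness may instead be spread over the sampling step.

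For the sampling step, the key point is that the work performed at an internal node $z$ is a single call to the node-rounding routine \textsc{Resolve} of \Cref{alg:node-rounding}, which takes $O(1)$ time and reads only the $(\text{coordinate},\text{marginal})$ pairs passed up by $z$'s two children. Consequently, evaluating $T$ bottom-up level by level, all nodes at a common depth can be processed simultaneously: in parallel round $t = 1, \dots, h$ we execute \textsc{Resolve} at every node of depth $h-t+1$, and in one final round the root fixes the last surviving coordinate as in \Cref{alg:tree-rounding}. This gives parallel depth $O(h) = O(\log n)$. The total work is $\sum_{z \in T} O(1) = O(|T|) = O(n)$, since $T$ has $O(n)$ nodes each doing constant work (the $O(n)$ leaves likewise contribute $O(1)$ each).

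It then remains only to cite the correctness guarantees already established for this algorithm: \Cref{prop:tree-properties} gives Properties~\ref{prop:k-uniform} and \ref{prop:marginals}, and \Cref{lem:tree-rounding} gives stretch at most $2\lceil \log n \rceil = O(\log n)$. Combining these with the work and depth bounds above yields the claim.

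I do not expect a real obstacle here: the statement essentially repackages \Cref{lem:tree-rounding} together with the standard fact that a balanced tree can be evaluated level-synchronously in depth proportional to its height. The only items deserving a line of care are (i) choosing $T$ to be (near-)complete so that its height is $\Theta(\log n)$ rather than larger, and (ii) handling $n$ that is not a power of two, which is absorbed by using a near-complete tree and is already accounted for by the $\lceil \log n \rceil$ in \Cref{lem:tree-rounding}.
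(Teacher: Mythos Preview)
Your proposal is correct and matches the paper's approach exactly: the paper simply notes that on a (near-)complete binary tree one performs \Cref{alg:node-rounding} level by level from the leaves up, giving depth $\log n + 1$ and $O(n)$ work, and then cites \Cref{lem:tree-rounding} for the stretch. Your write-up is more detailed but the argument is the same.
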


\paragraph{Dynamic Implementation.}
Similar to the above, for each change in $\vec{x}$ in coordinate $i$, one can simply perform the changes along the leaf-to-root path from the leaf corresponding to coordinate $i$. This yields the same output as recomputation from scratch, while using $O(\log n)$ \emph{update time}, or time per change to the input vector. This update time is precisely the number of levels time the time needed for \Cref{alg:node-rounding}.

\begin{fact}\label{fact:dynamic-ICALP17}
    There exits an $O(\log n)$-stretch correlated sampling algorithm for $\Delta_{n,k}$ with $O(\log n)$ update time.
\end{fact}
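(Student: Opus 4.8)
The plan is to realize \Cref{alg:tree-rounding} (tree-based pivotal sampling on a complete coordinate tree $T$ with $n$ leaves) as a dynamic data structure. At initialization we draw all thresholds $\theta_z$ once (as in \Cref{alg:label-tree-initialization}) and, thinking of the current vector as starting from $\vec{0}$, record at every node $z$ the state produced by the \textsc{Resolve} procedure (\Cref{alg:node-rounding}): the non-fixed coordinate and its updated marginal that $z$ forwards to its parent, together with the identity and rounded value of the coordinate fixed at $z$. We additionally maintain the output set, i.e., the coordinates currently fixed to $1$.

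First I would observe that this data structure always holds exactly the output of running \Cref{alg:tree-rounding} on the current input $\vec{x}$ with the stored randomness. Hence Properties \ref{prop:k-uniform} and \ref{prop:marginals} are inherited from \Cref{prop:tree-properties}; and since the thresholds are fixed once and for all, the states of the structure at two different times with input vectors $\vec{x}$ and $\vec{y}$ are precisely the static algorithm run on $\vec{x}$ and on $\vec{y}$ with a common random seed, so the $2\lceil\log n\rceil = O(\log n)$ stretch follows verbatim from \Cref{lem:tree-rounding}.

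For an update changing a single coordinate $x_i$, the only leaf whose forwarded message changes is the leaf for coordinate $i$, so the only internal nodes at which the inputs to \textsc{Resolve} change are those on the root-to-leaf path $P$ from that leaf---every such node still receives an unchanged message from its off-path child. I would therefore re-run \textsc{Resolve} bottom-up along $P$, reusing the stored $\theta_z$, overwrite the stored state at each of the $O(\log n)$ nodes of $P$, re-apply the root rounding via $\theta_r$, and update the output set. The coordinates whose fixed value changed form a subset of the coordinates fixed along $P$ before or after the update, so they number $O(\log n)$; each node costs $O(1)$ since \textsc{Resolve} does; thus the update time is $O(\log n)$, and together with \Cref{lem:tree-rounding} this yields the claimed algorithm.

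The one point that needs care---and the crux of the argument---is confirming that no node off $P$ is affected, despite the fact that, unlike the infinitesimal-$\veps$ stretch analysis, a single update changes a marginal by a non-infinitesimal amount and can flip which coordinate a subtree carries upward. This is immediate from the locality of \textsc{Resolve}: a node's behavior depends only on the two (coordinate, marginal) pairs received from its children, and only one of these pairs lies below the changed leaf. A one-line induction on the levels of $P$ then confines the recomputation to $P$, and everything else is bookkeeping.
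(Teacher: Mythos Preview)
Your proposal is correct and takes essentially the same approach as the paper: maintain the per-node state of \Cref{alg:tree-rounding} and, on a change to coordinate $i$, re-run \textsc{Resolve} along the leaf-to-root path from leaf $i$, noting that off-path nodes receive unchanged inputs. The paper's own justification is a one-paragraph sketch of exactly this idea; your version simply fleshes out the locality argument and the bookkeeping more carefully.
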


\begin{figure} 
\centering
\begin{tikzpicture}  
[level 1/.style={sibling distance=18em},
  level 2/.style={sibling distance=10em},
  level 3/.style={sibling distance=4em},
  level 4/.style={sibling distance=2em},
  level 5/.style={sibling distance=1.2em},
  every node/.style={circle, draw, minimum size=0.5cm, inner sep=0pt, font=\bfseries},
  empty node/.style={fill=white},   % Style for root node (internal nodes)
  full node/.style={fill=red}, % Style for other internal nodes
  leaf node/.style={fill=blue},  % Style for leaf nodes
  label style/.style={text height=1.5ex, text depth=.25ex},  % Adjust label position
  level distance=1.2cm 
  ]

\node[full node, label=west:$5$] (n1) { }
    child {node[full node, label=west:$2$] (n2) {}
      child {node[empty node] (3) { }
        child {node[full node, label=west:$1$] (n4) {}
            child {node[leaf node, label=below:$\ell_1$] (n5) { }}
            child {node[leaf node, label=below:$\ell_2$] (n6) { }}}
        child {node[empty node] (n7) { }
            child {node[empty node] (n8) { }}
            child {node[empty node] (n9) { }}}
      }
      child {node[full node, label=west:$3$] (n10) { }
        child {node[empty node] (n11) { }
            child {node[empty node] (n12) { }}
            child {node[leaf node, label=below:$\ell_3$] (n13) { }}}
        child {node[full node, label=west:$4$] (n14) { }
            child {node[leaf node, label=below:$\ell_4$] (n15) { }}
            child {node[leaf node, label=below:$\ell_5$] (n16) { }}}
      }
    }
    child {node[empty node] (n17) { }
      child {node[full node, label=west:$7$] (n18) { }
        child {node[full node, label=west:$6$] (n19) { }
            child {node[leaf node, label=below:$\ell_6$] (n20) { }}
            child {node[leaf node, label=below:$\ell_7$] (n21) { }}}
        child {node[empty node] (n22) { }
            child {node[leaf node, label=below:$\ell_8$] (n23) { }}
            child {node[empty node] (n24) { }}}
      }
      child {node[empty node] (n25) { }
        child {node[empty node] (n26) { } 
            child {node[empty node] (n27) { }}
            child {node[empty node] (n28) { }}}
        child {node[empty node] (n29) { }
            child {node[empty node] (n30) { }}
            child {node[empty node] (n31) { }}}
      }
    };
\end{tikzpicture}
\caption{\Cref{alg:node-rounding} is called at internal nodes (colored red) that have non-zero valued leaves (colored blue) in both subtrees when these internal nodes are popped. The order in which the internal nodes are popped in the above example is $1, 4, 3, 2, 6, 7, 5$.}
\label{fig:sparse-algorithm}
\end{figure}
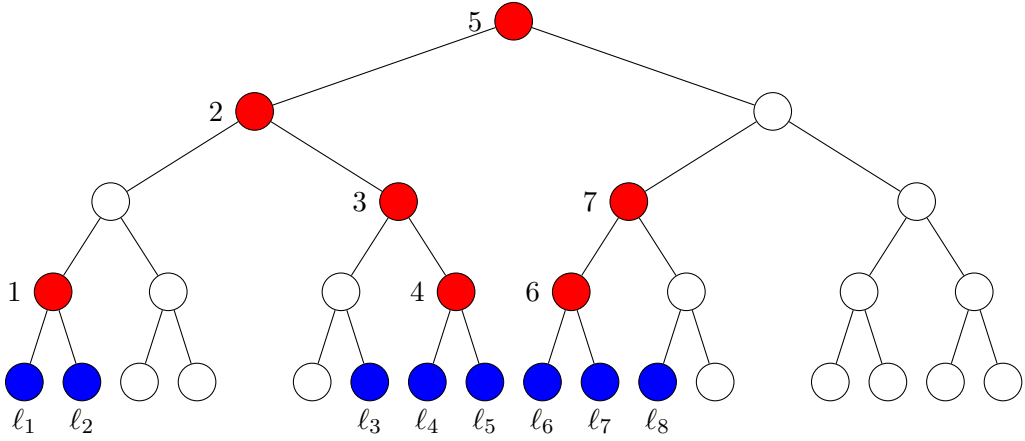

\section{Extension to Partition Matroids}\label{sec:partition}

We note that any rounding scheme for cardinality constraints (a.k.a.~uniform matroids) extends trivially to partition matroids. 
Given a rounding algorithm $\mathcal{A}$ which maintains cardinality constraints and marginals, one can naturally apply this rounding algorithm independently to each part $E_i$, $i=1,...,p$, with the respective cardinality constraint $k_i$. This trivially preserves the partition matroid's (fractional) constraint, and the marginals, allowing us to preserve linear objectives. We show that if $\mathcal{A}$ preserves submodular objectives with respect to the multilinear extensions $F:[0,1]^{|E|}\to\mathbb{R}$ of any submodular function $f:2^{E}\to \mathbb{R}$, then so does the derived algorithm $\mathcal{A}'$.
\begin{lem}\label{lem:uniform-to-partition}
    If $\mathcal{A}$ satisfies submodular dominance, then so does $\mathcal{A}'$.
\end{lem}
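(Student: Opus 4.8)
The plan is a hybrid (swapping) argument. Recall that $\mathcal{A}'$ applies $\mathcal{A}$ independently to each part: writing $\vec{x}=\vec{x}|_{E_1}\circ\cdots\circ\vec{x}|_{E_p}$, it produces mutually independent random sets $\vec{X}^{(i)}=\mathcal{A}(\vec{x}|_{E_i})\subseteq E_i$ (using fresh randomness per part). Since each copy of $\mathcal{A}$ preserves marginals, $\E[\vec{X}^{(i)}]=\vec{x}|_{E_i}$, so $\E[\mathcal{A}'(\vec{x})]=\vec{x}$, and proving submodular dominance of $\mathcal{A}'$ amounts to showing $\E[f(\bigcup_i\vec{X}^{(i)})]\ge F(\vec{x})$ for every submodular $f:2^E\to\mathbb{R}$. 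The first step is to introduce, for each $i$, an independent rounding $\vec{Y}^{(i)}$ of $\vec{x}|_{E_i}$ (each $e\in E_i$ included independently with probability $x_e$), with all of $\vec{X}^{(1)},\dots,\vec{X}^{(p)},\vec{Y}^{(1)},\dots,\vec{Y}^{(p)}$ mutually independent, and to observe that $F(\vec{x})=\E[f(\bigcup_i\vec{Y}^{(i)})]$ directly from \Cref{def:multilinear}.

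Next I would define the hybrids $\vec{Z}_j\triangleq\vec{Y}^{(1)}\cup\cdots\cup\vec{Y}^{(j)}\cup\vec{X}^{(j+1)}\cup\cdots\cup\vec{X}^{(p)}$, so that $\vec{Z}_0=\bigcup_i\vec{X}^{(i)}$ and $\vec{Z}_p=\bigcup_i\vec{Y}^{(i)}$, and reduce the lemma to showing $\E[f(\vec{Z}_{j-1})]\ge\E[f(\vec{Z}_j)]$ for each $j\in\{1,\dots,p\}$; summing these telescopes to the desired bound $\E[f(\vec{Z}_0)]\ge\E[f(\vec{Z}_p)]=F(\vec{x})$. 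To prove one such step, I would condition on a realization of $\vec{Y}^{(1)},\dots,\vec{Y}^{(j-1)},\vec{X}^{(j+1)},\dots,\vec{X}^{(p)}$, write $T$ for their union (which is disjoint from $E_j$), and consider $g:2^{E_j}\to\mathbb{R}$ defined by $g(S)\triangleq f(S\cup T)$. Since a contraction of a submodular function is submodular, $g$ is submodular, and applying the submodular dominance of $\mathcal{A}$ to $\vec{x}|_{E_j}$ with objective $g$ gives $\E_{\vec{X}^{(j)}}[g(\vec{X}^{(j)})]\ge G(\vec{x}|_{E_j})$, where $G$ is the multilinear extension of $g$ on $[0,1]^{E_j}$. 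But $G(\vec{x}|_{E_j})=\E_{\vec{Y}^{(j)}}[g(\vec{Y}^{(j)})]$ by definition of the multilinear extension; and using that $\vec{Z}_{j-1}$ is a function of $T$ and $\vec{X}^{(j)}$ only, $\vec{Z}_j$ a function of $T$ and $\vec{Y}^{(j)}$ only, and both $\vec{X}^{(j)},\vec{Y}^{(j)}$ are independent of the conditioning, this yields $\E[f(\vec{Z}_{j-1})\mid T]\ge\E[f(\vec{Z}_j)\mid T]$. Taking expectation over $T$ completes the step, and hence the lemma.

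I do not expect any genuine obstacle here; the argument is routine. The three points needing care are: (i) verifying that $g(S)=f(S\cup T)$ is submodular on $2^{E_j}$ (the standard fact that contractions preserve submodularity); (ii) checking that the multilinear extension of $g$ evaluated at the true marginals $\vec{x}|_{E_j}$ equals the expectation of $g$ under independent rounding, so that a single invocation of $\mathcal{A}$'s submodular dominance per part is exactly what is needed to carry the hybrid through; and (iii) keeping the probability-space bookkeeping clean, i.e.\ that mutual independence of all the $\vec{X}^{(i)}$ and $\vec{Y}^{(i)}$ lets us condition on every part but the $j$-th without perturbing the laws of $\vec{X}^{(j)}$ and $\vec{Y}^{(j)}$. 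An equivalent, slightly slicker packaging would be to first prove a standalone claim that submodular dominance is preserved under (independent) products of random $0/1$ vectors --- by the same two-swap argument --- and then apply it $p-1$ times; I would likely record this as a remark.
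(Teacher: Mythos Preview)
Your proof is correct and is essentially the same argument as the paper's: both condition on all parts but one, use that $g(S)=f(S\cup T)$ is submodular (contraction), and apply $\mathcal{A}$'s submodular dominance on the remaining part. The only cosmetic difference is that the paper packages the telescoping as an induction on the number of parts $p$, whereas you write out the hybrids $\vec{Z}_j$ explicitly; your final remark about phrasing it as ``submodular dominance is closed under independent products'' applied $p-1$ times is exactly the paper's inductive formulation.
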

\begin{proof}
    We prove by induction on $p\geq 1$ that the above holds for any ground-set $E$, submodular function $f:2^{E}\to\mathbb{R}$ and partition of the ground-set into $p$ parts that $\E[f(\calA(\vec{x}))]\geq F(\vec{x})$. The base case $p=1$ is trivial, since $\mathcal{A}'=\mathcal{A}$ in this case.
    
For the inductive step, for $\vec{x}\in [0,1]^E$, denote by $\vec{x}_{-p}\in [0,1]^{E\setminus E_p}$ the restriction of $\vec{x}$ to $E\setminus E_p$.
For fixed $S_p\subseteq E_p$, consider the submodular function $g_{S_p}:[0,1]^{E\setminus E_p}\to \mathbb{R}$ with $g_{S_p}:S \mapsto f(S\cup S_p)$, and denote its multilinear extension by $G_{S_p}:[0,1]^{E\setminus E_p}\to \mathbb{R}$.
By the inductive hypothesis, since $g_{S_p}$ is defined over a ground-set with $p-1$ parts, we have
$$\E[f(\mathcal{A}'(\vec{x})) \mid \mathcal{A}'(\vec{x})\cap E_p = S_p] = \E[g_{S_p}(\mathcal{A}'(\vec{x}_{-p}))] \geq G_{S_p}(\vec{x}_{-p}).$$
Next, denote by $\vec{x}_{\mid E_p}$ the restriction of $\vec{x}$ to $E_p$, and 
define the submodular function $h:2^{E_p}\to \mathbb{R}$ with $h:S_p \mapsto \E[f(\mathcal{A}'(\vec{x})) \mid \mathcal{A}'(\vec{x})\cap E_p = S_p]$.
Then, since $\calA'(\vec{x})\setminus E_p$ and $\calA'(\vec{x})\cap E_p$ are independent, the function $h$ is submodular.\todo{don't feel like unpacking this} Therefore, since $E_p$ has a single part, 
\begin{align*}
    \E[f(\mathcal{A}'(\vec{x}))] = \E[h(\mathcal{A}(\vec{x}_{\mid E_p}))] \geq H(\vec{x}_{\mid E_p}).
\end{align*}
Combining the definitions of the multilinear extension, $f$ and $g_{S_p}$ with the preceding inequalities, we obtain the desired inequality:
\begin{align*}
    \E[f(\mathcal{A}'(\vec{x}))] & \geq H(\vec{x}_{\mid E_p}) \\
    & =  \sum_{S_p\subseteq E_p} h(S_p) \prod_{e\in S_p}x_e \prod_{e\in E_p\setminus S_p} (1-x_e) \\
    & =  \sum_{S_p\subseteq E_p} \E[f(\mathcal{A}'(\vec{x}) \mid \mathcal{A}'(\vec{x})\cap E_p = S_p]  \prod_{e\in S_p}x_e \prod_{e\in E_p\setminus S_p} (1-x_e) \\
    & \geq  \sum_{S_p\subseteq E_p} G_{S_p}(\vec{x}_{-p}) \prod_{e\in S_p}x_e \prod_{e\in E_p\setminus S_p} (1-x_e) \\
    & =  \sum_{S_p\subseteq E_p} \left(\sum_{s\subset E\setminus E_p} f(S\cup S_p) \prod_{e\in S}x_e \prod_{e\in E\setminus (S\cup E_p)} (1-x_e)\right) \prod_{e\in S_p}x_e \prod_{e\in E_p\setminus S_p} (1-x_e) \\
    & =  \sum_{T\subset E} f(T) \prod_{e\in T}x_e \prod_{e\in E\setminus T}(1-x_e) \\
    & =  F(\vec{x}). \qedhere 
\end{align*}
\end{proof}
\section{Dynamic SWF: Solving the LP}\label{app:bnw}

Here we provide ideas needed to solve the LP in \Cref{sec:swf} subject to high submodular values for each scenario. This appendix relies heavily on ideas of \cite{buchbinder2025chasing}, and we claim no novelty for this section.

For our discussion we need another extension of submodular functions to real-valued vectors, namely the \emph{coverage extension} \cite{levin2022submodular}, first studied by Wolsey in the context of submodular set coverage \cite{wolsey1982analysis}.  
\begin{Def}\label{def:wolsey}
    The \emph{coverage extension} $f^*:[0,1]^E\to \mathbb{R}$ of a set function $f:2^E\to \mathbb{R}$ is given by $$f^*(\vec{x})=\min_{S\subseteq E} \bigg(f(S) + \sum_{i\in E} f(i\mid S)\cdot x_i\bigg).$$
\end{Def}
Our interest in this extension is its tight connection to the multilinear extension for monotone submodular functions. 
Vondr\'ak \cite{vondrak2007submodularity} proved that the multilinear extension of any monotone submodular function $(1-1/e)$-approximates the coverage extension.

\begin{prop}\label{lem:F>=(1-1/e)f*}
    Let $f:2^E\to \mathbb{R}_+$ be a monotone subdmodular function and $\vec{x}\in [0,1]^E$ be a vector. Then,
    $F(\vec{x}) \geq F(\vec{1}-\exp(-\vec{x}))\geq (1-1/e) \cdot f^*(\vec{x})$.
\end{prop}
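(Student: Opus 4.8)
The plan is to prove the two inequalities separately. For $F(\vec{x}) \geq F(\vec{1}-\exp(-\vec{x}))$, I would invoke the elementary bound $1-e^{-x_i}\leq x_i$, valid for every $x_i\geq 0$, which gives $\vec{1}-\exp(-\vec{x})\leq \vec{x}$ coordinate-wise (and keeps both points in $[0,1]^E$, since $x_i\in[0,1]$ forces $1-e^{-x_i}\in[0,1-1/e]$). Since $f$ is monotone, its multilinear extension $F$ is coordinate-wise non-decreasing --- each partial derivative $\partial F/\partial y_i$ equals an expectation of marginals $f(i\mid \cdot)\geq 0$ --- so applying monotonicity of $F$ to this coordinate-wise inequality yields the claim.

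For the second inequality, which is Vondr\'ak's lemma \cite{vondrak2007submodularity}, I would run the standard continuous-greedy-style differential argument. Define the curve $\vec{y}(t)\in[0,1]^E$ by $y_i(t)=1-e^{-t x_i}$ for $t\in[0,1]$, and set $g(t)\triangleq F(\vec{y}(t))$, so $g(0)=F(\vec{0})=f(\emptyset)\geq 0$ and $g(1)=F(\vec{1}-\exp(-\vec{x}))$. Since $y_i'(t)=x_i e^{-t x_i}=x_i\,(1-y_i(t))$, the chain rule gives $g'(t)=\sum_{i} x_i\,(1-y_i(t))\cdot \partial_{y_i}F(\vec{y}(t))$. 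The key identity is $(1-y_i)\,\partial_{y_i}F(\vec{y}) = \E_{R\sim\vec{y}}[f(R\cup\{i\})-f(R)]$, which follows by writing $F$ as an affine function of $y_i$ and noting that the marginal $f(i\mid R)$ vanishes when $i\in R$; thus $g'(t)=\sum_{i\in E} x_i\,\E_{R\sim\vec{y}(t)}[f(i\mid R)]$.

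Next, for each fixed set $R$, the definition of the coverage extension taken at $S=R$ gives $f^*(\vec{x})\leq f(R)+\sum_{i\in E}f(i\mid R)\,x_i$, i.e.\ $\sum_{i\in E} x_i\,f(i\mid R)\geq f^*(\vec{x})-f(R)$. Averaging over $R\sim\vec{y}(t)$ yields the differential inequality $g'(t)\geq f^*(\vec{x})-g(t)$. Multiplying by $e^t$ rewrites this as $\tfrac{d}{dt}\bigl(e^t g(t)\bigr)\geq e^t f^*(\vec{x})$; integrating over $[0,1]$ gives $e\,g(1)-g(0)\geq (e-1)\,f^*(\vec{x})$, and since $g(0)\geq 0$ we obtain $F(\vec{1}-\exp(-\vec{x}))=g(1)\geq (1-1/e)\,f^*(\vec{x})$, completing the proof.

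The only delicate point is the derivative identity for the multilinear extension, together with the observation that the factor $1-y_i(t)$ it produces cancels exactly against the factor appearing in $y_i'(t)$; this is precisely the mechanism in Vondr\'ak's analysis, so it is routine but worth writing out carefully. I note in passing that the second inequality uses only multilinearity of $F$ and the definition of $f^*$ as a minimum over sets, not submodularity --- submodularity and monotonicity enter the statement only through the monotonicity of $F$ needed for the first inequality (and, implicitly, through the fact that this makes $f^*$ a genuine extension of $f$).
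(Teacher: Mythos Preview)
Your proof is correct and is precisely the standard Vondr\'ak argument that the paper cites (the paper itself does not give a proof of this proposition, only the citation \cite{vondrak2007submodularity}). Both inequalities are handled exactly as one would expect: coordinate-wise monotonicity of $F$ for the first, and the continuous-greedy differential inequality $g'(t)\geq f^*(\vec{x})-g(t)$ for the second.
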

By unboxing the above proof, \cite{buchbinder2025chasing} show that this connection is in fact tight, and algorithmic: if $F(\vec{x})$ is small, then one can find a set witnessing that $f^*(\vec{x})$ is small. Their key technical theorem here is the following.

\begin{restatable}{prop}{separateorapproximate}\label{prop-mainsep}
    Let $f\colon 2^{E} \rightarrow \R_+$ be a normalized monotone submodular function and $\vec{x}\in [0,1]^n$. Let $\eps\in (0,1)$ and $V\geq 0$. If $F(\vec{1}-\exp(-\vec{x}))<(1-1/e-\eps) \cdot V$, then with probability $\Omega(\eps^2)$, 
    the random set $S \triangleq \{i\in E \mid  Y_i \leq t\}$, for independent $t\sim \Uni[0,1]$ and $Y_i\sim \Exp(x_i)$ $\forall i\in E$, satisfies     \begin{equation*}%\label{eqn:violated-constraint}
        f(S)+ \sum_{i\in E}f(i\mid S) \cdot x_i \leq \left(1-\frac{\eps}{4e}\right) \cdot V = (1-O(\eps)) \cdot V.
    \end{equation*}
\end{restatable}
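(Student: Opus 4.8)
The plan is to realize the random set $S$ as a snapshot of a continuous random process and track a single potential function, refining the differential-inequality proof behind Vondr\'ak's bound \cite{vondrak2007submodularity} (\Cref{lem:F>=(1-1/e)f*}). For $t\ge 0$ set $R(t)\triangleq\{i\in E\mid Y_i\le t\}$, so that $S=R(t)$ for the random time $t\sim\Uni[0,1]$ in the statement, and define
$$Z_t\triangleq f(R(t))+\sum_{i\in E}f(i\mid R(t))\cdot x_i,\qquad g(t)\triangleq\E_Y[f(R(t))],\qquad \zeta(t)\triangleq\E_Y[Z_t].$$
Since $\Pr[Y_i\le t]=1-e^{-x_i t}$ and the $Y_i$ are independent, $g(t)=F(\vec{1}-\exp(-t\vec{x}))$; in particular $g(0)=f(\emptyset)=0$ and $g(1)=F(\vec{1}-\exp(-\vec{x}))$, the quantity controlled by the hypothesis. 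I would then use memorylessness of the exponential: conditioned on $R(t)$, each $i\notin R(t)$ enters $R(t+\mathrm{d}t)$ independently with probability $x_i\,\mathrm{d}t$, and since $f(i\mid R(t))=0$ for $i\in R(t)$ by monotonicity this gives $g'(t)=\E_Y\big[\sum_{i\in E}x_i f(i\mid R(t))\big]$, hence the clean identity $g'(t)+g(t)=\zeta(t)$.

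Next I would integrate this with integrating factor $e^t$: from $(e^t g(t))'=e^t\zeta(t)$ and $g(0)=0$ we get $e\cdot g(1)=\int_0^1 e^t\zeta(t)\,\mathrm{d}t$. Suppose, towards a contradiction, that $\Pr_{t,Y}\!\left[Z_t\le(1-\tfrac{\eps}{4e})V\right]<\tfrac{\eps^2}{2}$, and write $p(t)\triangleq\Pr_Y\!\left[Z_t\le(1-\tfrac{\eps}{4e})V\right]$, so $\int_0^1 p(t)\,\mathrm{d}t<\tfrac{\eps^2}{2}$. Because $f$ is normalized and monotone, $Z_t\ge 0$ pointwise, so $\zeta(t)=\E_Y[Z_t]\ge(1-\tfrac{\eps}{4e})V\,(1-p(t))$. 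Substituting and using $e^t\le e$ on $[0,1]$,
$$e\cdot g(1)\;\ge\;(1-\tfrac{\eps}{4e})V\Big(\int_0^1 e^t\,\mathrm{d}t-\int_0^1 e^t p(t)\,\mathrm{d}t\Big)\;\ge\;(1-\tfrac{\eps}{4e})V\big((e-1)-\tfrac{e\eps^2}{2}\big),$$
which yields $g(1)\ge(1-\tfrac{\eps}{4e})V\,(1-\tfrac1e-\tfrac{\eps^2}{2})\ge(1-\tfrac1e-\eps)V$, the last step being the elementary inequality $\tfrac1{4e}+\tfrac{\eps}{2}\le 1$ for $\eps\in(0,1)$. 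This contradicts $F(\vec{1}-\exp(-\vec{x}))=g(1)<(1-\tfrac1e-\eps)V$; hence $\Pr_{t,Y}\!\left[Z_t\le(1-\tfrac{\eps}{4e})V\right]\ge\tfrac{\eps^2}{2}=\Omega(\eps^2)$, which is exactly the claim.

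Conceptually this is a quantitative ``tail'' version of the inequality $F(\vec{1}-\exp(-\vec{x}))\ge(1-1/e)f^*(\vec{x})$: the same potential $g$ and the same ODE, but instead of invoking only the crude pointwise bound $Z_t\ge f^*(\vec{x})$ one keeps $\zeta(t)$ and turns a small failure probability into a near-$(1-1/e)$ lower bound on $g(1)$. The step I expect to be the main obstacle is the choice of averaging weight: a plain uniform average of $Z_t$ over $t\in[0,1]$ followed by Markov only gives the useless bound $\E[Z_t]\le 2g(1)\approx 2(1-1/e)V>V$, whereas the $e^t$ weighting produced by the integrating factor is precisely what works --- it puts more mass on later times, where $R(t)$ has already ``absorbed'' most of the coverage optimum, and it is what makes the constant $\eps/(4e)$ come out. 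The remaining points --- justifying the infinitesimal computation of $g'$ rigorously (e.g.\ by differentiating $g(t)=F(\vec{1}-\exp(-t\vec{x}))$ directly in $t$) and the final elementary inequalities --- are routine. This recovers the statement of \cite{buchbinder2025chasing}.
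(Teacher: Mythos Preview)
Your proposal is correct and matches the paper's approach: the paper does not give a detailed proof of this proposition, attributing it to the unpublished \cite{buchbinder2025chasing} with only the one-line description ``the proof effectively reverse engineers the proof of the preceding proposition by \cite{vondrak2007submodularity}.'' Your argument is precisely that reverse engineering --- the same potential $g(t)=F(\vec{1}-\exp(-t\vec{x}))$, the same ODE $g'+g=\zeta$ coming from the memoryless exponentials, and the integrating-factor trick turned into a quantitative tail bound --- so you have filled in exactly what the paper leaves out.
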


The proof effectively reverse engineers the proof of the preceding proposition by \cite{vondrak2007submodularity}. 
\cite{buchbinder2025chasing} use the preceding propositions to show that $(1-1/e-\eps)$-approximating a single monotone submodular function subject to a separable polytope can then be reduced to the use of cutting plane methods, such as the ellipsoid method: 
At each point in time check for violating constraints of the constraint polytope (and continue the run of the cutting plane method appropriately if one is found); if none is found, then either we have a feasible point with multilinear value that $(1-1/e-\eps)$-approximates the optimal, or we find a separating hyperplane for the constraint $f^*(\vec{x})\geq f(OPT)$. We can therefore continue the cutting plane method until we terminate with a point satisfying $f^*(\vec{x})\geq f(OPT)$, and hence by \Cref{lem:F>=(1-1/e)f*}, $F(\vec{x})\geq (1-1/e)\cdot f(OPT)$, or alternatively we terminate early with a point satisfying the above.

Extending the above for our setting of \Cref{sec:swf}, where we wish many of the sub-vectors to have high multilinear objective value, immediately gives the following.

\begin{thm}
    There exists a polynomial-time algorithm computing a fractional point $\vec{x}$ optimizing \eqref{MMD-LP} subject to the additional constraints $F(\vec{x}_{\mid i})\geq (1-1/e-\eps)\cdot SWF(i)$ for all $i$. The solution's value has value $R$ at most equal to the minimum number of changes between allocations of optimal allocations for each scenario $i$.
\end{thm}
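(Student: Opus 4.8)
The plan is to derive this as a single run of the ellipsoid (cutting-plane) method, exactly along the template of \cite{buchbinder2025chasing} recalled above, with a separation oracle powered by \Cref{prop-mainsep}. Concretely, I would minimize the linear objective $R$ over the polytope $Q$ consisting of the (polynomially many) explicit constraints of \eqref{MMD-LP} together with, for every scenario $i$ and every subset $S$ of the (buyer, seller) pairs available in scenario $i$, the valid inequality
\[
f^{(i)}(S) + \sum_{(b,s)} f^{(i)}\big((b,s)\mid S\big)\cdot x_{b,s,i} \;\ge\; SWF(i),
\]
where $f^{(i)}$ is the (normalized, monotone, submodular) scenario-$i$ social welfare set function. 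Over all $S$, these exponentially many inequalities encode $f^*(\vec x_{\mid i}) \ge SWF(i)$ for the coverage extension (as in \Cref{def:wolsey}). The first thing to establish is that $Q$ is nonempty and already contains a point of small $R$: fixing for each scenario $i$ an optimal integral allocation $S_i^\star$ and letting $\vec x^\star$ be the indicator of the $\{S_i^\star\}$, one has $f^*(\mathbf 1_{S_i^\star}) = f^{(i)}(S_i^\star) = SWF(i)$ by monotone submodularity, the partition/hypersimplex constraints hold trivially, and the induced optimal value is $R = \max_{i,j}\|\vec x^\star_{\mid i} - \vec x^\star_{\mid j}\|_1 = \max_{i,j}|S_i^\star \oplus S_j^\star|$; minimizing over the choice of optimal allocations shows $\min_Q R$ is at most the minimum number of changes between optimal allocations of the scenarios, which is what the theorem claims for the output's $R$.

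Next I would construct the (randomized) separation oracle for $Q$. Given a candidate $(\vec x, R)$: (i) test the explicit \eqref{MMD-LP} constraints directly and return a violated one if present; (ii) otherwise, for each scenario $i$, run $\Theta(\eps^{-2}\log(rn/\gamma))$ independent trials of the random experiment of \Cref{prop-mainsep} on $f^{(i)}$, $\vec x_{\mid i}$ with target $V = SWF(i)$, and if any trial returns a set $S$ with $f^{(i)}(S) + \sum_{(b,s)} f^{(i)}((b,s)\mid S)\,x_{b,s,i} < SWF(i)$, return that (violated) inequality; (iii) if no trial in any scenario produces such a set, declare $(\vec x, R)$ feasible. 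The correctness of this oracle rests on the contrapositive of \Cref{prop-mainsep}: whenever $F(\vec 1 - \exp(-\vec x_{\mid i})) < (1 - 1/e - \eps)\,SWF(i)$ for some $i$, a single trial finds a cut with probability $\Omega(\eps^2)$, so all trials miss it with probability at most $\gamma/(rn)$. Hence, conditioned on the oracle declaring feasibility, with probability at least $1 - \gamma/(rn)$ we have $F(\vec x_{\mid i}) \ge F(\vec 1 - \exp(-\vec x_{\mid i})) \ge (1-1/e-\eps)\,SWF(i)$ for every $i$, by \Cref{lem:F>=(1-1/e)f*}.

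Finally I would run the ellipsoid method with this oracle to minimize $R$ over $Q$, choosing $\gamma$ small enough that a union bound over the (polynomially many) oracle calls keeps the entire execution correct with high probability; this is the standard reduction from optimizing a linear objective over a polytope to a (randomized, weak) separation oracle, as invoked by \cite{buchbinder2025chasing}. Its output is, with high probability, a fractional $\vec x$ satisfying all \eqref{MMD-LP} constraints, with $F(\vec x_{\mid i}) \ge (1-1/e-\eps)\,SWF(i)$ for all $i$, and with $R \le \min_Q R \le$ the minimum movement cost among optimal allocations, as desired. I expect the main obstacle to be not the high-level reduction but the bookkeeping around the approximate oracle: one has to verify that the ellipsoid method tolerates an oracle that is only correct with high probability and that "accepts" points whose true coverage value may dip slightly below $SWF(i)$ even though their multilinear value stays above $(1-1/e-\eps)\,SWF(i)$ — so the invariant actually carried through the iterations is the multilinear lower bound rather than exact membership in $Q$. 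A further routine but necessary detail is handling $SWF(i)$ when it is not known exactly (e.g.\ guessing it to within $(1+\eps)$ on a geometric grid, or using a $(1-1/e-\eps)$-approximate target, which is presumably the source of the second $(1-1/e-\eps)$ factor appearing in the main-text statement). All of these points are treated in \cite{buchbinder2025chasing}, so I would largely defer to and cite that work.
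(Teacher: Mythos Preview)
Your proposal is correct and follows essentially the same approach as the paper's proof sketch: both run a cutting-plane method over \eqref{MMD-LP} augmented with the coverage-extension constraints $f^*(\vec{x}_{\mid i})\geq V_i$, using \Cref{prop-mainsep} as the separation oracle, and you correctly identify that since $SWF(i)$ is not computable exactly one must substitute a $(1-1/e)$-approximate target $V_i$ (which the paper does upfront), yielding the $(1-1/e-\eps)^2$ factor noted in the main text.
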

\begin{proof}[Proof (Sketch)]
    We compute a $(1-1/e)$-approximation of the optimal allocation for each configuration. Denote this by $V_i$. We now add to \eqref{MMD-LP} the constraint $f_i(\vec{x}_{\mid i})\geq V_i$.
    Note that the optimal (integral) allocations for configurations satisfy the above LP, and so the objective value is at most $R$.
    Now, using \Cref{prop-mainsep}, using a cutting planes method, in polynomially many steps we either find a violating constraint, or we output a solution $\vec{x}$ such that $F_i(\vec{x}_\mid i)\geq (1-1/e-\eps)\cdot V_i\geq (1-1/e-\eps)^2\cdot SWF(i)$.
\end{proof}
\begin{rem}
    By replacing $V_i$ with $V_i/\beta$ for $\beta>1$, we can trade off approximation for number of changes, decreasing the approximation quality by a factor of $\beta$ while decreasing the number of changes (provided the latter is smaller for near-optimal solutions).
\end{rem}

\bibliographystyle{alpha}
\bibliography{abb,bib,ultimate}

\end{document}